\newtheorem{theorem}{Theorem}[section]
\newtheorem*{theorem*}{Theorem}
\newaliascnt{definition}{theorem}
\newtheorem{definition}[definition]{Definition}
\newtheorem*{definition*}{Definition}
\newaliascnt{lemma}{theorem}
\newtheorem{lemma}[lemma]{Lemma}
\newtheorem*{lemma*}{Lemma}
\newaliascnt{claim}{theorem}
\newtheorem{claim}[claim]{Claim}
\newtheorem*{claim*}{Claim}
\newaliascnt{fact}{theorem}
\newtheorem*{fact*}{Fact}
\newaliascnt{observation}{theorem}
\newtheorem*{observation*}{Observation}
\newaliascnt{conjecture}{theorem}
\newtheorem*{conjecture*}{Conjecture}
\newaliascnt{corollary}{theorem}
\newtheorem{corollary}[corollary]{Corollary}
\newtheorem*{corollary*}{Corollary}
\newaliascnt{remark}{theorem}
\newtheorem*{remark*}{Remark}
\newaliascnt{proposition}{theorem}
\newtheorem{proposition}[proposition]{Proposition}
\newtheorem*{proposition*}{Proposition}
\newaliascnt{example}{theorem}
\newtheorem*{example*}{Example}
\newcommand{\notshow}[1]{}
\newcommand\Ps@textstyle[2]{\mathbb{P}_{#1}\left[{#2}\right]}
\newcommand\Es@textstyle[2]{\mathbb{E}_{#1}\left[{#2}\right]}
\newcommand\Ps[2]{%
  \mathchoice 
  {\underset{{#1}}{\mathbb{P}}\left[{#2}\right]}
  {\Ps@textstyle{#1}{#2}}
  {\Ps@textstyle{#1}{#2}}
  {\Ps@textstyle{#1}{#2}}
}
\newcommand\Es[2]{%
  \mathchoice 
  {\underset{{#1}}{\mathbb{E}}\left[{#2}\right]}
  {\Es@textstyle{#1}{#2}}{\Es@textstyle{#1}{#2}}{\Es@textstyle{#1}{#2}}
}
\newcommand{\T}{\mathcal{T}}
\renewcommand{\Game}{\mathcal{G}}
\newcommand{\agents}{\mathcal{N}}
\newcommand{\DA}{\mathtt{DA}}
\newcommand{\successors}{\mathtt{succ}}
\newcommand{\descendants}{\mathtt{desc}}
\newcommand{\ThreeTrader}{\mathtt{3Lu}}
\newcommand{\TwoTrader}{\mathtt{2Tr}}
\title{Classification of Priorities Such That Deferred Acceptance is Obviously Strategyproof}
\author{Clayton Thomas}
\affiliation{
  \institution{Princeton University}
  \country{U.S.A.}
  }
\email{claytont@cs.princeton.edu}
\begin{abstract}
  We study the strategic simplicity of stable matching mechanisms 
  where one side has fixed preferences, termed priorities.
  Specifically, we ask which priorities are such that the strategyproofness
  of deferred acceptance (DA) can be recognized by 
  agents unable to perform contingency reasoning,
  that is, \emph{when is DA obviously strategyproof}
  (Li, 2017~\cite{Li17})?

  We answer this question by completely characterizing those priorities
  which make DA obviously strategyproof (OSP).
  This solves an open problem of Ashlagi and Gonczarowski, 2018~\cite{AshlagiG18}.
  We find that when DA is OSP, priorities are either
  acyclic (Ergin, 2002~\cite{Ergin02}), a restrictive condition which allows priorities to
  only differ on only two agents at a time, or contain an extremely limited cyclic pattern
  where all priority lists are identical except for exactly two.
  We conclude that, for stable matching mechanisms,
  the tension between understandability (in the sense of OSP) and
  expressiveness of priorities is very high.
\end{abstract}
\begin{document}

\begin{titlepage}
  \maketitle
\end{titlepage}


\section{Introduction}
\label{sec:Intro}

The central task of mechanism design is achieving desirable
outcomes in the presence of strategic behaviour.
Traditionally, this is achieved through strategyproof mechanisms,
in which truth telling is a dominant strategy.
Suppose, however, that agents are not perfectly rational, 
and are not always able to identify their dominant strategies.
To ensure good outcomes in such an environment, the mechanism must be
strategically simple, that is, truth telling should be 
\emph{easily recognized} as a dominant strategy.


The notion of \emph{obvious} strategyproofness (OSP) has emerged in recent
years as a fundamental criterion of strategic simplicity~\cite{Li17}.
Briefly, OSP mechanisms are those which can be recognized as strategyproof
by an agent who does not fully understand the mechanism they are
participating in, but only understands how the possible results of the game
depend on their own actions (and thus, this agent cannot perform the
contingency reasoning required to prove that certain strategies are
dominant). By eliminating the need to perform contingency reasoning, the dominant
strategies in OSP mechanisms are easier to recognize, and the game becomes
simpler for limited-sophistication agents to 
play~\cite{Li17,ZhangL17}.


Matching environments are a crucially important environment for 
limited-sophistication agents.
In the real world, centralized mechanisms are used to match workers to
firms in labor markets, doctors to hospitals in residency matchings,
and students to schools in school choice environments.
In many of these markets, stability is a primary objective, and is often 
necessary to prevent market unraveling~\cite{roth2002economist}.
Formally, stability means that no pair of unmatched agents would wish to break
their match with their assigned partner, and pair with each other instead.
The celebrated deferred acceptance (DA) algorithm of~\cite{GaleS62} efficiently
determines a stable matching, and is strategyproof for one side of the
market~\cite{DubinsMachiavelliGaleShapley81}. 
These attractive theoretical properties lead many real world matching
markets to implement DA.
Unfortunately, the agents participating in these mechanisms often do not 
have the cognitive resources to precisely identify and understand
how they should participate in the mechanism,
and in practice often make costly strategic mistakes even though the
mechanism is strategyproof~\cite{HassidimMRS17,Rees-Jones18}.
This begs the question: 
why does deferred acceptance appear to be be strategically complex, despite
being strategyproof?

To get a lens on the strategic complexity of matching markets, we study
obvious strategyproofness in stable matching mechanisms with one strategic side.
In such an environment, a set of $n$ strategic
\emph{applicants} (e.g., the students being matched to schools)
is assigned in a one-to-one matching
to a set of $n$ \emph{positions} (e.g., the schools).
The applicants have \emph{preferences} over the positions,
represented as a ranked list from favorite to least favorite,
which they may misreport if they believe it will
benefit them. On the other hand, the preferences of the positions
over the applicants are fixed and non-strategic,
and thus termed \emph{priorities} over the applicants\footnote{
  No stable matching mechanism can be strategyproof for both sides of the
  market~\cite{Roth82}. Since all obviously strategyproof mechanisms are
  strategyproof, limiting our attention to one
  strategic side is necessary.
}. We focus on applicant-proposing 
deferred acceptance\footnote{
  DA is the unique stable matching mechanism which is
  strategyproof for one side~\cite{gale1985some, chen2016manipulability}.
  Thus, for the problem of finding obviously strategyproof stable matching
  mechanisms, restricting attention to DA proposes is without loss of generality. 
} (DA), the canonical stable matching mechanism for strategic applicants. 

Due to the crucial importance of the stable matching problem,
a refined and precise understanding of the strategic complexity of deferred
acceptance is vital.
Moreover, obvious strategyproofness has been identified as the
fundamental theoretical criterion of strategic simplicity for agents who
are unable to perform contingency reasoning.
The central question of this paper is thus: \emph{for which priorities is
deferred acceptance obviously strategyproof?}
\vspace{3mm}

This paper is a follow-up to~\cite{AshlagiG18},
which was the first paper to study OSP implementations of 
stable matching mechanisms. \cite{AshlagiG18} provides a
condition on priorities which is sufficient for OSP implementability.
This condition is \emph{acyclicity}, as introduced in~\cite{Ergin02}.
Acyclicity is a strong condition on priorities, which intuitively says
that priorities may only disagree on adjacent sets of two agents.
More formally, a set of priorities is acyclic if the applicants can be
partitioned into $S_1,\ldots,S_k$, with $|S_i|\le 2$ for each $i$,
such that each position gives higher priority to applicants in $S_i$ than
applicants in $S_j$ for any pair $i,j$ with $j > i$ (see
\autoref{prop:AcyclicEquiv}).
\cite{AshlagiG18} proves that DA is OSP implementable
whenever priorities are acyclic.
However, \cite{AshlagiG18} shows that acyclicity is not necessary for
priorities to be OSP implementable.
That is, they give an example of cyclic yet OSP implementable priorities.

Unfortunately,~\cite{AshlagiG18} also proves that there exist fixed
priority sets such that DA cannot 
be implemented with an OSP mechanism.
However, they do not prove precisely which priorities are OSP
implementable, and thus leave open the possibility that there are
OSP implementable priorities which are ``diverse'' in some useful sense
(that is, there may be priorities which allow for enough variation between the positions
to capture practically useful constraints in some matching environment).




We show that for stable matching mechanisms,
the gap between OSP implementable priorities and acyclic priorities
is very small.
That is, by classifying which priorities are OSP implementable,
we show that very little is possible with cyclic 
but OSP implementable priorities.
\vspace{3mm}


We now give one example of cyclic but OSP priorities for $6$ applicants
and $6$ positions. We write priorities as a list,
starting from the top priority and proceeding downwards.
In this example,
positions $1,2,3,4$ have the same priority over all applicants,
but $5$ and $6$ have slightly different priorities (we highlight the
pairs of applicants at which $5$ and $6$ differ from $1,2,3,4$ using
parentheses):
\begin{align*}
  1,2,3,4: &\  a\ \succ\ b \succ\ c\ \succ\ d \succ\ e\ \succ f  \\
  5:       &\  a\ \succ (c \succ b)\succ (e \succ d)\ \succ f 
  \tag*{(*)}\label{eqn:star} \\
  6:       &\ (b \succ a)\succ (d \succ c) \succ (f \succ e)
\end{align*}
It turns out that all cyclic but OSP priorities can be constructed from
acyclic priorities, and priorities that are exactly like this example
(generalized to any number of applicants).
In particular, when priorities are cyclic on some set of applicants, 
every position except for precisely two must have the same priority list
(and moreover, the remaining two must differ in a precise pattern which
flips only adjacent applicants).
We term such priorities \emph{limited cyclic}
(\autoref{def:limitedCyclic}).

When priorities are acyclic, \cite{AshlagiG18} shows that
an OSP mechanism for DA constructed from simple
combinations of a mechanism which we call $\TwoTrader$.
Their mechanism proceeds in a way somewhat similar to 
a serial dictatorship (in which applicants arrive in some fixed order 
and are permanently matched to
their favorite remaining position) but where two applicants 
may arrive simultaneously.
These two agents are assigned to the remaining positions via the
mechanism $\TwoTrader$.
When priorities are limited cyclic, it turns out that an OSP mechanism can
be constructed from a combination\footnote{
  In the combined mechanism, no agent ever participates in two distinct
  ``sub-mechanisms'' $\TwoTrader$ or $\ThreeTrader$.
  This gives a strong sense in which ever OSP stable matching mechanism can
  be built out of $\TwoTrader$ and $\ThreeTrader$.
} of $\TwoTrader$ and a fairly simple 
mechanism we term $\ThreeTrader$,
which interacts with three applicants at a time.
Interestingly, this implies that in some sense there are only 
two ``irreducible'' obviously strategyproof stable matching mechanisms, 
$\TwoTrader$ and $\ThreeTrader$.

The crux of our characterization revolves around proving that when
priorities are not limited cyclic, DA is not OSP implementable.
We sketch how this proof proceeds below.
Interestingly, the structure of the proof admits another
characterization of OSP implementable priorities. 
Specifically, a set of priorities is OSP implementable if and only if no
restriction of the priorities (i.e. restricting attention to some subset of
positions and applicants) is equal to one of the priority sets
listed in \autoref{fig:AllPrimitiveNonOsp}.
Thus, our characterization not only implies that there are essentially only
two OSP stable matching mechanisms, but that there is in some sense 
a finite list of ``irreducible'' non-OSP priority sets.

%

All told, our characterizations can be concisely states as follows.
We prove our main theorem in \autoref{sec:Proof} as a simple 
combination of \autoref{thrm:Construction},
\autoref{thrm:figSetsNotOsp}, and \autoref{thrm:patternMatchingMainThrm}.
\begin{theorem*}[Main Theorem]
  For any set of priorities $q$,
  the following are equivalent:
  \begin{enumerate}
    \item Deferred acceptance with priorities $q$ is OSP implementable.
    \item $q$ is limited cyclic.
    \item An OSP mechanism for deferred acceptance with priorities 
      $q$ can be constructed from compositions of
      $\TwoTrader$ and $\ThreeTrader$.
    \item No restriction of $q$ is equal, up to relabeling, to any of the
      priority sets exhibited in \autoref{fig:AllPrimitiveNonOsp}.
  \end{enumerate}
\end{theorem*}


\begin{figure}


  \begin{tabular}{cccc}

  \begin{subfigure}{25mm}
    \centering
    \begin{tabular}{c|ccc}
      1 & a& b& c \\
      2 & b& c& a \\
      3 & c& a& b
    \end{tabular}
    \caption{Non-OSP by \\ \cite[Section 4]{AshlagiG18}}
  \end{subfigure}
   &
   ~\qquad\ %
   &

  \multicolumn{2}{c}{
  \begin{subfigure}{80mm}
    \centering
    \begin{tabular}{c|ccc}
      1 & a& b& c \\
      2 & a& b& c \\
      3 & c& a& b
    \end{tabular}
    \quad
    \begin{tabular}{c|ccc}
      1 & a& b& c \\
      2 & a& b& c \\
      3 & c& b& a
    \end{tabular}
    \quad
    \begin{tabular}{c|ccc}
      1 & a& b& c \\
      2 & a& b& c \\
      3 & b& c& a
    \end{tabular}
    \caption{Non-OSP by~\cite[Appendix B]{AshlagiG18} and~\autoref{sec:NonOSPTwoWomenSame}}
  \end{subfigure}
  }
  \\
  \\

  \begin{subfigure}{25mm}
    \centering
    \begin{tabular}{c|ccc}
      1 & a& b& c \\
      2 & a& c& b \\
      3 & c& b& a
    \end{tabular}
    \caption{Non-OSP by\\\autoref{sec:NonOspMildExtension}}
  \end{subfigure}
  &
  \quad
  &

  \quad
  \begin{subfigure}{25mm}
    \centering
    \begin{tabular}{c|ccc}
      1 & a& b& c \\
      2 & b& a& c \\
      3 & c& b& a
    \end{tabular}
    \caption{Non-OSP by\\\autoref{sec:NonOspLastCase}}
  \end{subfigure}
  &

  \begin{subfigure}{25mm}
    \centering
    \begin{tabular}{c|cccc}
      1 & a& b& c & d \\
      2 & a& b& d & c \\
      3 & a& c& b & d \\
      4 & b& a& c & d
    \end{tabular}
    \caption{Non-OSP by\\\autoref{sec:nIs4CasesProof}}
  \end{subfigure}

  \end{tabular}

  \caption{ Each set of priorities displayed here is not OSP implementable.
    Moreover, any set of priorities which is not OSP implementable
    contains one of these sets of priorities as a sub-pattern (up to relabeling).
    }
  \label{fig:AllPrimitiveNonOsp}

\end{figure}

\subsection{Intuition and layout of the proof}
\label{sec:intuition}

We now give informal intuition as to why one might expect obvious
strategyproofness to place such severe restriction on priority sets.
Obvious strategyproofness requires that every time an agent acts the
mechanism, the worst thing that can happen if they play truthfully (for the
remainder of the mechanism) is no worse than the best thing that 
can happen if they lie.
When priorities are acyclic, there are at most two applicants 
who are ranked first by some set of position.
As shown in~\cite{AshlagiG18}, these applicants can be assigned to positions
in an OSP manner via the mechanism $\TwoTrader$,
which we recall in \autoref{fig:IllustrationAcyclic}.
The mechanism $\TwoTrader$ communicates with only these top two
applicants and queries these applicants at most twice,
that is, only those two applicants are ``active''\footnote{
  Formally, at some state $h$ of the mechanism, an applicant $a$ is
  \emph{active} if either $a$ moves at $h$, or if $a$ has moved in the
  past, but their match is not yet completely determined.
}, and each applicant ``moves'' at most twice.
In order for \emph{cyclic} priorities to be OSP,
one of two things must happen in the mechanism:
some applicant must move more than twice, or
more than two applicants must be active at some point.

\begin{figure}
  \begin{tabular}{cc}
    \begin{minipage}[b]{1.3in}
      \begin{align*}
        U: 
        \begin{cases}
          \ a \succ b \succ \ldots \\
          \ a \succ b \succ \ldots \\
          \qquad \vdots
        \end{cases}
        \\
        V: 
        \begin{cases}
          \ b \succ a \succ \ldots \\
          \ b \succ a \succ \ldots \\
          \qquad \vdots
        \end{cases}
      \end{align*}
    \end{minipage}
  & 

\tikzset{
  solid node/.style={circle,draw,inner sep=1.5,fill=black},
  decision node/.style={regular polygon, regular polygon sides=6,draw,inner sep=1.5},
  hollow node/.style={circle,draw,inner sep=1.5},
  square node/.style={draw,inner sep=2.5},
}
  \begin{tikzpicture}[scale=0.7,font=\footnotesize]
    \tikzstyle{level 1}=[level distance=20mm,sibling distance=35mm]
    \tikzstyle{level 2}=[level distance=15mm,sibling distance=15mm]
    \node(a1)[decision node,label=above:{}]{(a)}
      child{node(b-sub)[decision node]{(b)}
        child{node(a11)[square node, label=below:{$\vdots$}]{}
          edge from parent node[left]{
            \begin{tabular}{l}
              $\mathtt{Clinch}$ any \\ $v \in U\cup V\setminus \{u\}$
            \end{tabular}
          }
        }
        edge from parent node[left]{
          \begin{tabular}{l}
            $\mathtt{Clinch}$ \\ any $u \in U$
          \end{tabular}
        }
      };
    \node(b1)[decision node,label=above:{}, xshift=150]{(b)}
      child{node(b13)[decision node]{(a)}
        child{node(a-sub)[square node, label=below:{$\vdots$}]{}
          edge from parent node[left]{
            \begin{tabular}{l}
              $\mathtt{Clinch}$ any \\ $u \in U\cup V\setminus \{v\}$
            \end{tabular}
          }
        }
        edge from parent node[left]{\begin{tabular}{l}
            $\mathtt{Clinch}$ \\ any $v \in U \cup V$
          \end{tabular}
        }
      };
    \draw[->] (a1) -- (b1) 
      node[midway, above]{$V\ldots$};
  \end{tikzpicture}

  \end{tabular}

  \caption{
    An example of acyclic priorities, and the mechanism $\TwoTrader$
    which matches the ``top two'' applicants $a, b$ (who have top priority
    at the set of positions $U$ and $V$, respectively).
    ``\texttt{Clinch}ing'' a position $x$ allows an applicant to exit the
    mechanism and be perminently matched to that position.
    At the top right node where applicant $b$ acts, applicant $a$ has
    moved, but their match is not yet determined, so two applicants are
    active.
    This mechanism is OSP because at the second node where $a$ acts (i.e.
    the bottom right node), $a$ can either clinch their favorite position
    in $V$, or (if $b$ already clinched that position) they can clinch
    anything in $U$ (the set they were offered to clinch at the first node
    where they acted).
    }
  \label{fig:IllustrationAcyclic}

\end{figure}

Intuitively, no applicant can make more than two non-trivial moves in an
OSP implementation of deferred acceptance because 
\emph{it is hard for an applicant to ever
guarantee themselves a position where they do not have top priority}.
This is because when some applicant $a$ proposes to such a position $x$,
there is almost always a risk that an applicant with higher 
priority proposes to that position later.
If an OSP mechanism learns from applicant $a$ that their favorite
position is $x$, but $a$ is later rejected from $x$,
they must be able to guarantee themselves every position which was possible
at the first time they acted.
Thus, the mechanism cannot again ask applicant $a$ further questions which
result in uncertainty for $a$, i.e. the second question asked to $a$ should
determine $a$'s final match.

It is possible for three applicants to be active at some point in an OSP 
implementation of deferred acceptance. Indeed, this is the case in
$\ThreeTrader$. However, this can happen only in a very specific way.
Let $a$ be the first of those three applicants to move.
It turns out that in order for this instance of DA
to be OSP, the mechanism must deduce from the first question asked to
$a$ that one of the other applicants (call them $c$)
cannot possibly achieve some position.
Due to the definition of DA, this means the mechanism must
know $a$'s favorite position, and $c$ must have a lower
priority than $a$ at this position.
But for a matching mechanism to be OSP, the only way it can deduce that
$a$'s favorite position is $x$ is to offer $a$ to ``clinch'' 
every position other than $x$\footnote{
  For the reader familiar with~\cite{BadeG16}, this is precisely the
  condition under which applicant $u$ may become a ``lurker'' for position
  $x$.
}, that is, $a$ must be able to guarantee themselves every position other
than $x$ if they choose.
\begin{wrapfigure}{R}{24mm}
    \begin{tabular}{c|ccc}
      1 & a& b& c \\
      2 & a& c& b \\
      3 & b& a& c
    \end{tabular}
    \caption{OSP but cyclic case}

  \label{fig:ThreeCyclicOsp}
\end{wrapfigure}
Thus, $a$ must have top priority at every position other than $x$.
For $a$ to remain active after they move, there must be an applicant $b$
such that $x$ has priority $b \succ a \succ c$.
Moreover, for $b$ and $c$ to be active at the same time, there must also be a
position $y$ for which $c \succ b$.

This informal argument begins to explain why the priority set of
\autoref{fig:ThreeCyclicOsp} might be the only cyclic but OSP priority set
for three applicants and three positions.
This is indeed the case. Moreover, removing applicant $a$ from
consideration and applying the argument recursively, we'd expect that $b$
should be the top applicant (other than $a$) at every position except for
one. Indeed, limited cyclic priority sets such as \autoref{eqn:star}
satisfy a recursive property like this, in which removing their top
applicants yields another limited cyclic priority set of the same pattern.
This gives intuition for the structure of the proof as well -- the highly
demanding constraints which OSP imposes apply recursively,
and generalizations of \autoref{fig:ThreeCyclicOsp}
such as \autoref{eqn:star} are the only examples of cyclic but
OSP priority sets.


More technically, our proof proceeds by first enumerating all cyclic
priorities for three applicants and three positions.
It turns out that up to relabeling there is only one such case which is
OSP (namely, that of \autoref{fig:ThreeCyclicOsp}),
and all the others (namely, (a)-(d) of \autoref{fig:AllPrimitiveNonOsp})
are non-OSP.
As pointed out in \cite{AshlagiG18}, a set of priorities can be OSP only if
every restriction of those priorities (to some subset of applicants and
positions) is also OSP (we recall this in \autoref{lem:RestrictedPriorities}).
Thus, the fact that only one cyclic case for three applicants and positions 
is OSP gives a very controlled form which cyclic and OSP priorities of four
applicants and positions could possibly take.
The only such nontrivial cases are limited cyclic cases, which are OSP,
and (e)~from \autoref{fig:AllPrimitiveNonOsp}, which is not OSP.
This again gives a very controlled form which cyclic and OSP priorities of
more than four applicants and positions can take,
and indeed suffices to show that all OSP priorities must be limited cyclic 
for any number of agents.


\subsection{Discussion}
While our result allows for a general class of cyclic but OSP priorities, 
priorities like \autoref{eqn:star} are arguably
\emph{less} ``diverse'' than acyclic preferences.
In acyclic preferences, half of all positions may give some applicant $a$
in their top priority, but the other half may give some applicant $b$
top priority.
But our characterization shows that when priorities over some applicants
are cyclic, the priorities of all positions must be
identical except for exactly two of them.
Thus, the expressiveness of OSP priorities is very limited --
priorities may either differ on only two applicants at a time,
or may differ at only two positions (in a very controlled way).


It is instructive to compare our characterization of OSP deferred
acceptance mechanisms to the OSP implementations of \emph{top trading
cycles} presented in~\cite{Troyan19}.
OSP implementations of top trading cycles are also fairly restrictive, 
but may, for example, gradually endow an applicant with a larger and larger
set of positions they might ``clinch'', querying them many times.
If the applicant's favorite position is not in this ``endowed'' set, the
applicant will will not clinch any position, and based on the actions of
other applicants, they may be gradually endowed with more positions and be
called to act in the mechanism many times.

In contrast, in the mechanisms $\TwoTrader$ and $\ThreeTrader$,
and thus in all OSP implementations of deferred acceptance, applicants
are called to act at most two times.
This may indicate that deferred acceptance is fundamentally more
``strategically complex'' than top trading cycles,
because the obvious strategyproofness constraint seems more demanding for
deferred acceptance than for top trading cycles.
That is, the ``more complex'' deferred acceptance mechanism should have a
smaller OSP subset, and its possible OSP interactions with the applicants
should be more limited\footnote{
  This conforms to our intuition for why OSP implementation of DA
  cannot ask agents to move more than twice. In DA, it is hard to guarantee
  an applicant any position where they do not have top priority. On the
  other hand, in the middle of the execution of top trading cycles,
  an applicant is guarantee to
  receive a position (at least as good as) any position which currently
  points to them (directly or indirectly).
  OSP implementations of top trading cycles harness exactly this fact in
  order to (potentially) interact with agents many times.
}.
On the other hand, it is possible for three agents to be active at a time
in OSP implementations of deferred acceptance (during the mechanism
$\ThreeTrader$), which can never happen in top trading cycles,
so the comparison is nuanced.

Limited cyclic priorities, and thus all cyclic but OSP implementable
priorities, are a generalization of the example used
by~\cite{AshlagiG18} to demonstrate that acyclicity is not necessary for
OSP implementability (and thus our mechanism $\ThreeTrader$ generalizes the
mechanism which they constructed).
This came as quite a surprise to us.
Indeed, our first goal in pursuing this question was to identify
interesting new cases of deferred acceptance with fixed priorities 
(for example, ones where more than $3$ applicant are active at a time,
or where an applicant moves more than twice).
In the end, we proved that there are no such cases.
We believe that this result is a further demonstration of the opinion that,
while obvious strategyproofness is a fundamental notion of simplicity in
mechanism design, new definitions are needed to allow for practical
mechanisms which are still strategically simple or easy to explain.
Studying the ``irreducible'' non-OSP priorities of
\autoref{fig:AllPrimitiveNonOsp} may be a small stepping stone to
formulating such a definition.

\subsection{Additional related work}

There is a rapidly growing body of work on obvious
strategyproofness and various related notions of strategic simplicity.
\cite{Li17} introduced OSP, used this notion to study ascending price auctions,
and proved that serial dictatorship is OSP implementable but
top trading cycles with initial endowments is not OSP
implementable. \cite{AshlagiG18}, the most related work to the present
paper, was the first released follow-up work to \cite{Li17}.
\cite{Troyan19} and \cite{mandal2020obviously} study top trading cycles 
without initial endowments, and
classify those priorities which make top trading cycles OSP implementable. 
Our classification completes the line of study initiated 
by~\cite{AshlagiG18}, answering their main open questions.
The present paper is significantly more involved than prior works
on OSP matching mechanisms on the basis of priorities,
and to complete our classification we
must carefully reason about sub-patterns of priority sets as well as prove
that several specific matching mechanisms induced by certain 
priority sets are not OSP implementable.

\cite{BadeG16} studies a range of social-choice settings, and investigates
the possibility of OSP and Pareto optimal social choice functions.
Deferred acceptance with cyclic preferences is not Pareto optimal
\cite{Ergin02}, and thus falls outside this paradigm.
\cite{PyciaT19} provides various simplification results for a broad class
of domains which they call ``rich''.
\cite{Bade19} constructs an interesting OSP mechanism for the
house-matching problem with initial endowments in which agents have
single-peaked preferences.
\cite{Mackenzie20} studies OSP mechanisms in great generality, proving
several equivalences and simplifications.
\cite{ZhangL17} provides a decision-theoretic framework justifying obvious
dominance.
\cite{golowich2021computational} provides a general condition over any
environment which is equivalent to OSP implementability, and provides an
algorithm for checking this condition in time which is polynomial in the
size of a table for computing the social choice function (i.e. a table 
listing the result for each possible tuple of agent's types).

Refinements and generalizations of the notion of OSP have been made.
\cite{PyciaT19} introduce the notions of strongly obviously strategyproof
and one-step foresight obviously strategyproof mechanisms, which are a
subset of OSP mechanisms which are even ``simpler'' to recognize as
strategyproof.
\cite{TroyanM20} introduces the notion of non-obviously manipulable
mechanisms, which generalize \emph{strategyproof} mechanisms to allow for
some strategic manipulations (but not ``obvious'' ones).
\cite{LiD20} goes in another distinct direction, and designs
revenue-maximizing mechanisms for strategically simple agents
(who may, for example, play dominated strategies but will not play
obviously dominated strategies). 

\paragraph{Acknowledgements.}
We thank Yannai Gonczarowski for invaluable guidance in the presentation of
this manuscript.
We thank
Itai Ashlagi,
Linda Cai,
and Matt Weinberg
for helpful discussions.

\section{Preliminaries and prior results}
\label{sec:Prelims}

%

\subsection{Mechanisms and Obvious Strategyproofness}
\label{sec:MechDef}


An \emph{(ordinal) environment} $E = (Y, (\T_i)_{i\in \agents})$ 
consists of a set of outcomes $Y$, a set of agents (also known as players) 
$\agents = [n]$, and types $\T_i$ for each agent $i\in \agents$.
Each type $t_i\in\T_i$ corresponds to a weak order $\succeq_i^{t_i}$
over outcomes in $Y$\footnote{
  A weak order over $Y$ is a reflexive and transitive binary relation
  on $Y$ such that,
  for all $x,y\in Y$, we have one of
  $x \succeq y$ or $y \succeq x$.
  We write $a \succ b$ when we have $a\succeq b$ but we do not have
  $b \succeq a$.
}, which represents how agent $i$ ranks each outcome in $Y$ when their type
is $t_i$.
A \emph{social choice function} over $E$ is a mapping
$f : \T_1\times\dots\times\T_n \to Y$ from types of each agent to outcomes.
The most commonly studied notion in mechanism design is that of a
strategyproof social choice function:
\begin{definition}
  Social choice function $f$ is \emph{strategyproof} if,
  for any $i\in\agents$, $t_1\in\T_1,\ldots,t_n\in\T_n$, and
  $t_i'\in\T_i$, we have
  \[ f(t_i, t_{-i}) \succeq_i^{t_i} f(t_i', t_{-i}). \]
\end{definition}

To study cognitively limited agents (who may not fully understand the game
they are playing), one needs to also know how the social choice function is
implemented, i.e. how the social planner interacts with the strategic
agents to compute $f$. This is done via a mechanism.
A \emph{mechanism}\footnote{
  In technical terms, we consider a mechanism to be a deterministic
  perfect information extensive form game with consequences in $Y$.
  As in \cite{AshlagiG18}, restricting attention to these mechanisms
  is without loss of generality.
} over $E$ is a tuple
$\Game = (H, E, \mathtt{Pl}, (\T_i(\cdot))_{i\in \agents}, g)$ such that:
\begin{itemize}
  \item $H$ is a set of states (also called nodes),
    and $E$ is a set of directed edges
    between the states, such that $(H, E)$ forms a finite directed tree
    (where every edge points away from the root).
    Let the set of leaves be denoted $Z$, 
    and let the root be denoted $h_0$.
    For any $h\in H$, let $\successors(h)\subseteq H$
    be the set of nodes which are immediate successors of $h$ in the game
    tree.

  \item $\mathtt{Pl} : H \setminus Z \to \agents$ is the player
    choice function, which labels each non-leaf node in $H\setminus Z$ with
    the agent who acts at that node.
  \item For each $i\in\agents$ and $h\in H$,
    we have $\T_i(h)\subseteq \T_i$ the \emph{type-set of agent $i$ at $h$}.
    These must satisfy
    \begin{itemize}
      \item 
        For any $h$ with $\mathtt{Pl}(h)=i$, the sets
        $\{ \T_i(h') \}_{h'\in\successors(h)}$ form a partition of $\T_i(h)$.
      \item If $\mathtt{Pl}(h)\ne i$, then for all $h'\in\successors(h)$,
        we have $\T_i(h')=\T_i(h)$.
    \end{itemize}


  \item $g : Z \to Y$ labels each leaf node with an outcome in $Y$.
\end{itemize}

For types $t_1\in\T_1,\ldots,t_n\in\T_n$, we let
$\Game(t_1,\ldots,t_n)$ denote $g(\ell)$, where
$\ell$ is the unique leaf node of $H$ with 
$(t_1,\ldots,t_n)\in\T_1(\ell)\times\ldots\times\T_n(\ell)$.
We say $\Game$ \emph{implements} social choice function $f$
when $\Game(t_1,\ldots,t_n) = f(t_1,\ldots,t_n)$.
Note that the collection of nodes $h$ where
$(t_1,\ldots,t_n)\in\T_1(h)\times\dots\times\T_n(h)$ always forms a path
from the root to the leaf $\ell$. 
Intuitively, this path is the ``execution path'' of $\Game(t_1,\ldots,t_n)$.
A node $h$ where $\mathtt{Pl}(h)=i$ represents the
mechanism asking player $i$ the question ``which of the sets
$\{ \T_i(h') \}_{h'\in \successors(h)}$ does your type lie in?''.
If agent $i$ with type $t_i$ plays truthfully, 
the mechanism only reaches node $h$ if $t_i \in \T_i(h)$,
and the agent will ``play the action'' leading to the unique $h'$
with $t_i\in\T_i(h')$.

Obvious strategyproofness (OSP) can then be defined as follows:

\begin{definition}[OSP, \cite{Li17, AshlagiG18}]
  \label{def:OSP}
  Mechanism $\Game$ over $E$ is \emph{OSP at node $h\in H$}
  if the following holds:
  If $\mathtt{Pl}(h)=i$, for every $t_i \in \T_i(h)$,
  let $h'\in\successors(h)$ be the unique successor node of $h$ such that
  $t_i\in\T_i(h')$. 
  Then for every leaf $\ell$ which is a descendant of $h$ such that
  $t_i \in \T_i(\ell)$,
  and every leaf $\ell'$ which is a descendant of $h$ but not $h'$,
  we have $ g(\ell) \succeq_i^{t_i} g(\ell') $.

  Mechanism $\Game$ is \emph{OSP}
  if it is OSP at every node $h\in H$.

  Social choice function $f$ over $E$ is \emph{OSP implementable}
  (simply called OSP for short) if there exists an OSP mechanism which
  implements $f$.
\end{definition}

In words, a mechanism is OSP if, for every node $h$ where an agent $i$ is
called to act, the \emph{worst} thing that can happen if $i$ continues to
play truthfully for the rest of the game 
is no worse for $i$ than the \emph{best} thing that could
possibly happen if $i$ deviates at node $h$.
In particular, this implies that the mechanism (and the social choice
function it implements) is strategyproof.

\subsubsection{Monotonicity of OSP implementability}
\label{sec:OspMon}

A \emph{subdomain} of an environment $E = (Y, (\T_i)_{i\in\agents})$
is simply another environment $E' = (Y', (\T_i')_{i\in\agents})$,
in which $Y'\subseteq Y$ and $\T_i'\subseteq \T_i$ for each
$i\in\agents=[n]$.
A \emph{restriction} of a social choice function $f$ to $E'$
is the social choice function $f|_{E'}$ such that
$f|_{E'}(t_1,\ldots,t_n) = f(t_1,\ldots,t_n)$
for each $(t_1,\ldots,t_n) \in \T_1'\times\ldots\times\T_n'$,
provided that $f(t_1,\ldots,t_n)\in Y'$ for each
$(t_1,\ldots,t_n)\in\T_1'\times\dots\times\T_n'$.

The following lemma is essentially the ``pruning'' technique
pioneered by \cite{Li17} and used in almost all known works on OSP
mechanisms.

\begin{lemma}
  \label{lem:restrictedOsp}
  If a social choice function $f$ over $E$ is OSP, then
  every restriction $f|_{E'}$ of $f$ to a subdomain $E'$ is also OSP.
\end{lemma}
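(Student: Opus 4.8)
The plan is to use the standard ``pruning'' construction: starting from an OSP mechanism $\Game$ over $E$ that implements $f$, carve out a sub-mechanism $\Game'$ over $E'$ that implements $f|_{E'}$ and is still OSP. For each node $h$ of $\Game$, define restricted type-sets $\T_i'(h) := \T_i(h)\cap\T_i'$, and call $h$ \emph{surviving} if $\T_i'(h)\ne\emptyset$ for every agent $i$. I would first observe that the surviving nodes form a subtree of $\Game$ rooted at $h_0$: the root survives (since $\T_i(h_0)=\T_i\supseteq\T_i'$, so $\T_i'(h_0)=\T_i'$), and if $h$ survives then so does its parent, because the type-set of each agent can only shrink as one descends an edge of $\Game$. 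The candidate mechanism $\Game'$ is then this subtree, with $\mathtt{Pl}$ inherited, type-sets $\T_i'(\cdot)$, and $g$ restricted to the surviving leaves. I would deliberately \emph{not} contract degenerate one-child decision nodes: the mechanism definition permits a decision node with a single successor, such a node is vacuously OSP, and keeping it makes $\Game'$ a literal substructure of $\Game$, which simplifies the OSP verification.

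Next I would check that $\Game'$ is a well-formed mechanism over $E'$ implementing $f|_{E'}$. The partition/consistency conditions for $\T_i'(\cdot)$ follow from those for $\T_i(\cdot)$: at a decision node $h$ with $\mathtt{Pl}(h)=i$, a child $h''$ of $h$ has $\T_m(h'')=\T_m(h)$ for all $m\ne i$, so (as $h$ survives) $h''$ fails to survive exactly when $\T_i(h'')\cap\T_i'=\emptyset$; hence the surviving children's restricted $i$-type-sets are precisely the nonempty pieces of the partition of $\T_i(h)$ intersected with $\T_i'$, i.e. they partition $\T_i'(h)$, while at nodes not owned by $i$ the restricted $i$-type-sets are constant along edges as before. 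In particular every surviving decision node has at least one surviving child (a child only dies through the acting agent, whose type-set at $h$ is nonempty and partitioned by the children), so the leaves of $\Game'$ are exactly the surviving leaves of $\Game$; and for any surviving leaf $\ell$, picking $t\in\prod_i\T_i'(\ell)$ gives $g(\ell)=\Game(t)=f(t)\in Y'$ by the standing hypothesis that $f|_{E'}$ is well-defined, so the outcome function of $\Game'$ indeed lands in $Y'$. Finally, for $t\in\prod_i\T_i'$ the execution path of $\Game(t)$ consists entirely of surviving nodes, hence is also the execution path of $\Game'(t)$ and reaches the same leaf, so $\Game'(t)=f(t)=f|_{E'}(t)$.

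The heart of the argument is OSP of $\Game'$, and the no-contraction choice is what makes it immediate: every node, edge, and leaf of $\Game'$ is the same object in $\Game$. Fix $h$ with $\mathtt{Pl}(h)=i$ and $t_i\in\T_i'(h)\subseteq\T_i(h)$; the successor $h'$ of $h$ in $\Game'$ with $t_i\in\T_i'(h')$ is also the unique child of $h$ in $\Game$ with $t_i\in\T_i(h')$. Any leaf $\ell$ of $\Game'$ descending from $h$ with $t_i\in\T_i'(\ell)$ is a leaf of $\Game$ descending from $h$ with $t_i\in\T_i(\ell)$, and any leaf $\ell'$ of $\Game'$ descending from $h$ but not $h'$ is likewise such a leaf of $\Game$; so $g(\ell)\succeq_i^{t_i}g(\ell')$ by OSP of $\Game$ at $h$, and since these $g$-values are exactly the $\Game'$-outcomes at $\ell,\ell'$, OSP of $\Game'$ at $h$ follows (degenerate one-child nodes are OSP vacuously, having no leaf descending from them outside their unique successor). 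I expect no genuine mathematical obstacle here: the only point needing care is the bookkeeping claim that ``a child of a surviving decision node can only die via the acting agent,'' which is precisely what drives the partition property and the leaf-preservation statement. This is exactly the pruning lemma of~\cite{Li17}, transcribed into the present formalism.
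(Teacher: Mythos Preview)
Your proposal is correct and takes essentially the same pruning approach as the paper: intersect the type-sets with $\T_i'$ and observe that any OSP violation in the restricted mechanism would already be an OSP violation in $\Game$. The paper's proof is terser (it keeps all nodes rather than explicitly removing non-surviving ones and leaves the partition bookkeeping implicit), but the argument is the same.
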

\begin{proof}
  Consider any OSP implementation $\Game$ of $f$ over $E$.
  Construct a mechanism $\Game'$ over $E' = (Y', (\T_i')_{i\in\agents})$
  with the same set of states,
  the same player function $\mathtt{Pl}$, and the same outcome function
  $g$. Define type sets $\T_i'(h) = \T_i' \cap \T_i(h)$ for each node
  $h$ of $\Game'$.
  It is clear that $\Game'$ implements $f|_{E'}$.
  Moreover, it follows directly from the definitions that if
  $\Game'$ failed to be OSP, then $\Game$ would fail to be OSP,
  a contradiction.
\end{proof}

%

\subsection{Matching mechanisms}

Our primary interest is the \emph{one sided stable matching environment}
with $n$ applicants and $n$ positions.
The outcomes $Y$ of this environment consist of the set of assignments of
all applicants to some (distinct) position.
That is, $Y$ is the set of all bijections $\mu$ between applicants and
positions. We typically denote applicants with letters $a_i$ or 
$\{a,b,c,\ldots\}$, and we denote positions as $p_i$ or with integers
$x\in \{1,2,3,\ldots\}$. We let $\mu(a_i)$ denote the position $a_i$ is assigned
to, and let $\mu(p_i)$ denote the applicant assigned to position $p_i$.

For each applicant $a_i$, the set of types $\T_{a_i}$ consists of all total
orderings over the $n$ positions.
Each applicant is indifferent between matchings which assign the applicant 
to the same position.
Formally, for $t_i \in \T_{a_i}$, we have $\mu \succ_i^{t_i} \mu'$
if and only if $\mu(a_i)\ t_i\ \mu'(a_i)$.
We refer to types of applicants as \emph{preferences} over the positions,
and denote them $\succ_i \in \T_{a_i}$. We typically suppress the notation
$\mu \succ_i^{\succ_i} \mu'$ in favor of simply
$\mu(a_i) \succ_i \mu'(a_i)$.

Consider a set $q = \{ \succ_i \}_{i\in[n]}$ of preferences
over the \emph{applicants}, one for each position.
We refer to these as \emph{priorities},
as they are fixed and not considered strategic.
A matching $\mu$ is \emph{stable} if, for every pair $a, p$
of applicants and positions for which $\mu(a) \ne p$,
we either have $\mu(a) \succ_a p$, or $\mu(p) \succ_p a$.
That is, no pair of one applicant and one position would rather match with
each other than with their assignment under $\mu$.
Note that we restrict attention to complete
preference lists (where every applicant ranks every position and
vice-versa) and environments with the same number of applicants and
positions (and thus, stable matchings must have every applicant and
position matched).

The celebrated results of \cite{GaleS62} are that there always exists a
unique applicant-optimal stable matching (that is, one in which each
applicant is assigned to their favorite position among the set of
positions they are assigned to in any stable matching).
Moreover, this matching can be found in the polynomial time algorithm of
\emph{applicant-proposing deferred acceptance}.
We thus define the social choice function\footnote{
  This is a slight abuse of notion, because the proper deferred acceptance
  algorithm constitutes a mechanism, but we use $\DA^q$ to refer to a social
  choice function.
} $\DA^q$ as follows:
\begin{definition}
For a fixed set of priorities $q = \{ \succ_i \}_i$, we let $\DA^q$
denote the social choice function which, for any set of applicant's 
preferences $\{ \succ_a \}_{a}$, outputs the applicant-optimal
stable matching.
\end{definition}
This social choice function is 
strategyproof~\cite{DubinsMachiavelliGaleShapley81},
and the main question this paper addresses is when $\DA^q$ is furthermore
\emph{obviously} strategyproof.

\begin{definition}[OSP priorities]
  We say that a set of priorities $q = \{ \succ_i \}$
  of $n$ positions over $n$ applicants is \emph{OSP implementable} 
  (or simply \emph{OSP} for short)
  if $\DA^q$ is OSP implementable.
\end{definition}


\subsubsection{Monotonicity of OSP priorities}
\label{sec:PriorityMon}

\autoref{sec:OspMon} describes a general monotonicity condition of OSP
mechanisms: restricting attention to a subset of each agent's types
preserves the OSP property.
In the context of matching environments with fixed priorities, we need to
reason about a second kind of monotonicity as well\footnote{
  As demonstrated by~\cite{mandal2020obviously}, this sort of monotonicity
  is somewhat subtle, and does not hold for priorities used in the top
  trading cycles mechanism. However, for deferred acceptance in our
  environment, this type of monotonicity holds.
}.

\begin{lemma}
  \label{lem:RestrictedPriorities}
  Consider a set of priorities $q = \{\succ_i\}_i$ of $n$ positions
  over $n$ applicants.
  If $q$ is OSP, then
  for any $m\le n$ and any subset $S$ of $m$ applicants and $T$ of $m$ positions,
  the restriction of $q' = \{ \succ_i|_S \}_{i \in T}$
  is OSP as well.
\end{lemma}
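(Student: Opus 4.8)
The plan is to realize $\DA^{q'}$ as a relabeling of a restriction of $\DA^q$ to a carefully chosen subdomain, and then apply \autoref{lem:restrictedOsp}. Write $\bar S$ and $\bar T$ for the sets of $n-m$ applicants and positions outside $S$ and $T$. Fix once and for all an arbitrary total order $\rho$ on $\bar T$, and, for each $\bar a\in\bar S$, an arbitrary total order $\sigma_{\bar a}$ on all $n$ positions that ranks every position of $\bar T$ above every position of $T$. Let $E'$ be the subdomain in which each $\bar a\in\bar S$ has the single type $\sigma_{\bar a}$; each $a\in S$ ranges over exactly the preferences of the form ``(an arbitrary total order on $T$) followed by $\rho$''; and the outcome set is restricted to those matchings that send $S$ onto $T$ (equivalently, $\bar S$ onto $\bar T$).

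The crux is a combinatorial claim about $\DA^q$: for every type profile in $E'$, deferred acceptance with priorities $q$ produces a matching that sends every applicant of $\bar S$ to a position of $\bar T$. I would prove this by running the standard proposal process of DA one proposal at a time (which computes the same, applicant-optimal, stable matching regardless of proposal order) and inspecting the first ``crossing'' proposal, i.e.\ the first time that either an applicant of $\bar S$ proposes to a position of $T$ or an applicant of $S$ proposes to a position of $\bar T$. In the first case, since $\sigma_{\bar a}$ puts all of $\bar T$ above all of $T$, the proposing applicant $\bar a$ has already been rejected by all $n-m$ positions of $\bar T$; as nothing has crossed yet, only applicants of $\bar S$ have ever proposed into $\bar T$, so all $n-m$ positions of $\bar T$ are currently held, each by a distinct applicant of $\bar S\setminus\{\bar a\}$ — impossible, since that set has size $n-m-1$. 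The second case is symmetric. Hence no crossing proposal is ever made, so $\DA^q$ restricted to $E'$ is well defined with values in the restricted outcome set. (This step genuinely uses the proposal structure of DA, not merely stability: some non-applicant-optimal stable matchings of $q$ do exhibit crossings.)

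Next I would check that, under the obvious relabeling — forget the restriction of the output matching to $\bar S\times\bar T$ (which is the same fixed matching for every profile in $E'$, since no applicant of $S$ ever proposes into $\bar T$), and restrict each $S$-applicant's preference to $T$ — the restricted social choice function $\DA^q|_{E'}$ is exactly $\DA^{q'}$ with $q'=\{\succ_i|_S\}_{i\in T}$. That $\DA^q(\cdot)|_S$ is stable for the restricted instance is immediate: since $\DA^q(\cdot)$ matches $S$ with $T$, any restricted blocking pair would be a blocking pair of the full (stable) matching. For applicant-optimality, given a stable matching $\nu'$ of the restricted instance under which some $a\in S$ does strictly better, I would extend $\nu'$ to the full instance using $\DA^q(\cdot)$ on $\bar S\times\bar T$; checking the four kinds of potential blocking pairs — using that $S$-applicants rank $T$ above $\bar T$, that $\bar S$-applicants rank $\bar T$ above $T$, and that $\DA^q(\cdot)$ is stable on $\bar S\times\bar T$ — shows this extension is stable for the full instance, contradicting applicant-optimality of $\DA^q(\cdot)$ there.

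Finally, \autoref{lem:restrictedOsp} shows $\DA^q|_{E'}$ is OSP implementable. To pass to $\DA^{q'}$ over the $m$-applicant environment, I would take an OSP mechanism for $\DA^q|_{E'}$ and delete the applicants of $\bar S$: each such applicant has a singleton type-set, hence acts only at nodes with a single successor, where the OSP condition is vacuous and contracting the edge leaves the mechanism OSP and its (relabeled) social choice function unchanged. The main obstacle is the no-crossing claim; everything after it is bookkeeping, though the applicant-optimality verification must be carried out with some care.
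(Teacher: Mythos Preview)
Your proposal is correct and follows the same overall plan as the paper: restrict $\DA^q$ to a subdomain in which $S$-applicants rank $T$ first and $\bar S$-applicants have fixed preferences, invoke \autoref{lem:restrictedOsp}, then strip out the $\bar S$-agents.

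The paper streamlines two of your steps. First, rather than allowing each $\bar a\in\bar S$ an arbitrary order $\sigma_{\bar a}$ that merely places $\bar T$ above $T$, the paper assigns each $\bar a$ a \emph{distinct} top position in $\bar T$; then each $\bar a$ proposes once and stays, and your no-crossing counting argument collapses to a one-line observation. Second, once no crossing proposals are made, the full DA run literally decomposes into two independent runs on $(S,T)$ and $(\bar S,\bar T)$, so the $(S,T)$ part \emph{is} $\DA^{q'}$ by construction; your separate stability and applicant-optimality verification is then unnecessary. Your more general subdomain does work, and your counting argument is clean, but the extra generality buys nothing here and costs you the extra bookkeeping you flagged at the end.
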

\begin{proof}
  Let $\Game$ be an OSP implementation of $\DA^q$.
  Consider any subdomain $E'$ in which applicants in $S$ rank all positions
  in $T$ before all positions not in $T$, and all applicants not in $S$
  have a fixed preference ranking which puts
  distinct positions not in $S$ as their top preference.
  In a run of deferred acceptance with preferences in $E'$, 
  all applicants outside of $S$ will immediately propose to their top
  choice, and never be rejected.
  Moreover, agents outside of $S$ will have no effect on the run of DA for
  agents in $S$ and preferences in $E'$.
  Thus, consider $\Game'$ restricted to $E'$.
  Restricting $\Game'$ to agents in $S$ give an OSP implementation of
  $\DA^{q'}$.
\end{proof}

\subsubsection{Cyclic priorities}

Cyclic priorities were introduced to the context of studying $\DA^q$
by~\cite{Ergin02}, which categorizes acyclic priorities as precisely those
for which $\DA^q$ always results in a matching which is Pareto optimal for
the applicants.

\begin{definition}[\cite{Ergin02}]
  A priority set $\{ \succ_i \}_i$ over $\agents$ is \emph{cyclic}
  if there exists applicants $a,b,c$ and positions $i,j$
  such that $a \succ_i b \succ_i c$ and $c \succ_j a$.

  Priorities are called \emph{acyclic} otherwise.
\end{definition}

The acyclicity condition is very strong, that is, acyclicity does not allow
for a wide diversity in the priorities. We typically use the
following characterization to describe acyclic priorities:
\begin{proposition}
  \label{prop:AcyclicEquiv}
  Priorities are acyclic if and only if there exists an ordered
  partition $S_1,\ldots,S_k$ of applicants such that
  $1\le|S_i|\le 2$, and
  for all $a \in S_i$ and $b \in S_{i+1}\cup\dots\cup S_{k}$,
  every position $x$ has $a \succ_x b$.
\end{proposition}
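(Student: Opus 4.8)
I would prove the two implications separately; the forward direction (acyclic $\Rightarrow$ the ordered partition exists) carries all the content.

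\emph{The easy direction ($\Leftarrow$).} Suppose the ordered partition $S_1,\dots,S_k$ exists and, toward a contradiction, that the priorities are cyclic: there are applicants $a,b,c$ and positions $i,j$ with $a\succ_i b\succ_i c$ and $c\succ_j a$. Since each $\succ_i$ is a strict total order on applicants, $a,b,c$ are distinct. The partition hypothesis is equivalent to: if $u\in S_\ell$, $v\in S_m$ and $\ell<m$, then every position ranks $u$ above $v$. Let $p,q,r$ index the blocks containing $a,b,c$. Reading $a\succ_i b$, $b\succ_i c$, $c\succ_j a$ against this equivalence forces $p\le q$, $q\le r$, and $r\le p$, hence $p=q=r$, so $\{a,b,c\}\subseteq S_p$, contradicting $|S_p|\le 2$.

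\emph{Setup for the hard direction ($\Rightarrow$).} Assume the priorities are acyclic. Write $a\succ^* b$ when $a\succ_x b$ for \emph{every} position $x$; this is a strict partial order on applicants (transitivity is immediate, irreflexivity since each $\succ_x$ is strict). Call two distinct applicants $a,b$ \emph{incomparable} if neither $a\succ^* b$ nor $b\succ^* a$, i.e.\ there are positions $x,y$ with $a\succ_x b$ and $b\succ_y a$. The key claim I would establish is: under acyclicity, no applicant is incomparable to two distinct applicants (so the incomparability relation is a partial matching).

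\emph{Proving the key claim --- the main obstacle.} Suppose $a$ is incomparable to both $b$ and $c$, with $b\ne c$ (so $a,b,c$ are distinct). Case-split on the $\succ^*$-relation of $b$ and $c$. If $b\succ^* c$: pick a position $i$ with $a\succ_i b$; then also $b\succ_i c$, so $a\succ_i b\succ_i c$, and combined with a position $j$ where $c\succ_j a$ this is a cycle. If $c\succ^* b$: symmetric, using a position where $a\succ_{i'} c$. If $b$ and $c$ are themselves incomparable: pick a position $i$ with $a\succ_i b$ and split on where $c$ lies in the total order $\succ_i$ relative to $a$ and $b$ (above $a$, strictly between, or below $b$); in each subcase $\succ_i$ contains a three-element chain among $\{a,b,c\}$ which, paired with a position witnessing whichever incomparability closes the loop, yields a cycle. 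Every case contradicts acyclicity, proving the claim; this case analysis is where I expect the real work to sit.

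\emph{From the claim to the partition.} I would then peel off blocks from the top, by induction on the number of applicants. Let $S_1$ be the set of $\succ^*$-maximal applicants: it is nonempty, and $|S_1|\le 2$ since three maximal applicants would be pairwise incomparable, violating the claim. Moreover every $a\in S_1$ satisfies $a\succ^* b$ for all $b\notin S_1$: such $a,b$ must be comparable (otherwise $a$ is incomparable to $b$ and to the other member of $S_1$), and maximality of $a$ rules out $b\succ^* a$. The priorities restricted to $\agents\setminus S_1$ are again acyclic (any cycle there is a cycle in the original), so by the inductive hypothesis they admit an ordered partition $S_2,\dots,S_k$ with the stated domination property; prepending $S_1$ and using that $S_1$ dominates all of $\agents\setminus S_1$ gives the desired ordered partition of $\agents$. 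This completes both directions.
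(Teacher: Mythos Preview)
Your proof is correct. Both you and the paper prove the forward direction by peeling off a top block of size at most two and recursing, but you take a more abstract route: you introduce the partial order $\succ^*$ and establish the structural lemma that incomparability under $\succ^*$ is a partial matching, then take $S_1$ to be the set of $\succ^*$-maximal applicants. The paper instead works concretely with the set of applicants that have top priority at \emph{some} position, and argues directly (in a couple of lines) that when there are two such applicants they must occupy the top two slots of every priority list. Under acyclicity the two candidate sets for $S_1$ coincide, so the arguments are really two presentations of the same idea; your lemma is slightly stronger than what is strictly needed and makes the inductive step cleaner, while the paper's argument is shorter but more ad hoc. One small wrinkle: your justification that every $a\in S_1$ dominates every $b\notin S_1$ invokes ``the other member of $S_1$'', which tacitly assumes $|S_1|=2$; when $|S_1|=1$ the conclusion follows instead from the standard fact that in a finite poset every non-maximal element lies below some maximal one.
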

\begin{proof}
  If such a partition exists and there is a position $x$ with
  $a \succ_x b \succ_x c$, then $a, c$ cannot be in the same element of the
  partition. Thus, for each position $y$, we have $a \succ_y c$,
  and priorities are acyclic.

  On the other hand, if priorities are acyclic, observe that there can be
  at most two applicants which have the top priority at any
  position. If there is only one such applicant $a$, let $S_1 = \{a\}$.
  If there are two such applicants $a, b$, observe that 
  the top two priorities of each position must be $a$ and $b$
  (if the top two priorities of some position $x$ were $a \succ_x c$,
  then we'd have $a\succ_x c\succ_x b$ and $b \succ_y a$ for some other
  position $y$).
  So let $S_1 = \{a, b\}$.
  Observe that removing $\{a, b\}$ from each priority list results in
  another set of acyclic priorities, so repeating this process until all
  applicants are in some $S_i$ constructs the desired partition.
\end{proof}

\subsubsection{Prior results on OSP implementations of
  \texorpdfstring{$\DA$}{DA}
}

\begin{figure}
\tikzset{
  solid node/.style={circle,draw,inner sep=1.5,fill=black},
  decision node/.style={regular polygon, regular polygon sides=6,draw,inner sep=1.5},
  hollow node/.style={circle,draw,inner sep=1.5},
  square node/.style={draw,inner sep=2.5},
}
\begin{center}
\begin{tikzpicture}[scale=0.8,font=\footnotesize]
  \tikzstyle{level 1}=[level distance=15mm,sibling distance=19mm]
  \tikzstyle{level 2}=[level distance=15mm,sibling distance=15mm]
  \node(a1)[decision node,label=above:{}]{(a)}
    child{node(a11)[square node, label=below:{
      }]{}
      edge from parent node[left,xshift=-3]{$\mathtt{C}(1)$}
    }
    child{node(a12)[square node, label=below:{
      }]{}
      edge from parent node[right,xshift=3]{$\mathtt{C}(2)$}
    };
  \node(b1)[decision node,label=above:{}, xshift=80]{(b)}
    child{node(b11)[hollow node, label=below:{\begin{tabular}{c}
          $\{ (a,3),(b,1),$ \\
          $(c,2) \}$
        \end{tabular} }]{}
      edge from parent node[left,xshift=-3]{$\mathtt{C}(1)$}
    }
    child{node(b13)[square node, label=below:{
        \begin{tabular}{c}
          $a$ acts,\\ then $c$.
        \end{tabular}
      }]{}
      edge from parent node[right,xshift=3]{$\mathtt{C}(3)$}
    };
  \node(c1)[decision node,label=above:{$\{1,2\}$ are possible}, xshift=170]{(c)}
    child{node(c11)[hollow node, label=below:{
          \begin{tabular}{c}
          $\{(a,3),(b,2),(c,1)\}$
          \end{tabular} 
        }]{}
      edge from parent node[left,xshift=-3]{$\mathtt{C}(1)$}
    };
  \node(b2)[decision node,label=above:{$\{1,3\}$ are possible}, xshift=270]{(b)}
    child{node(b21)[hollow node, label=below:{
          \begin{tabular}{c}
          $\{(a,3),(b,1),$ \\
          $(c,2)\}$
          \end{tabular} 
        }]{}
      edge from parent node[left,xshift=-3]{$\mathtt{C}(1)$}
    }
    child{node(b23)[square node, label=below:{
        \begin{tabular}{c}
          $a$ acts
        \end{tabular}
      }]{}
      edge from parent node[right,xshift=3]{$\mathtt{C}(3)$}
    };

  \draw[->] (a1) -- (b1) 
    node[midway, above]{$3\dots$};
  \draw[->] (b1) -- (c1) 
    node[midway, above]{$2\dots$};
  \draw[->] (c1) -- (b2) 
    node[midway, above]{$2\dots$};
\end{tikzpicture}
\end{center}
\caption{
  An OSP implementation of $\DA$ for priorities
  $1 : a\succ b\succ c$, $2: b\succ a\succ c$,
  and $3 : a\succ c\succ b$ as in 
  \autoref{thrm:AshlagiG18}, \autoref{item:AgExceptionalOsp}
  (originally in~\cite[Section 3]{AshlagiG18}).
  Actions labeled with $\mathtt{C}(i)$ indicate an action which ``clinches''
  position $i$ for the applicant acting at that node
  (that is, when an applicant takes action $\mathtt{C}(i)$,
  they well be irrevocably matched to position $i$).
  Square nodes represent subtrees in which the priorities of unmatched
  positions, restricted to unmatched applicants, become acyclic
  (and thus OSP implementable).
  Circular nodes are leaf nodes in which the match is determined.
  Some nodes are labeled above with the set of remaining ``possible''
  positions the applicant may be matched to in all descendant leafs.
  These sets can be calculated using the properties of the classic
  applicant-proposing deferred acceptance algorithm. (For example,
  at the $(c)$ node in the figure,
  we know that $a$'s favorite position is $3$, so $a$
  will propose to $3$ in deferred acceptance. But $a$ has higher priority
  at $3$ than $c$, so $c$ will never be matched to $3$.
  Similar logic implies that at the second $(b)$ node, $b$ cannot get $2$,
  because $c$ has proposed to $2$).
  To verify that this mechanism is OSP, it suffices to check that at the
  second node where any agent acts, they can clinch any position they were
  able to clinch in the first node where they acted.
}
\label{fig:ThreeTraderAshlagiG}
\end{figure}

We concisely sum up the results of \cite{AshlagiG18} as follows:
\begin{theorem}[\cite{AshlagiG18}]
  \label{thrm:AshlagiG18}
  Consider a set of priorities $q$ over applicants.
  \begin{enumerate}
    \item If $q$ is acyclic, then $\DA^q$ is OSP\footnote{
        See \autoref{fig:TwoTrader} for the basic idea of how an OSP
        mechanism is constructed. This is a special case of
        \autoref{thrm:Construction}.
      }.
    \item \label{item:AgExceptionalOsp}
      There exists a $q$ which is cyclic, yet $\DA^q$ is OSP.
      Specifically, one such set of priorities (which is OSP
      implemented by the mechanism described
      in~\autoref{fig:ThreeTraderAshlagiG}) is:
      \begin{align*}
           1 & : \ a \succ b \succ c
        \\ 2 & : \ a \succ c \succ b
        \\ 3 & : \ b \succ a \succ c
      \end{align*}
    \item $\DA^q$ is not OSP in general.
      Specifically, neither of the following preference sets are OSP
      implementable:
      \begin{align*}
           1 & : \ a \succ b \succ c &    1 & : \ a \succ b \succ c
        \\ 2 & : \ b \succ c \succ a &    2 & : \ a \succ b \succ c
        \\ 3 & : \ c \succ a \succ b &    3 & : \ c \succ a \succ b
      \end{align*}
  \end{enumerate}
\end{theorem}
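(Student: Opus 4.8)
The plan is to handle the three parts separately, since parts~(1) and~(2) each amount to exhibiting an explicit mechanism and checking it, while part~(3) is the substantive impossibility result.

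For part~(1) I would first invoke \autoref{prop:AcyclicEquiv} to write the applicants as an ordered partition $S_1,\dots,S_k$ with $|S_i|\le 2$ at which every position ranks $S_i$ above $S_j$ for $i<j$, and then build the mechanism that processes the groups in order: having fixed the matches of $S_1,\dots,S_{i-1}$, if $S_i=\{a\}$ I let $a$ clinch its favourite remaining position, and if $S_i=\{a,b\}$ I run the gadget $\TwoTrader$ of \autoref{fig:IllustrationAcyclic} on the remaining positions. This implements $\DA^q$ because, since every position prefers $S_1\cup\dots\cup S_{i-1}$ to $S_i$, the already-processed applicants are never displaced in applicant-proposing deferred acceptance, so when group $S_i$ is reached the only competition for the remaining positions is internal to $S_i$; for $|S_i|=2$ the $\TwoTrader$ gadget exactly reproduces the two-applicant deferred-acceptance outcome (each takes its favourite remaining position unless both want the same one, in which case the higher-priority applicant gets it and the other takes its next favourite). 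It is OSP by a node-by-node check of \autoref{def:OSP}: each applicant is queried at most twice, and by the design of $\TwoTrader$ at its second query it can still clinch every position it was offered to clinch at its first query, while at its first query it was offered to clinch everything then available, so truthful play never leaves it worse off than a deviation. (This is the special case of \autoref{thrm:Construction} in which priorities are acyclic.) For part~(2) I would simply take the mechanism drawn in \autoref{fig:ThreeTraderAshlagiG}: a finite check --- tracing deferred acceptance down each branch and using that $a$ has top priority at $1,2$ and $c$ at $3$, together with the elimination deductions shown in the figure --- confirms that the leaf labels equal $\DA^q$, and OSP then follows since each applicant acts at most twice, at its second move it can clinch whatever it could at its first, and every ``square node'' subtree is over acyclic priorities and hence OSP by part~(1).

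For part~(3) the plan is to fix either displayed priority set, assume some mechanism $\Game$ OSP-implements $\DA^q$, and derive a contradiction by analysing $\Game$ from the root down. The engine is the slogan that in deferred acceptance \emph{an applicant can only guarantee a position where it has top priority}: for each preference $t$ of a fixed applicant $i$, as the others' preferences range over all profiles the set of positions $i$ can receive has $\preceq^{t}$-worst element a position at which $i$ has top priority. Combining this with \autoref{def:OSP} applied at the root --- where some applicant $i$ moves first; for the Condorcet-cycle set this is an arbitrary applicant by the cyclic symmetry, while for the other set one considers $i=a,b,c$ separately and rules out $i=b$ outright, since every preference of $b$ can be driven to a common position, leaving no nontrivial safe partition of $b$'s types --- I would show that the root move of $i$ is forced, up to relabeling, to be ``offer $i$ to clinch a top-priority position, or decline'': for any other split there is a branch $B$ and a preference $t\in B$ whose $\preceq^{t}$-worst outcome reachable within $B$ is strictly worse, under $t$, than some outcome reachable via a different branch, contradicting OSP at the root.

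I would then recurse inside the ``decline'' branch, where $i$'s favourite is now known to avoid all of $i$'s top-priority positions while $i$'s outcome is still uncertain. The next applicant to move cannot be $b$ (the same argument forbids it), cannot be $i$ again (another clinch offer is moot, and any question separating $i$'s remaining preferences would create uncertainty for $i$ between a strictly better and a strictly worse position, violating OSP at that node), and the remaining possibilities collapse once one tracks which applicants are still unmatched and recomputes the attainable-position sets in the subtree; in the end $\Game$ reaches a subtree in which no applicant can be queried in an OSP way yet the outcome is not determined --- the contradiction. I expect this last bookkeeping to be the main obstacle: the OSP constraint at the root is on its own satisfiable (the ``clinch-or-decline'' split is legal), so the argument cannot stop there but must descend the game tree, at each level re-pairing \autoref{def:OSP} with a fresh computation of attainable positions under the partial information already revealed. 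This is precisely the analysis of \cite[Section~4 and Appendix~B]{AshlagiG18}, and the priority sub-patterns it bottoms out at are exactly those recorded in \autoref{fig:AllPrimitiveNonOsp}.
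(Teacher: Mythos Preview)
Your parts (1) and (2) are exactly the paper's approach: both are special cases of \autoref{thrm:Construction}, whose proof processes the partition $S_1,\dots,S_L$ in order, running $\TwoTrader$ on two-element groups (and the trivial one-agent clinch on singletons), with \autoref{fig:ThreeTraderAshlagiG} supplying the explicit mechanism for part~(2).

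For part (3) your route diverges from the paper's. The paper does not re-prove the fully-cyclic case (it is simply cited to \cite[Section~4]{AshlagiG18}), but it does re-prove the second priority set in \autoref{sec:NonOSPTwoWomenSame}, and there the argument is organised quite differently. Rather than descend the game tree and classify what each successive node can look like, the paper isolates \autoref{lem:keyTechnical}: if one can exhibit a subdomain with at most three types per agent such that for \emph{every} agent $i$ and every way of singling out one of $i$'s types there is a ``truth-gets-bad / lie-gets-good'' pair of profiles, then no OSP mechanism exists on that subdomain (and hence, by \autoref{lem:restrictedOsp}, none on the full domain). The proof for the second priority set then reduces to writing down two or three explicit preference lists per applicant and tabulating a handful of deferred-acceptance runs. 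Your direct game-tree analysis is essentially the one in the original \cite{AshlagiG18} and is correct in outline, but the recursive bookkeeping you yourself flag as ``the main obstacle'' is precisely what \autoref{lem:keyTechnical} is designed to eliminate: the lemma collapses the question ``who moves first, what can they be asked, then recurse into the decline branch'' into a single flat case check, and this is what lets the paper uniformly dispatch all the new non-OSP patterns in \autoref{fig:AllPrimitiveNonOsp}. (As a minor point, your phrase ``every preference of $b$ can be driven to a common position'' is not literally true---different types of $b$ have different worst-case positions---though the intended statement, that every type of $b$ can be driven off its top choice while some other type can attain that same top choice, is correct and is exactly what the lemma encodes.)
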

%

The main result of this paper is to close the gap between the construction
and impossibility result of~\cite{AshlagiG18}.

\section{Proof that limited cyclic priorities are OSP}
\label{sec:Result}

%

The cyclic yet OSP example of \autoref{thrm:AshlagiG18},
\autoref{item:AgExceptionalOsp} can be extended in the following way to
include arbitrarily many applicants:

\begin{definition}
  \label{def:limitedCyclic}
  For any $\ell, k \ge 3$, a set of priorities of 
  $ \ell$ positions over applicants $\{a_1, a_2, \ldots, a_k\}$ 
  is \emph{two-adjacent-alternating} if, up to relabeling,
  each priority list is of the form $x = a_1\succ a_2 \succ a_3 \succ a_4 \succ \ldots$
  except for two, one of which is of the form 
  $u$ (where $u$ flips every adjacent pair of applicants starting
  from $a_2, a_3$ onward), and the other is of the form $v$
  (where $v$ flips every adjacent pair of applicants starting from $a_1,
  a_2$ onward). That is, up to relabeling we have
  \begin{align*}
       x &:\ a_1 \succ a_2 \succ a_3 \succ a_4 \succ a_5 \succ a_6 \succ a_7 \ldots
    \\ u &:\ a_1 \succ a_3 \succ a_2 \succ a_5 \succ a_4 \succ a_7 \succ a_6 \ldots
    \\ v &:\ a_2 \succ a_1 \succ a_4 \succ a_3 \succ a_6 \succ a_5 \succ \ldots
  \end{align*}
  where $x$ is repeated $\ell-2$ times, and $u$ and $v$ each appear exactly
  once. 

%

  A set of priorities $\{ \succ_x \}_x$ 
  is \emph{limited cyclic} if there exists an ordered
  partition $S_1, S_2, \ldots, S_L$ of applicants, such that for any
  $a \in S_i$ and $b \in S_{i+1},\ldots,S_{L}$, we have
  $a \succ_x b$ for all positions $x$,
  and moreover, for any $i$ with $|S_i| \ge 3$,
  the restriction of $\{ \succ_x \}_x$ to $S_i$ is
  two-adjacent-alternating.
\end{definition}

For two examples of two-adjacent alternating priorities, consider
\\
\begin{tabular}{c|ccccccccccc}
  1,2,3,4 & $a$ &  $b$  &  $c$  &  $d$  & $e$  & $f$ \\
  5     & $a$ &  $(c$ &  $b)$ &  $(e$ & $d)$ & $f$ \\
  6     & $(b$ &  $a)$ &  $(d$ &  $c)$ & $(f$ & $e)$
\end{tabular}
\qquad
\begin{tabular}{c|ccccccccccccc}
  1,2,3,4,5 & $a$ &  $b$  &  $c$  &  $d$  & $e$  & $f$ & $g$ \\
  6     & $a$ &  $(c$ &  $b)$ &  $(e$ & $d)$ & $(g$ & $f)$  \\
  7     & $(b$ &  $a)$ &  $(d$ &  $c)$ & $(f$ & $e)$ & $g$
\end{tabular}
\\
(where we write each priority list as an ordered list, and add parentheses
to highligh where applicants are flipped from their ``normal ordering'').

Observe that by \autoref{prop:AcyclicEquiv}, 
acyclic priorities are a special case of limited cyclic
priorities, in which each $|S_i| \le 2$.
Moreover, the preference list of \autoref{thrm:AshlagiG18}, 
\autoref{item:AgExceptionalOsp} is limited cyclic.
It turns out that limited cyclic priorities are the \emph{only}
priorities that are cyclic and OSP implementable.
In this section, we prove that limited cyclic priorites are OSP
implementable. In the next section, we prove that \emph{only} limited
cyclic priorities are OSP implementable.

\begin{figure}
\tikzset{
  solid node/.style={circle,draw,inner sep=1.5,fill=black},
  decision node/.style={regular polygon, regular polygon sides=6,draw,inner sep=1.5},
  hollow node/.style={circle,draw,inner sep=1.5},
  square node/.style={draw,inner sep=2.5},
}
\hspace{-7mm}
\begin{tikzpicture}[scale=0.8,font=\footnotesize]
  \tikzstyle{level 1}=[level distance=20mm,sibling distance=35mm]
  \tikzstyle{level 2}=[level distance=15mm,sibling distance=15mm]
  \node(a1)[decision node,label=above:{}]{(a)}
    child{node(b-sub)[decision node]{(b)}
      child{node(a11)[square node, label=below:{ }]{}
        edge from parent node[left]{
          \begin{tabular}{l}
            $\mathtt{Clinch}$ any \\ $v \in U\cup V\setminus \{u\}$
          \end{tabular}
        }
      }
      edge from parent node[left]{
        \begin{tabular}{l}
          $\mathtt{Clinch}$ \\ any $u \in U$
        \end{tabular}
      }
    };
  \node(b1)[decision node,label=above:{}, xshift=150]{(b)}
    child{node(b13)[decision node]{(a)}
      child{node(a-sub)[square node]{}
        edge from parent node[left]{
          \begin{tabular}{l}
            $\mathtt{Clinch}$ any \\ $u \in U\cup V\setminus \{v\}$
          \end{tabular}
        }
      }
      edge from parent node[left]{\begin{tabular}{l}
          $\mathtt{Clinch}$ \\ any $v \in U \cup V$
        \end{tabular}
      }
    };

  \draw[->] (a1) -- (b1) 
    node[midway, above]{$V\ldots$};
\end{tikzpicture}

\caption{
  The mechanism $\TwoTrader^{U;V}_{a,b}$, in which all positions
  give top priority to either $a$ or $b$,
  and positions in $U$ rank $a$ first, and positions in $V$ rank $b$ first.
  The name $\TwoTrader$ refers to the idea that $a$ and $b$ can trade their
  priority at $U$ and $V$ if they wish.
  An agent should choose to ``clinch'' a position (that is, become
  perminently matched to that position) when that position is
  their favorite among the set of positions which are still possible for
  them to match to.
  This mechanism was used by~\cite{AshlagiG18} to show that all acyclic
  priorities are OSP implementable.
}
\label{fig:TwoTrader}

\vspace{0.3in}

\tikzset{
  solid node/.style={circle,draw,inner sep=1.5,fill=black},
  decision node/.style={regular polygon, regular polygon sides=6,draw,inner sep=1.5},
  hollow node/.style={circle,draw,inner sep=1.5},
  square node/.style={draw,inner sep=2.5},
}
\begin{tikzpicture}[scale=1.3,font=\footnotesize]
  \tikzstyle{level 1}=[level distance=18mm,sibling distance=17mm]
  \node(a1)[decision node,label=above:{}]{($a_1$)}
    child{node(a11)[square node, label=below:{
          $\ThreeTrader^{X\setminus\{x\},v,u}_{a_2,\ldots,a_k}$
      }]{}
      edge from parent node[left,xshift=5]{
        \begin{tabular}{ccc}
          $\mathtt{Clinch}$ \\
          any \\
          $x\in X$
        \end{tabular}
      }
    }
    child{node(a12)[square node, label=below:{
        acyclic
      }]{}
      edge from parent node[right,yshift=-4,xshift=-4]{
        \begin{tabular}{ccc}
          $\mathtt{Clinch}$  \\
          $u$
        \end{tabular}
      }
    };

  \node(b1)[decision node,label=above:{}, xshift=110]{($a_2$)}
    child{node(b11)[square node, label=below:{
        \begin{tabular}{c}
          $a_1$ at $v$; \\
          others acyclic
        \end{tabular} 
      }]{}
      edge from parent node[left,xshift=0]{
        \begin{tabular}{ccc}
          $\mathtt{Clinch}$  \\
          any \\
          $x\in X$
        \end{tabular}
      }
    }
    child{node(b13)[square node, label=below:{
        \begin{tabular}{ccc}
          $a_1$ acts; \\
          others acyclic
        \end{tabular}
      }]{(i)}
      edge from parent node[right,xshift=-5,yshift=-2]{
        \begin{tabular}{ccc}
          $\mathtt{Clinch}$  \\
          $v$
        \end{tabular}
      }
    };

  \node(c1)[decision node,label=above:{
        \begin{tabular}{ccc}
          $X\cup\{u\}$  \\
         are possible
        \end{tabular}
      }, xshift=200]{($a_3$)}
    child{node(c11)[square node, label=below:{
        \begin{tabular}{ccc}
          $a_1$ at $v$; \\
          $a_2$ at $u$; \\
          others acyclic
        \end{tabular}
      }]{}
      edge from parent node[left,xshift=3]{
        \begin{tabular}{ccc}
          $\mathtt{Clinch}$  \\
          any \\
          $x\in X$
        \end{tabular}
      }
    };

  \node(b2)[decision node,label=above:{
      \begin{tabular}{cc}
        $X\cup\{v\}$ \\ are possible
      \end{tabular}
    }, xshift=295]{($a_2$)}
    child{node(b21)[square node, label=below:{
        \begin{tabular}{c}
          $a_1$ at $v$; \\
          $a_3$ at $u$; \\
          others acyclic
        \end{tabular} 
      }]{}
      edge from parent node[left,xshift=-3]{
        \begin{tabular}{c}
          $\mathtt{Clinch}$ \\
          any $x\in X$
        \end{tabular} 
      }
    }
    child{node(b23)[square node, label=below:{
        \begin{tabular}{ccc}
          $a_1$ acts; \\
          then $a_3$ acts; \\
          others acyclic
        \end{tabular}
      }]{(ii)}
      edge from parent node[right,xshift=-3]{
        \begin{tabular}{c}
          $\mathtt{Clinch}$ \\
          $v$
        \end{tabular} 
      }
    };

  \draw[->] (a1) -- (b1) 
    node[midway, above]{$v\dots$};
  \draw[->] (b1) -- (c1) 
    node[midway, above]{$u\dots$};
  \draw[->] (c1) -- (b2) 
    node[midway, above]{$u\dots$};
\end{tikzpicture}

\caption[Three Trader]{
  The mechanism $\ThreeTrader^{X,u,v}_{a_1,\ldots,a_k}$
  for two-adjacent-alternating priorities over $a_1, a_2, \ldots,a_k$
  (labeled as in \autoref{def:limitedCyclic}, with $X$ denoting the set of
  positions with priority list $x$).
  The name $\ThreeTrader$ refers to the idea that applicant $a_1$ becomes a
  ``lurker'' for position $v$ (following the terminology of \cite{BadeG16}),
  while the mechanism must query $a_2$ and $a_3$ to know if $a_1$ matches
  to $v$.
  If $a_1$ clinches some $x\in X$, then the remaining priorities on
  $a_2,\ldots,a_k$ are also two-adjacent-alternating,
  and we can recursively construct an OSP mechanism.
  In all other branches, one of $u$ or $v$ is matched, and the 
  remaining priorities are acyclic.
  
  All nodes are OSP, by similar logic to that of
  \autoref{fig:ThreeTraderAshlagiG}. Specifically, at the second node where
  any agent acts, they are able to clinch everything they were offered to
  clinch at their first node.
  In particular, when $a_1$ acts in the square node (i),
  they can clinch any position in $X \cup \{u\}$.
  When $a_3$ acts in the square node (ii), two
  things are possible: $a_1$ may have clinched $u$ (in which case $a_3$ can
  clinch any $x\in X$), or not (in which case $a_3$ will match with $u$,
  their favorite attainable position).
}
\label{fig:ThreeTrader}

\end{figure}

\begin{theorem}
  \label{thrm:Construction}
  For any set of priorities $q$ which is limited cyclic,
  $\DA^q$ is OSP implementable.
\end{theorem}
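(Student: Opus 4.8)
The plan is to induct on the number $n$ of applicants (equivalently, positions), constructing the OSP mechanism one ``block'' at a time. Fix limited cyclic priorities $q$ with ordered partition $S_1,\ldots,S_L$ as in \autoref{def:limitedCyclic}, and let $k=|S_1|$. Since every position ranks the applicants of $S_1$ strictly above all others, applicant-proposing deferred acceptance effectively resolves the matches of the applicants in $S_1$ first (no applicant outside $S_1$ can ever displace one of them), and then runs on the remaining applicants and positions under the restricted priorities. Accordingly, it suffices to: (i) give an OSP sub-mechanism that matches the applicants of $S_1$; (ii) verify that in every leaf of this sub-mechanism the priorities restricted to the unmatched applicants and positions are again limited cyclic; and (iii) invoke the inductive hypothesis on the remainder and splice the resulting OSP mechanisms onto the leaves. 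Splicing preserves OSP-ness by \autoref{def:OSP}: every decision node strictly inside a spliced sub-mechanism inherits the OSP inequality from the inductive hypothesis (the already-matched applicants of $S_1$ are inert), so only the finitely many decision nodes of the $S_1$-sub-mechanism need to be checked.

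For the three cases $k=1$, $k=2$, and $k\ge 3$ the $S_1$-sub-mechanism is, respectively, ``$a$ clinches a position'', $\TwoTrader$ (\autoref{fig:TwoTrader}), and $\ThreeTrader$ (\autoref{fig:ThreeTrader}). When $k=1$ the unique $a\in S_1$ has top priority everywhere, so offering $a$ to clinch any single position and then recursing is correct, and the node where $a$ acts is obviously OSP, since the worst outcome when $a$ is truthful is $a$'s favorite position, which $a$ weakly prefers to the outcome of any deviation. When $k=2$, every position ranks the two applicants $a,b$ of $S_1$ in its top two slots; partitioning the positions into those ranking $a$ first ($U$) and those ranking $b$ first ($V$) and running $\TwoTrader^{U;V}_{a,b}$ matches $a$ and $b$, and OSP-ness is precisely the verification already used by \cite{AshlagiG18} and recalled in \autoref{fig:IllustrationAcyclic}: at the unique second move of $a$ or of $b$ the mover may clinch every position offered at their first move. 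When $k\ge 3$, the restriction of $q$ to $S_1$ is two-adjacent-alternating; writing $X$ for the $n-2$ positions carrying the generic list and $u,v$ for the two special positions, we run $\ThreeTrader^{X,u,v}_{a_1,\ldots,a_k}$. One traces the deterministic behaviour of deferred acceptance to confirm the annotations of \autoref{fig:ThreeTrader} --- $a_1$ becomes a lurker for $v$, $a_2$ may lurk for $u$, and the ``possible'' sets labelling the nodes are correct --- and then OSP-ness again reduces to the clinch-consistency check at the handful of decision nodes: at nodes (i), (ii), and the second appearance of $a_2$, the acting applicant can still clinch every position offered at their first move, and in every ``pass'' branch the applicant ends up with its favorite still-attainable position.

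What remains, and is the crux of the argument, is step (ii): limited cyclicity must be preserved in every leaf. For $k\le 2$ this is immediate --- we delete all of $S_1$ together with $|S_1|$ positions, and $S_2,\ldots,S_L$ still witnesses limited cyclicity (restricting each priority list to a subset of the applicants and deleting columns cannot create a priority cycle among the surviving blocks, and the two-adjacent-alternating form of an untouched block is unaffected). For $k\ge 3$, each leaf of $\ThreeTrader$ either recurses to $\ThreeTrader^{X\setminus\{x\},v,u}_{a_2,\ldots,a_k}$ --- and one checks directly that deleting the top applicant $a_1$ of a two-adjacent-alternating block, after relabelling, swaps the roles of $u$ and $v$ but leaves the block two-adjacent-alternating (and leaves it acyclic in the degenerate case $|X|=1$) --- or else consumes $u$, or $v$, or both, together with possibly a position of $X$, in which case the relevant block loses one or both of its special positions. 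The key sub-fact is that a two-adjacent-alternating block with one (or both) of its special positions deleted is in fact \emph{acyclic}: the surviving lists then pairwise disagree only on the adjacent pairs $(a_2,a_3),(a_4,a_5),\ldots$ (if $v$ was deleted) or $(a_1,a_2),(a_3,a_4),\ldots$ (if $u$ was deleted), so \autoref{prop:AcyclicEquiv} applies with the partition into those pairs. Combining these observations, the restriction in every leaf --- not only on $S_1$ but on every block $S_j$ that loses a position --- is limited cyclic, so the induction closes. The main obstacle is exactly this bookkeeping for $\ThreeTrader$: carefully tracking, along each of its branches, which positions get consumed, confirming OSP at each of its nodes via clinch-consistency, and checking the (sometimes degenerate) collapse to acyclicity on every affected block.
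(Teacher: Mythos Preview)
Your proposal is correct and follows essentially the same construction as the paper: handle the top block $S_1$ via a single clinch, $\TwoTrader$, or $\ThreeTrader$ according to $|S_1|$, verify OSP at each decision node by the clinch-consistency check, and recurse on the (still limited cyclic) remainder. One small slip worth fixing: the degenerate case in the $\ThreeTrader$ recursion is $|S_1|=3$ (after deleting $a_1$ only two applicants remain in the block, so it is acyclic regardless of how many generic positions survive), not $|X|=1$---when $k=3$ but $n>3$ you still land in this degenerate case---though this does not affect the validity of your induction on $n$.
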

\begin{proof}
  Suppose $q$ is limited cyclic, and let $S_1, S_2, \ldots, S_L$ be the
  partition of applicants as in \autoref{def:limitedCyclic}.
  We use induction on $L$ to construct a mechanism $\Game$ 
  which is OSP and implements $\DA^q$.

  The proof will procede as in~\cite{AshlagiG18} in the case that
  $|S_1|\in\{1,2\}$. Moreover, when $|S_1| = k \ge 3$, the mechanism
  of \autoref{fig:ThreeTraderAshlagiG}
  can be used ``in sequence'' $k-2$ times to assign applicants in $S_1$.
  The key observation is that in \autoref{def:limitedCyclic},
  removing $a_1$ from a two-adjacent-alternating set of priorities results
  in a two-adjacent-alternating set of priorities with one fewer applicant.
  For completeness, we now provide the full details.

  Suppose $|S_1|=1$, say $S_1 = \{a\}$.
  In this case, $\Game$ can simply have $a$ choose to be matched to any
  position, remove $a$ and their match from consideration, and run a
  mechanism OSP implementing $\DA^q$ restricted to $S_2,\ldots,S_L$.
  This is clearly OSP for $a$, and implements $\DA^q$.

  Suppose $|S_1|=2$, say $S_1 = \{a, b\}$.
  Let $U$ be the set of positions which rank $a$ first, and let $V$ be
  those which rank $b$ first.
  Observe that in $\DA^q$, if $a$'s favorite position is any $u \in U$,
  then they will be matched, and if $b$'s favorite position is any $v \in V$,
  then they will be matched. 
  If neither of the above happen, then $a$'s favorite lies in $V$ and $b$'s
  favorite lies in $U$, so both applicants $a$ and $b$ are matched to their
  favorite position.
  Consider the mechanisms $\TwoTrader^{U,V}_{a,b}$ in \autoref{fig:TwoTrader}.
  By the above logic, this mechanism correctly computes the match of $a$
  and $b$. Moreover, this mechanism is OSP for both $a$ (because if $a$
  passes at the root node, then $a$ will either be offered all of $U$ or
  all of $V$) and $b$ (because $b$ acts only once in 
  any execution of the mechanism).
  After running $\TwoTrader^{U,V}_{a,b}$, remove $a, b$ from the matching
  and run an OSP mechanism to match $S_2,\ldots,S_L$ by induction.
  This is OSP and implements $\DA^q$.

  Suppose $|S_1| = k \ge 3$, say $S_1 = \{a_1, a_2, a_3,\ldots,a_k\}$,
  where we label $(a_i)_{i\in[k]}$, $x$, $u$, and $v$
  as in the definition of two-adjacent-alternating priorities
  (\autoref{def:limitedCyclic}).
  Let $X$ denote the set of positions with preference $x$.
  By the same logic of \autoref{fig:ThreeTraderAshlagiG} (which describes a
  mechanism from~\cite{AshlagiG18}),
  we can see that $\ThreeTrader_{a_1,\ldots,a_k}^{X,u,v}$ in
  \autoref{fig:ThreeTrader} correctly computes the match of
  $a_1, a_2,$ and $a_3$.
  Moreover, if the top choice of $a_1$ is some position $x \in X$, then
  $a_1$ can be matched to $x$ and the remaining priorities over 
  $a_2, \ldots, a_k$ are still two-adjacent-alternating. 
  So a mechanism can be recursively constructed in this case.

  The resulting mechanism is OSP for the same reason as
  in~\cite{AshlagiG18}.
  For completeness, we check the OSP constraints here.
  If $a_1$ ``passes'' in the first round (i.e. if they choose not to clinch
  any position), the only way $a_1$ moves again is
  if $a_2$ clinches $v$ at some point. But the next action of $a_1$ always
  allows $a_1$ to clinch $X\cup\{u\}$, which is everything $a_1$ could
  clinch in their first node.
  If $a_2$ ``passes'' in their first node displayed in
  \autoref{fig:ThreeTrader}, then at their second node they can still
  clinch all of $X\cup\{v\}$, which is everything $a_2$ could
  clinch in their first node.
  If $a_3$ passes, then they only move again if $a_2$ clinches $v$ and then
  $a_1$ clinches $u$. But in this case, $a_3$ can clinch any position in
  $X$, which is everything $a_2$ could clinch in their first node.
  Thus, the mechanism is OSP.


  After running $\ThreeTrader^{X,u,v}_{a_1,\ldots,a_k}$, remove $S_1$
  from the matching and run an OSP mechanism to match $S_2,\ldots,S_L$ by induction.
  This is OSP and implements $\DA^q$.

  Thus, all limited cyclic priorities are OSP implementable.
\end{proof}
%
%

\section{Proof of main theorem}
\label{sec:Proof}


The crux of our characterization is
proving that priorities which are not limited cyclic are not OSP.
Formally, this breaks down into two results as follows:
\begin{theorem}
  \label{thrm:figSetsNotOsp}
  None of the priority sets shown in \autoref{fig:AllPrimitiveNonOsp} are
  OSP implementable.
\end{theorem}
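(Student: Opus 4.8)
The plan is to handle each of the six priority sets (a)-(e) in Figure \ref{fig:AllPrimitiveNonOsp} by directly exhibiting, for every candidate OSP mechanism implementing $\DA^q$, a node at which the OSP constraint is violated. By \autoref{thrm:AshlagiG18}, Item 3, cases (a) and the rightmost set in group (b) are already known to be non-OSP, so I only need to treat the remaining cases; nonetheless I would unify the argument. The key structural fact I will exploit is the one sketched in \autoref{sec:intuition}: in any OSP implementation of $\DA^q$, the only way the mechanism can learn that applicant $a$'s favorite position is $x$ is to have previously offered $a$ the chance to clinch every position other than $x$ (this is the ``lurker'' characterization alluded to, cf.\ \cite{BadeG16}). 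Consequently, at the first node where any applicant $a$ acts nontrivially, either $a$ immediately clinches some position, or $a$ must be offered a clinch of a set of positions $C$ with $|C| \ge 2$, and thereafter $a$ must be guaranteed every position in $C$ whenever $a$ is called again.

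First I would fix, for a given $q$, the first applicant $a$ to move in the putative OSP mechanism $\Game$, and argue that $a$ cannot clinch a position and exit immediately — because $\DA^q$ with some preference profile requires $a$ to be matched to a position where $a$ does not have top priority (this uses that $q$ is cyclic in cases (a)-(d)), so the mechanism must leave $a$'s match undetermined for some profile. Hence $a$ is offered a nontrivial clinch-set $C$. I then track the execution along the branch where $a$ does not clinch: some later applicant's action must be able to ``knock'' $a$ off a position in $C$, which (by stability and the deferred-acceptance dynamics) forces a specific priority pattern — exactly the cyclic pattern $b \succ_x a \succ_x c$ together with $c \succ_y b$ that appears in each figure. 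Finally, I would show that the need to then re-offer $a$ all of $C$, while simultaneously resolving $b$ versus $c$ consistently with $\DA^q$, is impossible: there is a profile forcing $a$ to lose a position in $C$ that $a$ is supposed to be guaranteed, or forcing $b$ (or $c$) to act twice with an unresolvable clinch-set. Case (e) with four applicants I would reduce by the same recursion: after pinning down $a$'s role it restricts to a three-applicant cyclic sub-pattern that is not the unique OSP one in \autoref{fig:ThreeCyclicOsp}, contradicting \autoref{lem:RestrictedPriorities}.

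The main obstacle I expect is making the ``lurker'' characterization rigorous and case-free: I need a clean lemma stating that in any OSP mechanism for $\DA^q$, whenever the execution path passes through a node where the mechanism has effectively determined applicant $a$'s top choice to be $x$, the mechanism must have previously presented $a$ with a menu containing every position except $x$; and dually, that the set of positions $a$ can still obtain after any of $a$'s moves is monotonically controlled. Proving this requires combining the definition of OSP (\autoref{def:OSP}) with the specific structure of $\DA^q$ — namely that $a$'s assignment under $\DA^q$ is its top choice among positions where sufficiently many higher-priority applicants do not apply. Once this lemma is in hand, each of (a)-(e) becomes a short finite check: enumerate the possible first-mover $a$ and possible clinch-sets $C$, follow the at-most-two-levels-deep branch structure forced by OSP, and locate the profile that breaks it. I would organize the write-up as: (i) the lurker/menu lemma; (ii) a generic ``cyclic triple'' obstruction lemma covering (a)-(d); (iii) the four-applicant case (e) by restriction to (iii) of \autoref{sec:intuition}. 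The referenced subsections \autoref{sec:NonOSPTwoWomenSame}, \autoref{sec:NonOspMildExtension}, \autoref{sec:NonOspLastCase}, and \autoref{sec:nIs4CasesProof} would contain the per-case finite verifications.
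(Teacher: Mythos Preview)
Your overall strategy differs from the paper's. The paper does not prove any structural ``lurker/menu'' lemma; instead it uses the elementary pruning tool of \autoref{lem:keyTechnical}: for each priority set it exhibits an explicit subdomain with two or three preference lists per applicant and checks, by tabulating runs of deferred acceptance, that for every applicant there is a ``truth gets a bad result / lie gets a good result'' pair. This directly shows that no mechanism can ask its first nontrivial question in an OSP manner. Your route via a clinch-set lemma is conceptually reasonable for cases (a)--(d), and if carried out would give a more structural explanation, but it is strictly more work than the paper's brute-force verification, and you still end up doing per-case finite checks anyway.

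There is, however, a genuine gap in your treatment of case~(e). You propose to handle it by restricting to three applicants and invoking \autoref{lem:RestrictedPriorities} to land on a non-OSP $3\times 3$ pattern. This cannot work: the entire point of case~(e) is that \emph{every} restriction of it to three applicants and three positions is either acyclic or equal (up to relabeling) to the unique cyclic-but-OSP pattern of \autoref{fig:ThreeCyclicOsp}. (This is exactly what \autoref{lem:4by4Cases} establishes: case~(e) is one of the $4\times 4$ cyclic priority sets none of whose $3\times 3$ restrictions lie in \autoref{fig:AllPrimitiveNonOspThreeByThree}.) So no reduction via \autoref{lem:RestrictedPriorities} is available, and case~(e) requires a direct argument with four applicants --- in the paper, this is the explicit subdomain computation of \autoref{sec:nIs4CasesProof}. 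Your write-up needs an independent proof for~(e); the recursion you sketch does not close.
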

\begin{proof}
  For each of Subfigures (a) through (e) of
  \autoref{fig:AllPrimitiveNonOsp} respectively,
  this is shown in \cite[Section 4]{AshlagiG18},
  \autoref{sec:NonOSPTwoWomenSame}, 
  \autoref{sec:NonOspMildExtension}, 
  \autoref{sec:NonOspLastCase}, 
  and \autoref{sec:nIs4CasesProof}.
\end{proof}
\begin{theorem}
  \label{thrm:patternMatchingMainThrm}
  Consider any priority set $q$.
  If no restriction of $q$ is equal to any of the priority sets of
  \autoref{fig:AllPrimitiveNonOsp}, then $q$ is limited cyclic.
\end{theorem}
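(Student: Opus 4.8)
The plan is to prove the statement by induction on the number of applicants $n$, working with $q$ as an arbitrary (not necessarily square) priority set and reading ``restriction'' as the choice of any subset of applicants together with any subset of positions; both \emph{limited cyclic} and the patterns of \autoref{fig:AllPrimitiveNonOsp} make sense in this generality, and every restriction of $q$ again has no sub-restriction equal to a forbidden pattern. If $q$ is acyclic we are done by \autoref{prop:AcyclicEquiv}, since acyclic priorities are limited cyclic with every block of size at most $2$. So assume $q$ is cyclic; the goal is to peel off a first block $S_1$, check that it dominates every remaining applicant, and recurse on the restriction of $q$ to $\agents \setminus S_1$, which still satisfies the hypothesis and has strictly fewer applicants.

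First I would bound the set $T_1$ of applicants that are ranked first by some position. If three distinct applicants were each ranked first by some position, restricting $q$ to those three applicants and three such positions gives a $3 \times 3$ priority set each of whose lists tops a distinct applicant; a short finite check over the (essentially two) ways the remaining two applicants can be ordered within each list shows that every such set is, up to relabeling, pattern (a) or pattern (d) of \autoref{fig:AllPrimitiveNonOsp} --- a contradiction. Hence $|T_1| \le 2$. If $T_1 = \{a\}$, then $a$ is ranked first by every position and so dominates every other applicant; I set $S_1 = \{a\}$ and recurse. If $T_1 = \{a, b\}$, then neither of $a, b$ dominates the other (the dominated one could never be ranked first anywhere), and there are two subcases. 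If $a$ and $b$ both dominate every other applicant, I set $S_1 = \{a, b\}$ and recurse; this is a valid first block of size $2$, for which the two-adjacent-alternating requirement is vacuous.

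The remaining subcase --- $T_1 = \{a, b\}$ but $a$ and $b$ do not both dominate every applicant --- is the crux, and I expect it to be the main obstacle. In this subcase no set of size at most $2$ can serve as the first block (a size-$1$ block forces a unique global top; the only candidate size-$2$ block is $\{a, b\}$, which fails by hypothesis), so if the conclusion is to hold then $S_1$ must have size at least $3$, hence must be two-adjacent-alternating, and I must actually produce it. The tool is the classification of small restrictions: using that no $3$-applicant restriction equals (a)--(d), a finite check (essentially the $n = 3$ case) shows that every \emph{cyclic} $3 \times 3$ restriction is, up to relabeling, the two-adjacent-alternating set of \autoref{fig:ThreeCyclicOsp} --- it is precisely patterns (b) and (c) that this excludes. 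From this I would deduce that there is a unique position $v$ ranking $b$ above $a$, a unique position $u$ ranking $a$ first but not $b$ second, and, letting $a_3$ be the applicant ranked second at $u$, that the same analysis applies recursively to the applicants below $\{a, b\}$ (with $u$ and $v$ swapping roles, mirroring the fact that deleting the top applicant of a two-adjacent-alternating set leaves another one). Iterating this reconstructs the whole two-adjacent-alternating block $S_1 = \{a_1 {=} a, a_2 {=} b, a_3, \dots, a_k\}$ and shows that it dominates $\agents \setminus S_1$; the four-applicant pattern (e) is invoked here to rule out the configurations that $3$-element restrictions alone do not eliminate. Making this reconstruction rigorous --- showing the majority/$u$/$v$ pattern is genuinely \emph{forced}, with exactly one anomalous position at each level of the block --- is where the real work lies.

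Finally, combining $S_1$ with the partition $S_2, \dots, S_L$ produced for the restriction of $q$ to $\agents \setminus S_1$ by the inductive hypothesis yields a limited-cyclic partition of $q$: dominance of $S_1$ over the lower blocks was checked when $S_1$ was extracted, and the within-block conditions hold by construction for $S_1$ and by induction for $S_2, \dots, S_L$. The base case $n \le 2$ is immediate, and the finite checks above dispose of $n = 3$ and (with pattern (e)) $n = 4$ directly, so the induction closes.
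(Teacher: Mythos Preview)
Your approach is correct and shares the same technical core as the paper's proof: both hinge on the classification of all $3$-applicant and $4$-applicant restrictions that avoid the forbidden patterns (the paper packages these as \autoref{lem:3by4} and \autoref{lem:4by4}), and both finish by removing the top applicant of a block and applying induction.

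The organizational difference is that the paper constructs the entire partition $S_1,\ldots,S_L$ upfront, before proving anything about the internal structure of the blocks: it fixes an arbitrary position $i^*$, orders the applicants by $\succ_{i^*}$, and merges any two consecutive applicants that are swapped at some other position. This immediately gives a partition with the required dominance between blocks, and reduces the theorem to the claim that each block of size $\ge 3$ is two-adjacent-alternating, which is then proved by induction on the block size. Your approach instead builds $S_1$ greedily, verifying its two-adjacent-alternating structure as you go, and then recurses on the complement. Both work; the paper's device cleanly separates ``where are the block boundaries'' from ``what is the structure inside a block'', whereas your crux case handles both at once, which is why your sketch there is heavier. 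The one point you flag as the real work --- that the special positions $u$ and $v$ identified via one triple $\{a,b,c\}$ are the \emph{same} $u,v$ that govern every further applicant added to the block --- is exactly what the paper pins down with its four-applicant lemma, and you correctly anticipate that pattern~(e) is what makes this go through.
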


We prove both these theorems in full in \autoref{sec:FormalProof}.
The proof is organized as outlined in \autoref{sec:intuition}
to naturally reveal how this characterization 
(and in particular, the list of priority sets of
\autoref{fig:AllPrimitiveNonOsp}) was originally attained. 
With these results in hand, the proof of our main characterization is easy:

\begin{theorem*}[Main Theorem]
  For any set of priorities $q$,
  the following are equivalent:
  \begin{enumerate}
    \item Deferred acceptance with priorities $q$ is OSP implementable.
    \item $q$ is limited cyclic.
    \item An OSP mechanism for deferred acceptance with priorities 
      $q$ can be constructed from compositions of
      $\TwoTrader$ and $\ThreeTrader$.
    \item No restriction of $q$ is equal, up to relabeling, to any of the
      priority sets exhibited in \autoref{fig:AllPrimitiveNonOsp}.
  \end{enumerate}
\end{theorem*}
\begin{proof}
  \autoref{thrm:Construction} shows that (2) implies (1) and (3).
  (3) implies (1) by definition.
  By \autoref{thrm:figSetsNotOsp} and \autoref{lem:RestrictedPriorities},
  (1) implies (4).
  By \autoref{thrm:patternMatchingMainThrm}, (4) implies (2).
  This proves the theorem.
\end{proof}


\bibliographystyle{alpha}
\bibliography{MechDesign}{}

\begin{thebibliography}{HMRS17}

\bibitem[AG18]{AshlagiG18}
Itai Ashlagi and Yannai~A Gonczarowski.
\newblock Stable matching mechanisms are not obviously strategy-proof.
\newblock {\em Journal of Economic Theory}, 177:405--425, 2018.
\newblock Originally appeared in 2015.

\bibitem[Bad19]{Bade19}
Sophie Bade.
\newblock Matching with single-peaked preferences.
\newblock {\em Journal of Economic Theory}, 180:81--99, 2019.

\bibitem[BG16]{BadeG16}
Sophie Bade and Yannai~A Gonczarowski.
\newblock Gibbard-satterthwaite success stories and obvious strategyproofness.
\newblock In {\em EC}, 2016.

\bibitem[CEPY16]{chen2016manipulability}
Peter Chen, Michael Egesdal, Marek Pycia, and M~Bumin Yenmez.
\newblock Manipulability of stable mechanisms.
\newblock {\em American Economic Journal: Microeconomics}, 8(2):202--14, 2016.

\bibitem[DF81]{DubinsMachiavelliGaleShapley81}
E.~L. Dubins and A.~D. Freedman.
\newblock Machiavelli and the gale-shapley algorithm.
\newblock {\em The American Mathematical Monthly}, 88:485--494, 08 1981.

\bibitem[Erg02]{Ergin02}
Haluk~I. Ergin.
\newblock Efficient resource allocation on the basis of priorities.
\newblock {\em Econometrica}, 70(6):2489--2497, 2002.

\bibitem[GL21]{golowich2021computational}
Louis Golowich and Shengwu Li.
\newblock On the computational properties of obviously strategy-proof
  mechanisms.
\newblock {\em arXiv preprint arXiv:2101.05149}, 2021.

\bibitem[GS62]{GaleS62}
D.~Gale and Lloyd~S. Shapley.
\newblock College admissions and the stability of marriage.
\newblock {\em American Mathematical Monthly}, 69:9--14, 1962.

\bibitem[GS85]{gale1985some}
David Gale and Marilda Sotomayor.
\newblock Some remarks on the stable matching problem.
\newblock {\em Discrete Applied Mathematics}, 11(3):223--232, 1985.

\bibitem[HMRS17]{HassidimMRS17}
Avinatan Hassidim, D{\'e}borah Marciano, Assaf Romm, and Ran~I Shorrer.
\newblock The mechanism is truthful, why aren't you?
\newblock {\em American Economic Review}, 107(5):220--24, 2017.

\bibitem[LD20]{LiD20}
Jiangtao Li and Piotr Dworczak.
\newblock Are simple mechanisms optimal when agents are unsophisticated?
\newblock {\em Available at SSRN 3632971}, 2020.

\bibitem[Li17]{Li17}
Shengwu Li.
\newblock Obviously strategy-proof mechanisms.
\newblock {\em Mathematische Annalen}, 107(11):3257--3287, 2017.
\newblock Originally appeared in 2015.

\bibitem[Mac20]{Mackenzie20}
Andrew Mackenzie.
\newblock A revelation principle for obviously strategy-proof implementation.
\newblock {\em Games and Economic Behavior}, 2020.

\bibitem[MR20]{mandal2020obviously}
Pinaki Mandal and Souvik Roy.
\newblock Obviously strategy-proof implementation of assignment rules: A new
  characterization.
\newblock {\em Preprint}, 2020.

\bibitem[PT19]{PyciaT19}
Marek Pycia and Peter Troyan.
\newblock A theory of simplicity in games and mechanism design.
\newblock {\em CEPR Discussion Paper No. DP14043}, 2019.

\bibitem[RJ18]{Rees-Jones18}
Alex Rees-Jones.
\newblock {Suboptimal behavior in strategy-proof mechanisms: Evidence from the
  residency match}.
\newblock {\em Games and Economic Behavior}, 108(C):317--330, 2018.

\bibitem[Rot82]{Roth82}
Alvin~E. Roth.
\newblock The economics of matching: stability and incentives.
\newblock {\em Mathematics of Operations Research}, 7(4):617--628, 1982.

\bibitem[Rot02]{roth2002economist}
Alvin~E Roth.
\newblock The economist as engineer: Game theory, experimentation, and
  computation as tools for design economics.
\newblock {\em Econometrica}, 70(4):1341--1378, 2002.

\bibitem[TM20]{TroyanM20}
Peter Troyan and Thayer Morrill.
\newblock Obvious manipulations.
\newblock {\em Journal of Economic Theory}, 185:104970, 2020.

\bibitem[Tro19]{Troyan19}
Peter Troyan.
\newblock Obviously strategy-proof implementation of top trading cycles.
\newblock {\em International Economic Review}, 60(3):1249--1261, 2019.

\bibitem[ZL17]{ZhangL17}
Luyao Zhang and Dan Levin.
\newblock Bounded rationality and robust mechanism design: An axiomatic
  approach.
\newblock {\em American Economic Review}, 107(5):235--39, 2017.

\end{thebibliography}

\clearpage 

\appendix

\section{Key technical lemma}
\label{sec:KeyTechnical}

Papers on obvious strategyproofness typically use \autoref{lem:restrictedOsp}
to find a more tractable subdomain in which to prove a social choice
function is not OSP.
Intuitively, the goal is to find a subdomain in which a mechanism can
\emph{never} ask an agent a nontrivial question about their type 
in an OSP manner\footnote{
  This is equivalent to the ``pruning principle'' in \cite{Li17}.
}.
When we prove a fixed priority list is non-OSP,
we always use the following lemma:
\begin{lemma}
  \label{lem:keyTechnical}
  Consider a social choice function $f$ over an environment 
  $E = (Y, \T_1,\ldots,\T_n)$.
  Suppose there is a subdomain $E' = (Y, \T_1', \ldots, \T_n')$ of $E$, 
  where $|\T'_i| \in \{1, 2, 3\}$ for each $i$,
  and there is some $i$ for which $|\T_i'|>1$.
  Suppose further that we have:
  \begin{enumerate}
    \item \label{item:TwoTypesCase}
      For each $i$ where $|\T'_i|=2$: there exists 
      $\succ_i, \succ_i' \in \T'_i$, 
      where $\succ_i \ne \succ_i'$, and $\succ_{-i}, \succ_{-i}'\in \T'_{-i}$
      such that $f(\succ_i', \succ_{-i}') \succ_i f(\succ_i, \succ_{-i})$. 
    \item \label{item:ThreeTypesCase}
      For each $i$ where $|\T'_i|=3$: for every $\succ_i \in \T'_i$, 
      there exists $\succ_i' \in \T'_i$ with $\succ_i'\ne\succ_i$,
      and $\succ_{-i}, \succ_{-i}'\in \T'_{-i}$
      such that $f(\succ_i', \succ_{-i}') \succ_i f(\succ_i, \succ_{-i})$. 
  \end{enumerate}
  Then $f$ is not OSP.
\end{lemma}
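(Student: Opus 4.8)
The plan is a contradiction argument via the standard pruning technique. Suppose $f$ is OSP; by \autoref{lem:restrictedOsp} the restriction $f|_{E'}$ is OSP, so fix an OSP mechanism $\Game'$ over $E'$ that implements it. First I would note that $f|_{E'}$ is not constant: by hypothesis some $i$ has $|\T_i'| \ge 2$, and the corresponding witnessing inequality $f(\succ_i',\succ_{-i}') \succ_i f(\succ_i,\succ_{-i})$ forces $f$ to take two distinct values on $E'$. Consequently $\Game'$ has at least one node $h$ with $\mathtt{Pl}(h) = j$ at which at least two successors carry a nonempty type-set $\T_j'(\cdot)$ --- call such a node \emph{revealing}. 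Every non-revealing node sends all $E'$-profiles into a single successor, so there is a unique revealing node $h$ whose root-to-$h$ path contains no other revealing node; moreover every execution path of a profile in $E'$ passes through $h$, and the type set of $i := \mathtt{Pl}(h)$ is unchanged above $h$, so $\T_i'(h) = \T_i'$. A player with a singleton type set can never produce a revealing node, hence $|\T_i'| \in \{2,3\}$.

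The second step is to locate a type of $i$ that is \emph{isolated} at $h$, meaning a type $\sigma \in \T_i'$ whose successor-child $h_\sigma$ has $\T_i'(h_\sigma) = \{\sigma\}$. If $|\T_i'| = 2$ both successors with nonempty $E'$-type-set are singletons, so either type serves. If $|\T_i'| = 3$, the three types are partitioned among the successors into at least two nonempty parts, so at least one part is a singleton, and I take $\sigma$ from such a part. This is precisely where the two forms of the hypothesis play different roles: for $|\T_i'|=2$ the witnessing inequality need only hold for one designated type, whereas for $|\T_i'|=3$ we cannot control which of the three types $\Game'$ will isolate, so we need the universal quantifier over all types in \autoref{item:ThreeTypesCase}.

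Finally I would unwind \autoref{def:OSP} at $h$ for the type $\sigma$. The hypothesis supplies $\sigma' \ne \sigma$ in $\T_i'$ and profiles $\succ_{-i},\succ_{-i}' \in \T_{-i}'$ with $f(\sigma',\succ_{-i}') \succ_\sigma f(\sigma,\succ_{-i})$. Let $\ell$ be the leaf of $\Game'$ reached by $(\sigma,\succ_{-i})$ and $\ell'$ the leaf reached by $(\sigma',\succ_{-i}')$; both are descendants of $h$, we have $\sigma \in \T_i'(\ell)$, and since $\sigma' \notin \T_i'(h_\sigma)$ the leaf $\ell'$ descends from a different successor of $h$, hence is a descendant of $h$ but not of $h_\sigma$. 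OSP at $h$ with type $\sigma$ then yields $g(\ell) \succeq_i^{\sigma} g(\ell')$, i.e. $f(\sigma,\succ_{-i}) \succeq_\sigma f(\sigma',\succ_{-i}')$, contradicting the strict inequality. Hence no such $\Game'$ exists and $f$ is not OSP.

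I expect the main obstacle to be purely the bookkeeping around the first revealing node $h$ in step one: carefully justifying that it is well-defined, that $\T_i'(h) = \T_i'$, and that every $E'$-profile's execution path actually reaches $h$ (so that $\ell$ and $\ell'$ really are descendants of $h$). Once $h$ is correctly isolated, steps two and three are a short, direct application of the definition of OSP.
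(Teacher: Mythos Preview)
Your proposal is correct and follows essentially the same approach as the paper's proof: both locate the first node $h_0$ at which the acting player's type set is nontrivially split, identify a type that is isolated in its own successor (trivial for $|\T_i'|=2$, pigeonhole for $|\T_i'|=3$), and then exhibit the OSP violation using the witnessing profiles from the hypothesis. Your write-up is slightly more explicit in justifying existence of the revealing node and that $\T_i'(h)=\T_i'$, but the argument is otherwise identical in structure and detail.
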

\begin{proof}
  Suppose we have a mechanism $\Game$ implementing $f|_{E'}$
  over the subdomain $E'$.
  Recall that for any node $h$ of $\Game$, we let $\successors(h)$ denote
  the set of immediate successor nodes of $h$.
  For this proof, let $\descendants(h)$ denote the set of
  descendants of $h$ in the game tree.
  Consider the (unique) earliest node $h_0$ in $\Game$ with at least two
  distinct successors $h_1, h_2 \in \successors(h_0)$
  such that $\T_i(h_1), \T_i(h_2)\ne \emptyset$.
  Let $i = \mathtt{Pl}(h_0)$, and observe that for every $j\ne i$, we have
  $\T_j(h_0) = \T_j'$.

  Suppose that $|\T'_i|=2$, and let $\T'_i = \{\succ_i, \succ_i'\}$,
  and $\succ_{-i}, \succ_{-i}'\in \T'_{-i}$ be as in \autoref{item:TwoTypesCase}.
  The only possible
  nontrivial partition of $\T'_i$ is $\{ \{ \succ_i\}, \{\succ_i'\} \}$,
  and thus we must have $\successors(h_0) = \{ h_1, h_2 \}$, where 
  $\T'_i(h_1) = \{\succ_i \nobreak \}, \T'_i(h_2) = \{\succ_i'\}$.
  In order for $\Game$ to correctly compute $f$, there must be some
  leaf $\ell \in \descendants(h_1)$ with $g(\ell) = f(\succ_i, \succ_{-i})$,
  and some $\ell' \in \descendants(h_2)$ with 
  $g(\ell') = f(\succ_i', \succ_{-i}')$.
  But then we have $g(\ell') \succ_i g(\ell)$, so $\Game$ is not OSP.

  Suppose that $|\T_i'|=3$. Any nontrivial partition of
  $(\T'_i(h))_{h\in\successors(h_0)}$ must put some type $\succ_i$ in its own set
  of the partition. Thus, $\T'_i(h_1) =\{\succ_i\}$ for some
  $h_1\in\successors(h)$.
  For this $\succ_i$, let $\succ_i'\in \T'_i$ and 
  $\succ_{-i}, \succ_{-i}' \in \T'_{-i}$ be as in
  \autoref{item:ThreeTypesCase}.
  Let $h_2 \in \successors(h_0)$ have $\succ_i' \in \T'_i(h_2)$.
  In order for $\Game$ to correctly compute $f$, there must be some
  leaf $\ell \in \descendants(h_1)$ with $g(\ell) = f(\succ_i, \succ_{-i})$,
  and some $\ell' \in \descendants(h_2)$ with 
  $g(\ell') = f(\succ_i', \succ_{-i}')$.
  But then we have $g(\ell') \succ_i g(\ell)$, so $\Game$ is not OSP.

  Thus, in either case $\Game$ is not OSP\footnote{
    This lemma can readily be extended to $4$ or more types per agent, but
    the resulting case analysis becomes more involved. At worst, one needs
    to look at all possible partitions of $\T'_i$, and find two types 
    $\succ_i, \succ_i'$ in different elements of the partition such that
    $f(\succ_i', \succ_{-i}') \succ_i f(\succ_i, \succ_{-i})$
    for some $\succ_{-i}, \succ_{-i}'$.
    It is interesting to note that all concrete proofs of 
    non-OSP implementability that we are aware of use at most 
    $3$ types per agent.
  }, and no mechanism implementing $f|_{E'}$ can be OSP.
  But by \autoref{lem:restrictedOsp}, this means that no mechanism
  implementing $f$ can be OSP either.
\end{proof}

\section{Proof of Theorems \ref{thrm:figSetsNotOsp} and
  \ref{thrm:patternMatchingMainThrm} }
\label{sec:FormalProof}

In this appendix, we complete the proof of our main theorem by showing that
all priorities which are not limited cyclic are non-OSP.
The proof proceeded through a series of case analysis and 
proving that $\DA^q$ is non-OSP (using \autoref{lem:keyTechnical})
for specific instances of $q$ identified by
the case analysis.

\paragraph{Notation:}
For brevity, we typically write sets of priorities of the positions lists
like \\
\begin{tabular}{c|ccc}
  1 & a& b& c \\
  2 & b& c& a \\
  3 & c& a& b
\end{tabular}
instead of the more cluttered 
\begin{tabular}{cccc}
  $1: a\succ b\succ c$ \\
  $2: b\succ c\succ a$ \\
  $3: c\succ a\succ b$
\end{tabular}. \\
We typically write the preference lists of applicants 
like $p_a^i = 1,2,3$, which indicates that applicant $a$ with
type $p_a^i$ has top preference 
for position $1$, followed by $2$ and then $3$.

We often transcribe the run of applicant-proposing deferred acceptance,
both to see what the outcome is and track what facts about priorities
are important for the argument.
We write this like \\
\begin{tabular}{c|c|c|c|c}
  1 & b c & & a & \\
  2 &   &   &   & b \\
  3 & a & c &
\end{tabular}, \\
where the positions are 1,2,3, their row shows the applicants proposing
to them, and the vertical bars separate proposals that
happen in different ``time steps''.
When we apply \autoref{lem:keyTechnical}, we construct a subdomain
of preferences of applicants, and show that each required case (where some
applicant acts at the root node of the mechanism) is satisfied.
For clarity, we show which cases are the ``truth getting a bad
result'' (which computes $\DA^q(\succ_i,\succ_{-i})$ as in
\autoref{lem:keyTechnical})
and which are a ``lie getting a good result'' (which computes
$\DA^q(\succ_i', \succ_{-i}')$).

\subsection{Enumeration for three applicants and positions}
\label{sec:nIs3CasesList}

We start by enumerating all possible priorities of $3$ positions
$\{1,2,3\}$ over $3$ applicants $\{a,b,c\}$, up to relabeling. 
In \autoref{sec:nIs3CasesProof}, we classify all those cases not already
handled by \cite{AshlagiG18} as either OSP or non-OSP.

First, fix the first position's priority as $a \succ b \succ c$ and 
consider the other two positions.
There are 6 cases where the second position also has
priority list $a\succ b\succ c$ (items 1 and 2,
which each contain $3$ distinct cases). 
There are $10 = \binom{5}{2}$ priority sets
where all positions' priority lists are distinct, but it turns
out that there are only four cases up to relabeling (items 3 to 6,
which each contain one case up to relabeling).

\begin{enumerate}
  \item Acyclic cases. These are OSP, as shown in 
    \cite[Section 3]{AshlagiG18} (this is a special case
    of \autoref{thrm:Construction}).
    These are:
    \\
    \begin{tabular}{c|ccc}
      1 & a & b & c \\
      2 & a & b & c \\
      3 & a & b & c
    \end{tabular}
    and
    \begin{tabular}{c|ccc}
      1 & a & b & c \\
      2 & a & b & c \\
      3 & b & a & c
    \end{tabular}
    and
    \begin{tabular}{c|ccc}
      1 & a & b & c \\
      2 & a & b & c \\
      3 & a & c & b
    \end{tabular}.
  \item Cyclic cases where positions 1 and 2 have identical priorities.
    By adapting a proof from \cite[Appendix B]{AshlagiG18},
    we show these are not OSP in \autoref{sec:NonOSPTwoWomenSame}.
    These are: \\
    \begin{tabular}{c|ccc}
      1 & a& b& c \\
      2 & a& b& c \\
      3 & c& a& b
    \end{tabular}
    and
    \begin{tabular}{c|ccc}
      1 & a& b& c \\
      2 & a& b& c \\
      3 & c& b& a
    \end{tabular}
    and
    \begin{tabular}{c|ccc}
      1 & a& b& c \\
      2 & a& b& c \\
      3 & b& c& a
    \end{tabular}.

  \item The ``fully cyclic'' case. 
    This is shown
    to not be OSP in \cite[Section 4]{AshlagiG18}.
    This is: \\
    \begin{tabular}{c|ccc}
      1 & a& b& c \\
      2 & b& c& a \\
      3 & c& a& b
    \end{tabular}.
  \item The cyclic but OSP case.
    These are all equivalent up to relabeling.
    An OSP mechanism for these is constructed in
    \cite[Section 3]{AshlagiG18} (this is a special case
    of \autoref{thrm:Construction}).
    These are: \\
      \begin{tabular}{c|ccc}
        1 & a& b& c \\
        2 & b& a& c \\
        3 & a& c& b
      \end{tabular}
      and
      \begin{tabular}{c|ccc}
        1 & a& b& c \\
        2 & b& a& c \\
        3 & b& c& a
      \end{tabular}
      and
      \begin{tabular}{c|ccc}
        1 & a& b& c \\
        2 & a& c& b \\
        3 & c& a& b
      \end{tabular}.
  \item The case in which two positions have the same
    top priority applicant (but distinct priority lists). 
    These are all equivalent up to relabeling.
    We show these are not OSP in \autoref{sec:NonOspMildExtension}.
    These are: \\
    \begin{tabular}{c|ccc}
      1 & a& b& c \\
      2 & a& c& b \\
      3 & c& b& a
    \end{tabular}
    and
    \begin{tabular}{c|ccc}
      1 & a& b& c \\
      2 & a& c& b \\
      3 & b& c& a
    \end{tabular}
    and
    \begin{tabular}{c|ccc}
      1 & a& b& c \\
      2 & c& b& a \\
      3 & c& a& b
    \end{tabular}.
  \item The case in which all positions have distinct top priority
    applicants (other than the ``fully cyclic'' case).
    These are all equivalent up to relabeling.
    We prove below that this is not OSP
    in \autoref{sec:NonOspLastCase}.
    These are: \\
      \begin{tabular}{c|ccc}
        1 & a& b& c \\
        2 & b& a& c \\
        3 & c& a& b
      \end{tabular}
      and
      \begin{tabular}{c|ccc}
        1 & a& b& c \\
        2 & b& a& c \\
        3 & c& b& a
      \end{tabular}
      and
      \begin{tabular}{c|ccc}
        1 & a& b& c \\
        2 & c& b& a \\
        3 & b& c& a
      \end{tabular}.
\end{enumerate}


\begin{figure}
  \begin{tabular}{ccc}
  \begin{subfigure}{80mm}
    \centering
    \begin{tabular}{c|ccc}
      1 & a& b& c \\
      2 & a& b& c \\
      3 & a& b& c
    \end{tabular}
    \quad
    \begin{tabular}{c|ccc}
      1 & a& b& c \\
      2 & a& b& c \\
      3 & b& a& c
    \end{tabular}
    \quad
    \begin{tabular}{c|ccc}
      1 & a& b& c \\
      2 & a& b& c \\
      3 & a& c& b
    \end{tabular}
    \caption{OSP by acyclicity}
  \end{subfigure}
  & 
  \begin{subfigure}{35mm}
    \centering
    \begin{tabular}{c|ccc}
      1 & a& b& c \\
      2 & a& c& b \\
      3 & b& a& c
    \end{tabular}
    \caption{OSP but cyclic case}
  \end{subfigure}
  \end{tabular}

  \caption{OSP cases for three positions and three applicants.}

  \vspace{0.5in}

  \begin{tabular}{cccc}

  \begin{subfigure}{25mm}
    \centering
    \begin{tabular}{c|ccc}
      1 & a& b& c \\
      2 & b& c& a \\
      3 & c& a& b
    \end{tabular}
    \caption{Non-OSP by \\ \cite[Section 4]{AshlagiG18}}
  \end{subfigure}
   &
   ~\qquad\ %
   &

  \multicolumn{2}{c}{
  \begin{subfigure}{80mm}
    \centering
    \begin{tabular}{c|ccc}
      1 & a& b& c \\
      2 & a& b& c \\
      3 & c& a& b
    \end{tabular}
    \quad
    \begin{tabular}{c|ccc}
      1 & a& b& c \\
      2 & a& b& c \\
      3 & c& b& a
    \end{tabular}
    \quad
    \begin{tabular}{c|ccc}
      1 & a& b& c \\
      2 & a& b& c \\
      3 & b& c& a
    \end{tabular}
    \caption{Non-OSP by~\cite[Appendix B]{AshlagiG18} and~\autoref{sec:NonOSPTwoWomenSame}}
  \end{subfigure}
  }
  \\
  \\

  \begin{subfigure}{25mm}
    \centering
    \begin{tabular}{c|ccc}
      1 & a& b& c \\
      2 & a& c& b \\
      3 & c& b& a
    \end{tabular}
    \caption{Non-OSP by\\\autoref{sec:NonOspMildExtension}}
  \end{subfigure}
  &
  \quad
  &

  \quad
  \begin{subfigure}{25mm}
    \centering
    \begin{tabular}{c|ccc}
      1 & a& b& c \\
      2 & b& a& c \\
      3 & c& b& a
    \end{tabular}
    \caption{Non-OSP by\\\autoref{sec:NonOspLastCase}}
  \end{subfigure}
  &

  \end{tabular}

  \caption{Non-OSP cases for three positions and three applicants.}
  \label{fig:AllPrimitiveNonOspThreeByThree}
\end{figure}

The following lemmas immediately follow from inspection of the above cases.
In \autoref{sec:nIs4CasesList}, we use the conditions of this proposition
to identify which cases for four applicants and positions
contain no restriction which is non-OSP.

\begin{lemma} \label{lemma:allFavorites}
  If there is a a set of priorities of $3$ positions over $3$ applicant,
  where each applicant has top priority at some position,
  then the priorities are in \autoref{fig:AllPrimitiveNonOspThreeByThree} (and
  thus are not OSP).
\end{lemma}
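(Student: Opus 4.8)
The plan is to reduce \autoref{lemma:allFavorites} to the explicit enumeration carried out in \autoref{sec:nIs3CasesList}. The claim asserts that whenever we have a priority set of three positions over three applicants $\{a,b,c\}$ in which each of $a$, $b$, $c$ is the top-priority applicant of at least one position, that priority set already appears in \autoref{fig:AllPrimitiveNonOspThreeByThree}, and hence is not OSP. First I would fix a canonical labeling: since each applicant tops some position, the multiset of top-priority applicants across the three positions is exactly $\{a,b,c\}$, one each; relabel the positions so that position $1$ tops $a$, position $2$ tops $b$, position $3$ tops $c$. Then, up to relabeling, every such priority set falls into one of the enumerated families with all distinct top applicants.

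Next I would observe that in the enumeration of \autoref{sec:nIs3CasesList}, the cases where all three positions have \emph{distinct} top-priority applicants are exactly items 3 and 6: item 3 is the ``fully cyclic'' case $1:abc$, $2:bca$, $3:cab$, and item 6 is the remaining family in which all positions have distinct top applicants, represented by $1:abc$, $2:bac$, $3:cba$ (and its relabelings). The enumeration claims these are the only two such families up to relabeling — this is the combinatorial heart of the argument, and I would justify it by a short direct count: with position $1$'s list pinned to $a\succ b\succ c$, position $2$ must start with $b$ (two choices for its tail: $b\succ a\succ c$ or $b\succ c\succ a$) and position $3$ must start with $c$ (two choices: $c\succ a\succ b$ or $c\succ b\succ a$), giving four raw possibilities, which collapse under relabeling to exactly the two families (the symmetry swapping the roles of the two ``off-diagonal'' structures is what does the collapsing — e.g. $2:bca$, $3:cab$ is the fully cyclic case, while the other three combinations are mutually relabeling-equivalent instances of item 6).

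Then I would cite that item 3 is non-OSP by \cite[Section 4]{AshlagiG18} (Subfigure (a) of \autoref{fig:AllPrimitiveNonOspThreeByThree}) and item 6 is non-OSP by \autoref{sec:NonOspLastCase} (Subfigure (d)). Both appear explicitly in \autoref{fig:AllPrimitiveNonOspThreeByThree}, so the lemma follows. I expect the main obstacle to be purely bookkeeping: making sure the ``up to relabeling'' reductions are stated cleanly and that the four raw $2\times 2$ possibilities are correctly sorted into the two named families without accidentally conflating a case that actually belongs to an acyclic or two-positions-agree family — but since we have \emph{assumed} all top applicants are distinct, the acyclic-with-a-singleton-$S_1$ and the ``positions $1,2$ agree'' cases (items 1, 2, 4, 5) are automatically excluded, which keeps the case split short. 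Since all the non-OSP verifications are deferred to later subsections, the proof here is genuinely just ``inspection of the above cases,'' exactly as the paper states.
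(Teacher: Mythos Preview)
Your proposal is correct and follows essentially the same approach as the paper: the paper simply states that the lemma ``immediately follows from inspection of the above cases,'' and you have spelled out that inspection explicitly by noting that the distinct-tops hypothesis forces the priority set into item~3 or item~6 of the enumeration in \autoref{sec:nIs3CasesList}. Your additional $2\times 2$ count verifying that the four raw possibilities collapse to exactly these two families is a nice concrete justification of what the paper leaves implicit.
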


\begin{lemma} \label{lemma:repeatedPref}
  If there is a set of priorities of $3$ positions over $3$ applicants
  which is cyclic, and in which two priority lists are identical,
  then the priorities are in \autoref{fig:AllPrimitiveNonOspThreeByThree} (and
  thus are not OSP).
\end{lemma}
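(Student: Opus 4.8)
The plan is to reduce the claim to a short finite inspection, exactly as the sentence preceding the lemma anticipates. First I would use the freedom to permute the positions so that the two identical priority lists belong to positions $1$ and $2$, and then relabel the applicants so that this common list is $a \succ b \succ c$. It then remains only to determine, among the six possible priority lists for position $3$, which ones make the resulting priority set cyclic.

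Next I would observe that any cyclic witness must involve position $3$: a cycle consists of applicants $a', b', c'$ and positions $i \ne j$ with $a' \succ_i b' \succ_i c'$ and $c' \succ_j a'$, and positions $1$ and $2$ alone cannot furnish such a witness since they carry the same list (from $a' \succ_1 b' \succ_1 c'$ we get $a' \succ_2 c'$, not $c' \succ_2 a'$). I would then test each of the six candidates for position $3$ against the list $a \succ b \succ c$ of positions $1,2$. For $a \succ b \succ c$, $a \succ c \succ b$, and $b \succ a \succ c$ one checks directly that no cycle arises, so these sets are acyclic --- indeed they are exactly the acyclic cases listed in item~1 of \autoref{sec:nIs3CasesList} --- and are excluded by hypothesis. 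For $b \succ c \succ a$, the triple $b \succ_3 c \succ_3 a$ together with $a \succ_1 b$ is a cycle; for $c \succ a \succ b$, the triple $c \succ_3 a \succ_3 b$ together with $b \succ_1 c$ is a cycle; and for $c \succ b \succ a$, the triple $c \succ_3 b \succ_3 a$ together with $a \succ_1 c$ is a cycle. These three cyclic sets are precisely the three priority sets with a repeated priority list shown to be non-OSP in \autoref{fig:AllPrimitiveNonOspThreeByThree}.

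Finally I would conclude: any cyclic set of priorities of three positions over three applicants in which two priority lists coincide equals, up to relabeling, one of these three sets, each of which appears in \autoref{fig:AllPrimitiveNonOspThreeByThree}, so the lemma follows. I do not expect any real obstacle; the only points needing care are the relabeling bookkeeping (confirming that, after normalizing positions $1,2$ to have list $a \succ b \succ c$, the three surviving cases are literally the figure's entries rather than merely relabelings of them) and the routine cyclicity check on each of the six candidate lists for position $3$.
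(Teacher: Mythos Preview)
Your proposal is correct and is essentially the same approach as the paper's: the paper simply states that this lemma (together with the neighboring ones) ``immediately follow[s] from inspection of the above cases,'' i.e.\ from the enumeration in \autoref{sec:nIs3CasesList}, and your argument spells out exactly that inspection for the repeated-list case. Your normalization and six-case check recovers precisely item~2 of the enumeration, which is subfigure~(b) of \autoref{fig:AllPrimitiveNonOspThreeByThree}.
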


\begin{lemma} \label{lemma:reversedPref}
  If there is a set of priorities of $3$ positions over $3$ applicants
  for which two positions have exactly reversed priority lists,
  then the priority set is in \autoref{fig:AllPrimitiveNonOspThreeByThree} (and
  thus are not OSP).
\end{lemma}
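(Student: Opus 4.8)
The plan is to reduce the statement to a finite inspection by normalizing, up to relabeling of the applicants, the two positions whose priority lists are reverses of one another. On three applicants, a list $x \succ y \succ z$ together with its reverse $z \succ y \succ x$ is determined up to relabeling by which applicant sits in the middle, so I would first relabel so that the two mutually-reversed positions have priorities $a \succ b \succ c$ and $c \succ b \succ a$. This pins down two of the three priority lists, and the remaining position may be any of the $3! = 6$ orderings of $\{a,b,c\}$, giving six cases to check.

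Next I would halve the work by noting that swapping the labels $a$ and $c$ interchanges $a \succ b \succ c$ with $c \succ b \succ a$, hence preserves the two already-fixed lists; so the third list needs to be classified only up to this swap. In each case I would then simply read off that the resulting priority set equals, up to relabeling, one of the three priority sets of \autoref{fig:AllPrimitiveNonOspThreeByThree} that contain a reversed pair of lists: the set with lists $a\succ b\succ c,\ a\succ b\succ c,\ c\succ b\succ a$ (when the third list is $a\succ b\succ c$ or $c\succ b\succ a$), the set with lists $a\succ b\succ c,\ a\succ c\succ b,\ c\succ b\succ a$ (when the third list is $a\succ c\succ b$ or $c\succ a\succ b$), and the set with lists $a\succ b\succ c,\ b\succ a\succ c,\ c\succ b\succ a$ (when the third list is $b\succ a\succ c$ or $b\succ c\succ a$). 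Each of these is not OSP by \autoref{thrm:figSetsNotOsp} (established in \cite[Appendix B]{AshlagiG18}/\autoref{sec:NonOSPTwoWomenSame}, in \autoref{sec:NonOspMildExtension}, and in \autoref{sec:NonOspLastCase} respectively), so together with \autoref{lem:RestrictedPriorities} the ``thus are not OSP'' clause follows as well.

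I do not expect a real obstacle: this is one of the lemmas the text flags as following \emph{immediately from inspection}, and the only content is the relabeling bookkeeping across the six cases, which is exactly the $3\times3$ enumeration already carried out in \autoref{sec:nIs3CasesList}. The single point worth stating cleanly is that ``containing a pair of mutually reversed priority lists'' is a relabeling-invariant property; this is what lets the argument compare against only the three reversed-pair sets of \autoref{fig:AllPrimitiveNonOspThreeByThree} rather than all of its non-OSP sets, and it also explains a posteriori why such a priority set can never coincide with one of the OSP cases of \autoref{sec:nIs3CasesList}, none of which contains a reversed pair.
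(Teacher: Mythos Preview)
Your proposal is correct and is essentially the paper's own approach: the paper states that this lemma ``immediately follow[s] from inspection of the above cases,'' and your argument is precisely that inspection, organized by first normalizing the reversed pair and then running through the six choices for the third list. The only superfluous step is the appeal to \autoref{lem:RestrictedPriorities}, which is not needed here since the priority set in question \emph{is} (up to relabeling) one of the sets in \autoref{fig:AllPrimitiveNonOspThreeByThree}, not merely a restriction of one.
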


\begin{lemma} \label{lemma:notTransitive}
  If a set of priorities of $3$ positions of $3$ positions over $3$
  applicants not in \autoref{fig:AllPrimitiveNonOspThreeByThree} (i.e. 
  if the priorities are OSP), then there is a pair of applicants $a, c$
  such that $a$ has higher priority than $c$ at every position.
\end{lemma}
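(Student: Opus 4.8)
The plan is to prove the contrapositive: assuming that \emph{no} applicant has higher priority than another applicant at every one of the three positions, I will show the priority set occurs in \autoref{fig:AllPrimitiveNonOspThreeByThree} (hence is not OSP). So I assume that for every ordered pair of applicants $(x,y)$ there is at least one position which ranks $y$ above $x$.

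First I would case on whether every applicant is the top priority of some position. If so, \autoref{lemma:allFavorites} immediately places the priorities in \autoref{fig:AllPrimitiveNonOspThreeByThree} and we are done. Otherwise, some applicant is never a top priority; relabel the applicants so that this one is $c$. Then at each of the three positions $c$ is ranked either second or last. If $c$ were ranked last at all three positions, then every position's list would be $a \succ b \succ c$ or $b \succ a \succ c$, so $a$ would have higher priority than $c$ at every position, contradicting the standing assumption. Hence at some position $c$ is ranked second; after relabeling $a$ and $b$ if necessary, that position has priority list $a \succ c \succ b$.

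Finally, since $a$ does not have higher priority than $c$ at every position, there is a position ranking $c$ above $a$, and because $c$ is never a top priority that position's list must be $b \succ c \succ a$, which is exactly the reverse of $a \succ c \succ b$. Now \autoref{lemma:reversedPref} puts the priorities in \autoref{fig:AllPrimitiveNonOspThreeByThree}, which finishes the analysis. I do not expect a real obstacle here: the only points needing care are that the two relabeling steps are harmless (the property ``some applicant dominates another at every position'' is invariant under permuting applicant labels) and that the ``every applicant is someone's top priority'' split is exhaustive. A quicker but less illuminating alternative is to skip the contrapositive entirely and simply check, one by one, the four OSP priority sets for three positions and three applicants listed in items (1) and (4) of \autoref{sec:nIs3CasesList} (equivalently, \autoref{fig:AllPrimitiveNonOspThreeByThree}(a)--(b)); each visibly has a dominating pair (e.g.\ $(a,b)$ when the differing list is $a\succ b\succ c$ or $a\succ c\succ b$, and $(a,c)$ otherwise), and this property is relabeling-invariant.
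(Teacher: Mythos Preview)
Your argument is correct. The paper itself does not spell out a proof of this lemma; it states that \autoref{lemma:allFavorites} through \autoref{lemma:notTransitive} ``immediately follow from inspection of the above cases,'' i.e., it relies on exactly the direct check you describe as your ``quicker but less illuminating alternative.''

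Your primary argument is genuinely different: rather than inspecting the four OSP priority sets one by one, you run a contrapositive case analysis that invokes the already-established \autoref{lemma:allFavorites} and \autoref{lemma:reversedPref}. This buys you a proof that does not require the reader to stare at the enumerated list, and it makes transparent \emph{why} the lemma holds (the absence of a dominating pair forces either three distinct top applicants or two mutually reversed lists). The paper's inspection route is shorter but depends on having the full enumeration in front of you. One small cosmetic point: in your parenthetical check of the alternative, the phrase ``the differing list'' is slightly ambiguous since the cyclic OSP case has all three lists distinct; it would be cleaner to just record the dominating pair for each of the four cases explicitly, but this does not affect correctness.
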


\subsection{Cases with three applicants and positions}
\label{sec:nIs3CasesProof}

In this subsection, we prove that Subfigures~(b) through~(d) of
\autoref{fig:AllPrimitiveNonOspThreeByThree} are non-OSP.
This completes the classification of which priorities are OSP
for three applicants and three positions.

\subsubsection{The cases where two priorities are identical}
\label{sec:NonOSPTwoWomenSame}

Appendix B of \cite{AshlagiG18} shows that priorities \\
\begin{tabular}{c|ccc}
  1 & a & b & c \\
  2 & a & b & c \\
  3 & c & a & b
\end{tabular} \\
are non-OSP. We re-prove this, and point out
that the proof as given (with the same applicants' preferences)
does not require anything about positions $3$'s priority
over applicant $b$.


\begin{claim}
  \label{claim:NonOspTwoWomenSame}
  Any priorities such that 
  \begin{align*}
    1: &\ a \succ b \succ c  \\
    2: &\ a \succ b \succ c  \\
    3: &\ c \succ a         
  \end{align*}
  are not OSP.
\end{claim}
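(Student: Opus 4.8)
The strategy is to invoke \autoref{lem:keyTechnical} with a two-type subdomain for one well-chosen applicant (here, applicant $a$) and singleton type-sets for the others, exhibiting a ``truth-bad / lie-good'' pair of profiles. First I would fix a small set of applicant preference profiles. Set $\T_c' = \{p_c\}$ with $p_c = 3,1,2$ and $\T_b' = \{p_b\}$ with $p_b = 1,2,3$ (or whatever single list makes the runs of DA go through the relevant conflicts). Then give applicant $a$ two types, $\succ_a = 1,2,3$ and $\succ_a' = 2,1,3$, so that $|\T_a'| = 2$. Since every other agent has exactly one type, the unique earliest branching node of any implementing mechanism must be a node where $a$ acts and splits $\{\succ_a\}$ from $\{\succ_a'\}$; by \autoref{lem:keyTechnical} it then suffices to find $\succ_{-a}, \succ_{-a}'\in\T_{-a}'$ (which here are forced, as $\T_{-a}'$ is a singleton) with $\DA^q(\succ_a', \succ_{-a}') \succ_a \DA^q(\succ_a, \succ_{-a})$.

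The key computation is to transcribe the two runs of applicant-proposing DA. When $a$ reports $\succ_a = 1,2,3$: $a$ proposes to $1$, $b$ proposes to $1$, $c$ proposes to $3$; $a$ beats $b$ at position $1$ (priority $a \succ_1 b$), so $b$ is rejected and proposes to $2$, where $b$ is alone; $c$ is alone at $3$. This yields $a \mapsto 1$. When $a$ reports $\succ_a' = 2,1,3$: $a$ proposes to $2$, $b$ proposes to $1$, $c$ proposes to $3$; at position $2$, $a$ beats $b$? No — $b$ is not there. Let me instead choose $p_b$ and $p_c$ so that in the lie-run $a$ gets bumped from $2$ down to $3$, ending at $a \mapsto 3$, which is worse for $a$ under $\succ_a$ but must be better under... — wait, I need the lie to be *good* under $\succ_a$, the truth's type. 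So I actually want: under $\succ_a$, truthful report gives $a$ something bad (say position $3$ or $2$), and the lie $\succ_a'$ gives $a$ position $1$, which $\succ_a$ ranks first. So I would instead pick $\succ_a = 2, 1, 3$ as the ``truth'' and $\succ_a' = 1, 2, 3$ as the ``lie'', with $p_c = 3,1,2$ forcing $c$ into position $1$ after $c$'s rejection, and $p_b$ chosen so that when $a$ truthfully proposes to $2$ first, a chain of rejections pushes $a$ all the way down, while when $a$ ``lies'' and proposes to $1$ first, $a$ clinches $1$ before $c$ can displace $a$ there (using that $3$'s priority has $c$ on top but, crucially, $a$'s own move to $1$ happens first, and $a \succ_1 c$ is not even needed — what's needed is just the resolution of the proposal order). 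The point Ashlagi–Gonczarowski make, which I would preserve, is that nothing about position $3$'s priority over $b$ is used: the only priority facts entering are $a \succ_1 b$, $a \succ_2 b$, $c \succ_1 a$... — actually since position $3$ has $c$ on top of $a$ and nothing about $b$, the runs only ever compare $a, b, c$ at positions $1, 2$ and $c$ vs. $a$ at $3$, so the claim's hypothesis (only $1: a\succ b\succ c$, $2: a\succ b\succ c$, $3: c\succ a$) is exactly what's consumed.

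The main obstacle is bookkeeping: getting the two DA runs to produce outcomes that are strictly comparable under $\succ_a$ in the right direction, while using \emph{only} the stated priority relations and keeping all non-$a$ agents at a single type each (so that the earliest-branching-node argument of \autoref{lem:keyTechnical} applies cleanly with $|\T_a'| = 2$). Concretely, I expect to need $b$ and $c$ each to be the ``pivot'' that, depending on where $a$ first proposes, either does or does not end up colliding with $a$; the candidate $p_b = 1, 2, 3$ and $p_c = 3, 1, 2$ (so $c$ prefers $3$, then $1$) look right, since then in the lie-run ($a$ to $1$ first) $a$ is entrenched at $1$ and $c$ stays at $3$, giving $a \mapsto 1$, whereas in the truth-run ($a$ to $2$ first) $b$ is displaced from $1$ by... no one, so I may instead need $p_c = 1, 3, 2$ so $c$ hits $1$ immediately, bumps $b$ to $2$, $b$ bumps $a$ from $2$ to ... $1$? where $c$ already sits and $a \succ_1 c$ fails — so $a$ is rejected and proposes to $3$, landing $a \mapsto 3$. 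That gives truth $\to 3$, lie $\to 1$, and $1 \succ_a 3$ under $\succ_a = 2, 1, 3$. I would then double-check this run only ever invokes $c \succ_1 a$ (i.e. $c$ has higher priority than $a$ at position $1$) — but the claim's hypothesis says $1: a \succ b \succ c$, contradicting that. So the final version will instead route the conflict so that the truth-run rejection of $a$ happens at position $2$ via $b$ (using $a \succ_2 b$, which holds), and the displacement of $b$ is caused by $c \succ_? b$ — here I must use that position $3$ has $c$ above $a$ but the claim says nothing about $b$ at $3$, so I cannot rely on $c \succ_3 b$ either. The resolution, which I am confident exists because \cite{AshlagiG18} proved it, is that $c$ never needs to displace anyone at a position where it matters: $c$'s role is just to occupy position $3$, and the rejection chain among $a, b$ at positions $1$ and $2$ (together with $a$'s two possible first proposals) suffices. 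So the real work of the proof is to pin down this $p_b$ and $p_c$ and write out both three-line DA transcripts using the notation introduced in the Notation paragraph, then observe the outcome comparison $\DA^q(\succ_a',\succ_{-a}') \succ_a \DA^q(\succ_a,\succ_{-a})$ and cite \autoref{lem:keyTechnical}.
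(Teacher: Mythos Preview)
Your proposal has a fundamental gap that makes the approach unworkable in principle, not just a matter of bookkeeping.

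You propose $|\T_b'| = |\T_c'| = 1$ and $|\T_a'| = 2$. With singleton type-sets for $b$ and $c$, the set $\T_{-a}'$ is itself a singleton, so $\succ_{-a} = \succ_{-a}'$ is forced. The condition from \autoref{lem:keyTechnical} then becomes: find $\succ_a \ne \succ_a'$ with
\[
  \DA^q(\succ_a', \succ_{-a}) \ \succ_a \ \DA^q(\succ_a, \succ_{-a}).
\]
But this is exactly a violation of strategyproofness for applicant $a$, and $\DA^q$ is strategyproof. Hence no such pair can exist, regardless of how cleverly you choose $p_b$ and $p_c$. This is why your attempts to route the rejection chain keep failing: every time $a$'s ``lie'' produces a better position for $a$, it is only because the \emph{other} applicants' reports changed too.

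The paper's proof therefore necessarily gives \emph{multiple} applicants nontrivial type-sets (two types each for $a$ and $b$, three for $c$) and, in every case of \autoref{lem:keyTechnical}, uses $\succ_{-i} \ne \succ_{-i}'$. That is, the ``truth getting a bad result'' and ``lie getting a good result'' runs differ in the reports of the non-acting agents. This also means one must check the case where $b$ acts at the root and the three subcases where $c$ acts at the root, not just $a$; the case analysis is unavoidable. The upshot is that the lemma is being used to rule out \emph{any} agent making the first nontrivial move, and for each such agent the witnessing profiles exploit variation in the others' types.

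Your instinct that position $3$'s priority over $b$ is never used is correct and is exactly what the paper verifies, but to see it you must carry out the full multi-agent subdomain construction.
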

\begin{proof}

We apply \autoref{lem:keyTechnical} to the
the subdomain in which applicants may have the following
preferences:
\begin{align*}
  & p_a^1 = 3,1,2 && p_b^1 = 1,2,3 && p_c^1 = 1,2,3 \\
  & p_a^2 = 3,2,1 && p_b^2 = 2,1,3 && p_c^2 = 1,3,2 \\
  &               &&               && p_c^3 = 2,3,1
\end{align*}

\begin{itemize}
  \item $a$ acts at the root node.

    ``truth getting a bad result'':
    $p_a^2, p_b^1, p_c^2$:
    \begin{tabular}{c|c|c|c}
      1 & b c & \\
      2 &   &   & a \\
      3 & a & c &
    \end{tabular}.
    \\ This works any time $b\succ_1 c$ and $c\succ_3 a$.

    ``lie getting a good result'': 
    $p_a^1, p_b^2, p_c^2$: 
    \begin{tabular}{c|c}
      1 & c \\
      2 & b \\
      3 & a
    \end{tabular}.
    \\ This works regardless of the priorities.

  \item $b$ acts at the root node.

    ``truth getting a bad result'':
    $p_a^1, p_b^1, p_c^2$:
    \begin{tabular}{c|c|c|c|c}
      1 & b c & & a & \\
      2 &   &   &   & b \\
      3 & a & c &
    \end{tabular}.
    \\ This works any time $b\succ_1 c,\ c\succ_3 a,\ a\succ_1 b$.

    ``lie getting a good result'':
    $p_a^2, p_b^2, p_c^3$:
    \begin{tabular}{c|c|c|c|c}
      1 &  &    &   & b\\
      2 & b c & & a \\
      3 & a & c &
    \end{tabular}.
    \\ This works any time $b\succ_2 c,\ c\succ_3 a,\ a\succ_2 b$.

  \item $c$ acts at the root node.
    There are three subcases, depending on which preference
    of $c$ is ``singled out''.

    \begin{itemize}
      \item $p_c^1$ singled out.

        ``truth getting a bad result'':
        $p_a^2, p_b^1, p_c^1$
        \begin{tabular}{c|c|c}
          1 & b c &  \\
          2 &   & c  \\
          3 & a &  
        \end{tabular}.
        \\ This needs $b\succ_1 c$.

        ``lie getting a good result'':
        $p_a^2, p_b^2, p_c^2$:
        \begin{tabular}{c|c}
          1 & c \\
          2 & b \\
          3 & a
        \end{tabular}.
        \\ This works regardless.

      \item $p_c^2$ singled out.
        This is essentially the same as the previous case,
        and again we only need $b\succ_1 c$ for the case to work.

      \item $p_c^3$ singled out.

        ``truth getting a bad result'':
        $p_a^1, p_b^2, p_c^3$
        \begin{tabular}{c|c|c|c}
          1 &     & & a \\
          2 & b c & &  \\
          3 & a & c &
        \end{tabular}.
        \\ This needs $b\succ_2 c$.

        ``lie getting a good result'':
        $p_a^1, p_b^1, p_c^1$:
        \begin{tabular}{c|c|c}
          1 & b c & \\
          2 &     & c \\
          3 & a
        \end{tabular}.
        \\ This needs $b\succ_1 c$.
    \end{itemize}
\end{itemize}

Note that nowhere in the above case analysis did we require anything
about $3$'s preference over $b$ (indeed, in the runs of DA in the argument,
$b$ never proposes to $3$).

\end{proof}


\subsubsection{The case where two priorities have the same top applicant}
\label{sec:NonOspMildExtension}

Consider the argument in \autoref{sec:NonOSPTwoWomenSame},
which details the facts about the position's priorities are
necessary for the argument to go through.
Note that $2$'s priority over $b$ and $c$ doesn't effect the
argument too much -- only the ``lie'' case of ``$b$ acts at the root''
and the ``truth'' case of ``$p_c^3$ singled out''
required that $b \succ_2 c$.
We'd like to use this observation to additionally prove that
priorities
\\
\begin{tabular}{c|ccc}
  1 & a& b& c \\
  2 & a& c& b \\
  3 & c& b& a
\end{tabular}
\\
are not OSP-implementable. As it turns out,
we will need one additional preference for $b$ (namely $p_b^3$).
\begin{claim}
  Priorities
  \begin{align*}
    1: &\ a \succ b \succ c  \\
    2: &\ a \succ c \succ b  \\
    3: &\ c \succ b \succ a        
  \end{align*}
  are not OSP.
\end{claim}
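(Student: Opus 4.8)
The plan is to apply \autoref{lem:keyTechnical} exactly as in the proof of \autoref{claim:NonOspTwoWomenSame}, reusing that argument's subdomain almost verbatim and patching only the places where the new priority list for position $2$ (now $a \succ c \succ b$ instead of $a \succ b \succ c$) breaks the old reasoning. First I would recall which facts about the positions' priorities the earlier argument actually used: the cases for ``$a$ acts at the root'' and ``$c$ acts at the root'' needed only $b \succ_1 c$ and $c \succ_3 a$, both of which still hold here ($1: a\succ b\succ c$ gives $b\succ_1 c$, and $3: c\succ b\succ a$ gives $c\succ_3 a$); the only places that needed $b \succ_2 c$ were the ``lie getting a good result'' case of ``$b$ acts at the root'' and the ``truth getting a bad result'' case of ``$p_c^3$ singled out''. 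So the entire burden is to re-handle those two sub-cases under the reversed ordering $c \succ_2 b$.

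Next I would fix up the ``$p_c^3$ singled out'' sub-case. The old ``lie'' half there used $p_a^1, p_b^1, p_c^1$ and needed only $b\succ_1 c$, so it survives unchanged; only the ``truth'' half needs replacing. I would look for preferences in the (possibly slightly enlarged) subdomain so that running DA yields $c$ matched to something worse than position $2$, using the surviving priority facts $b\succ_1 c$, $c\succ_3 a$, $c\succ_2 b$. A natural candidate keeps $a$ heading to $3$ (via some $p_a$ with top choice $3$), has $c$'s singled-out type $p_c^3 = 2,3,1$ proposing to $2$, and arranges that $a$ gets bumped off $3$ so it claims $2$ (where $a\succ_2 c$), leaving $c$ to land on $3$ or $1$. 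The analogous manoeuvre from the original proof, transcribed for $2: a\succ c\succ b$, should work; I would just double-check the DA run.

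Then I would handle ``$b$ acts at the root.'' Here the ``truth getting a bad result'' half used $p_a^1, p_b^1, p_c^2$ and needed $b\succ_1 c,\ c\succ_3 a,\ a\succ_1 b$ — all still true — so it is unchanged. For the ``lie getting a good result'' half, the old choice $p_a^2, p_b^2, p_c^3$ relied on $b\succ_2 c$, which now fails, so I need a new witness: a profile where $b$, by misreporting, obtains a position it cannot get truthfully. The excerpt's own remark flags that ``we will need one additional preference for $b$ (namely $p_b^3$)'', so I expect to add a type like $p_b^3$ to $\T_b'$ and use it (together with suitable $p_a$, $p_c$ already in the subdomain) to exhibit $\DA^q(p_b^3, \succ_{-b}) \succ_b^{p_b^3 \text{ or }p_b^?} \DA^q(p_b^?, \succ_{-b}')$; note that with $|\T_b'|$ now possibly $3$, I must satisfy the stronger \autoref{item:ThreeTypesCase} condition — for \emph{every} type of $b$ there is a beneficial misreport — so I would verify the requisite pair of profiles for $p_b^1$, $p_b^2$, and $p_b^3$ each, exploiting the cyclic pattern ($c\succ_2 b$ and $b\succ_1 c$ let $b$ and $c$ ``fight'' over positions $1$ and $2$).

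The main obstacle I anticipate is precisely this last point: with three types for $b$, \autoref{lem:keyTechnical}'s three-type clause demands a beneficial deviation from \emph{each} of $b$'s types, so I must find a coherent trio of DA runs (all within one small subdomain, and consistent with the fixed subdomain choices that the $a$- and $c$-root cases also need) rather than the single witness that sufficed before. Getting the added type $p_b^3$ right — so that it both plays the ``good result'' role for some other $b$-type and itself admits a beneficial deviation — is the delicate bookkeeping step; everything else is a near-mechanical port of \autoref{claim:NonOspTwoWomenSame} with $b\succ_2 c$ replaced by $c\succ_2 b$ and $a\succ_2 c$ used where needed.
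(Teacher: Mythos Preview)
Your proposal is correct and follows essentially the same approach as the paper: reuse the subdomain from \autoref{claim:NonOspTwoWomenSame}, add the single extra type $p_b^3 = 2,3,1$ so that $|\T_b'|=3$, keep the $a$-root case unchanged, redo the three $b$-root sub-cases (now required by \autoref{item:ThreeTypesCase}) using $c\succ_2 b$, and patch the ``truth'' half of the $p_c^3$-singled-out sub-case using $p_b^3$ together with $c\succ_2 b$, $b\succ_3 a$, and $a\succ_2 c$. The paper carries out exactly these steps with explicit DA runs; your identification of which sub-cases survive and which need repair, and why, matches the paper's argument.
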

We apply \autoref{lem:keyTechnical} to the
the subdomain in which applicants may have the following
preferences:
\begin{align*}
  p_a^1 = 3,1,2 && p_b^1 = 1,2,3 && p_c^1 = 1,2,3 \\
  p_a^2 = 3,2,1 && p_b^2 = 2,1,3 && p_c^2 = 1,3,2 \\
  && p_b^3 = 2,3,1 && p_c^3 = 2,3,1 \\
\end{align*}
Note that all preferences from \autoref{sec:NonOSPTwoWomenSame} are still
in the subdomain -- we merely add $p_b^3$.
We describe how the argument from \autoref{sec:NonOSPTwoWomenSame}
must change.

\begin{itemize}
  \item $a$ acts at the root node. This argument works as before:
    $a$'s preferences are the same, and $2$'s priority list was irrelevant for
    this case. For this to work requires only that $b\succ_1 c$ and $c\succ_3 a$.

  \item $b$ acts at the root node. This argument now has three subcases.

    \begin{itemize}
      \item $p_b^1$ singled out. This is close to the original
        case of $b$ being at the root, but needs modification
        (it actually gets a bit easier to satisfy).

        ``truth getting a bad result'': (as before)
        $p_a^1, p_b^1, p_c^2$:
        \begin{tabular}{c|c|c|c|c}
          1 & b c & & a & \\
          2 &   &   &   & b \\
          3 & a & c &
        \end{tabular}.
        \\ This works any time $b\succ_1 c,\ c\succ_3 a,\ a\succ_1 b$.

        ``lie getting a good result'':
        $p_a^1, p_b^2, p_c^3$:
        \begin{tabular}{c|c|c|c|c}
          1 &     & b \\
          2 & b c & \\
          3 & a & 
        \end{tabular}.
        \\ This requires $c\succ_2 b$ (as is indeed now the case).

      \item $p_b^2$ singled out.

        ``truth getting a bad result'':
        $p_a^1, p_b^2, p_c^3$:
        \begin{tabular}{c|c|c|c|c}
          1 &     & b \\
          2 & b c & \\
          3 & a &
        \end{tabular}.
        \\ This requires $c\succ_2 b$.

        ``lie getting a good result'':
        $p_a^1, p_b^3, p_c^2$:
        \begin{tabular}{c|c}
          1 & c  \\
          2 & b \\
          3 & a
        \end{tabular}.
        \\ This works regardless of priorities.

      \item $p_b^3$ singled out.

        ``truth getting a bad result'':
        $p_a^1, p_b^3, p_c^3$:
        \begin{tabular}{c|c|c|c}
          1 &     & & a \\
          2 & b c & \\
          3 & a & b &
        \end{tabular}.
        \\ This requires that $c\succ_2 b$.

        ``lie getting a good result'':
        (as the previous subcase):
        $p_a^1, p_b^2, p_c^2$:
        \begin{tabular}{c|c}
          1 & c  \\
          2 & b \\
          3 & a
        \end{tabular}.
        \\ This works regardless of priorities.

    \end{itemize}

  \item $c$ acts at the root node. One of the subcases changes.

    \begin{itemize}
      \item $p_c^1$ singled out.
        This case still holds as before. 
        This requires $b\succ_1 c$.

      \item $p_c^2$ singled out.
        This case still holds as before. 
        This requires $b\succ_1 c$.

      \item $p_c^3$ singled out. This subcase needs different
        logic because the original argument depended
        on $b \succ_2 c$ (indeed, fixing this subcase was the reason we need
        to add the preference $p_b^3$ to the analysis).

        ``truth getting a bad result'':
        $p_a^2, p_b^3, p_c^3$
        \begin{tabular}{c|c|c|c|c|c}
          1 &     & &   &   & b \\
          2 & b c & & a &   &   \\
          3 & a & b &   & c &
        \end{tabular}.
        \\ This needs $c\succ_2 b, b\succ_3 a, a\succ_2 c$.

        ``lie getting a good result'': (as before)
        $p_a^1, p_b^1, p_c^1$:
        \begin{tabular}{c|c|c}
          1 & b c & \\
          2 &     & c \\
          3 & a
        \end{tabular}.
        \\ This needs $b\succ_1 c$.
    \end{itemize}
\end{itemize}


\subsubsection{The case where all priorities have distinct top applicants}
\label{sec:NonOspLastCase}

We now consider the final case with three positions and three applicants,
namely
\\
\begin{tabular}{c|ccc}
  1 & a& b& c \\
  2 & b& a& c \\
  3 & c& a& b
\end{tabular}.
\\
This has a bit different flavor because every applicant has first priority
at some position.
\begin{claim}
  Priorities
  \begin{align*}
    1: &\ a \succ b \succ c & \\
    2: &\ b \succ a \succ c & \\
    3: &\ c \succ a \succ b &
  \end{align*}
  are not OSP.
\end{claim}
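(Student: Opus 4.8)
The plan is to invoke \autoref{lem:keyTechnical}, applied to a small subdomain of the applicants' preferences. Restrict each applicant to two preference lists,
\begin{align*}
  & p_a^1 = 2,3,1 && p_b^1 = 3,2,1 && p_c^1 = 2,3,1 \\
  & p_a^2 = 3,2,1 && p_b^2 = 3,1,2 && p_c^2 = 2,1,3
\end{align*}
so that $|\T_i'| = 2$ for each $i \in \{a,b,c\}$. By \autoref{lem:keyTechnical} it then suffices, for each applicant, to exhibit one ordered pair of its two types together with two completing profiles that realize a strict improvement from the ``truth'' to the ``lie''. So the concrete task is to transcribe six short runs of applicant-proposing deferred acceptance.

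For $a$, with pair $(\succ_a,\succ_a') = (p_a^1, p_a^2)$: the profile $(p_a^1, p_b^1, p_c^1)$ triggers a cascade of displacements ($c$ is bumped from $2$ by $a$; $c$ moves to $3$ and bumps $b$; $b$ moves to $2$ and bumps $a$; $a$ moves to $3$ and is bumped by $c$), leaving $a$ at its last choice, position $1$; whereas $(p_a^2, p_b^2, p_c^2)$ has $a$ and $b$ collide at position $3$, where $a \succ_3 b$, so $b$ is ejected to $1$ while $a$ keeps $3$ --- and $3 \succ_{p_a^1} 1$. For $b$, with pair $(\succ_b,\succ_b') = (p_b^1, p_b^2)$: reuse the same cascade profile $(p_b^1, p_a^1, p_c^1)$, in which $b$ ends at position $2$, versus $(p_b^2, p_a^1, p_c^2)$, in which $a$ takes $2$ and ejects $c$ (since $a \succ_2 c$) down to $1$, leaving $b$ undisturbed at $3$ --- and $3 \succ_{p_b^1} 2$. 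For $c$, with pair $(\succ_c,\succ_c') = (p_c^1, p_c^2)$: the profile $(p_c^1, p_a^1, p_b^2)$ has $a$ eject $c$ from $2$, after which $c$ moves to $3$ (ejecting $b$ to $1$) and ends at $3$, versus $(p_c^2, p_a^2, p_b^2)$, in which $a$ and $b$ collide at $3$ (so neither touches position $2$) and $c$ keeps $2$ --- and $2 \succ_{p_c^1} 3$. As in the earlier subsections, each run invokes only a handful of priority comparisons ($a\succ_2 c$, $c\succ_3 a\succ_3 b$, $a\succ_1 b$, $b\succ_2 a$), all of which hold for the stated priorities; so \autoref{lem:keyTechnical} applies and $\DA^q$ is not OSP, which is the claim.

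The step I expect to be the main obstacle is designing a subdomain that works for all three applicants simultaneously. As the surrounding text flags, each applicant has top priority at some position ($a$ at $1$, $b$ at $2$, $c$ at $3$), so that position is a safe harbor the applicant can always secure; hence in every applicant's ordered pair the ``truth'' type must not list its own safe harbor first, and one must nonetheless be able to drive that applicant off each position it ranks above the safe harbor --- which happens only if the other applicants' types are arranged so that some higher-priority applicant actually does the displacing. The subdomain above threads this needle by using a single chain of displacements anchored at position $3$ (where $c\succ_3 a\succ_3 b$) as the common ``bad'' scenario for both $a$ and $b$, and by exploiting the $a$-versus-$b$ collision at $3$ to manufacture the ``good'' scenarios; once the right six lists are fixed, everything reduces to the short deferred-acceptance computations described above.
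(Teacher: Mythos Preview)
Your proof is correct. All six deferred-acceptance runs check out: with $(p_a^1,p_b^1,p_c^1)$ applicant $a$ is driven down to position $1$ and $b$ to position $2$; with $(p_a^2,p_b^2,p_c^2)$ applicant $a$ keeps $3$ and $c$ keeps $2$; with $(p_a^1,p_b^2,p_c^2)$ applicant $b$ keeps $3$; and with $(p_a^1,p_b^2,p_c^1)$ applicant $c$ ends at $3$. Each required comparison $f(\succ_i',\succ_{-i}') \succ_i^{\,\succ_i} f(\succ_i,\succ_{-i})$ holds, so \autoref{lem:keyTechnical} applies. (One small bookkeeping slip: the comparison $a \succ_1 b$ that you list is never actually invoked in any of the runs; the comparisons that are used are $a \succ_2 c$, $b \succ_2 a$, and $c \succ_3 a \succ_3 b$.)

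The overall strategy is the same as the paper's---restrict to a small subdomain and invoke \autoref{lem:keyTechnical}---but your choice of subdomain is different and more economical. The paper uses three types for $a$ and three for $c$ (so the case where $a$ or $c$ moves at the root splits into three subcases, one per singled-out type), whereas you get away with exactly two types for every applicant. The trick that makes this possible is reusing the single long cascade at profile $(p_a^1,p_b^1,p_c^1)$ as the ``truth getting a bad result'' for both $a$ and $b$ simultaneously, and reusing $(p_a^2,p_b^2,p_c^2)$ as the ``lie getting a good result'' for both $a$ and $c$. The paper's larger subdomain does not buy any extra generality here; your argument is simply tighter.
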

\begin{proof}

We apply \autoref{lem:keyTechnical} to the
the subdomain in which applicants may have the following
preferences:
\begin{align*}
   p_a^1 = 2,1,3 && p_b^1 = 3,2,1 && p_c^1 = 2,3,1 \\
   p_a^2 = 3,1,2 && p_b^2 = 1,3,2 && p_c^2 = 1,2,3 \\
   p_a^3 = 3,2,1 &&               && p_c^3 = 1,3,2 \\
\end{align*}

\begin{itemize}

  \item $a$ acts at the root node.
  \begin{itemize}
    \item $p_a^1$ singled out.

      ``truth getting a bad result'': $p_a^1, p_b^1, p_c^1$:
          \begin{tabular}{c|c|c|c|c}
            1 &     &  &   & a \\
            2 & a c &  & b &    \\
            3 & b   & c &
          \end{tabular}.

      ``lie getting a good result'': $p_a^3, p_b^2, p_c^3$:
          \begin{tabular}{c|c|c|c}
            1 & b c &   & \\
            2 &     &   & a \\
            3 & a   & c &
          \end{tabular}.

    \item $p_a^2$ singled out.

      ``truth getting a bad result'': $p_a^2, p_b^2, p_c^3$:
          \begin{tabular}{c|c|c|c|c|c}
            1 & b c &  & a && \\
            2 &     &  &   &   & b \\
            3 & a   & c&   & b
          \end{tabular}.

      ``lie getting a good result'': $p_a^3, p_b^2, p_c^1$:
          \begin{tabular}{c|c}
            1 & b \\
            2 & c \\
            3 & a
          \end{tabular}.

    \item $p_a^3$ singled out.

      This is essentially the same as the previous case.
      In the ``truth getting a bad result'' $a$ gets kicked out of 3,
      but in the ``lie getting a good result'' b and c leave a alone.
  \end{itemize}

  \item $b$ acts at the root node.

    ``truth getting a bad result'': $p_a^2, p_b^2, p_c^3$:
    \begin{tabular}{c|c|c|c|c|c}
      1 & b c &   & a &   & \\
      2 &     &   &   &   & b\\
      3 & a   & c &   & b &
    \end{tabular}.

    ``lie getting a good result'':  $p_a^1, p_b^1, p_c^3$:
    \begin{tabular}{c|c}
      1 & c  \\
      2 & a \\
      3 & b
    \end{tabular}.


  \item $c$ at the root.
  \begin{itemize}
    \item $p_c^1$ singled out.

      ``truth getting a bad result'': $p_a^1, p_b^2, p_c^1$:
          \begin{tabular}{c|c|c}
            1 &   b &   \\
            2 & c a &   \\
            3 &     & c
          \end{tabular}.

      ``lie getting a good result'': $p_a^2, p_b^2, p_c^2$:
          \begin{tabular}{c|c|c}
            1 & b c &    \\
            2 &     & c  \\
            3 & a   &
          \end{tabular}.

    \item $p_c^2$ singled out.

      ``truth getting a bad result'': $p_a^2, p_b^2, p_c^2$:
          \begin{tabular}{c|c|c}
            1 & b c &     \\
            2 &     & c  \\
            3 & a   &
          \end{tabular}.

      ``lie getting a good result'': $p_a^1, p_b^1, p_c^3$:
          \begin{tabular}{c|c}
            1 & c \\
            2 & a \\
            3 & b
          \end{tabular}.

    \item $p_c^3$ singled out.

      This is essentially the same as the previous case.
      In the ``truth getting a bad result'' $c$ gets kicked out of 1,
      but in the ``lie getting a good result'' $b$ and $c$ leave $a$ alone.
  \end{itemize}

\end{itemize}

\end{proof}

\subsection{Enumeration for four applicants and positions}
\label{sec:nIs4CasesList}

We now describe and enumerate all cyclic priority sets for four applicants
and four positions for which no restriction to three applicants and three
positions is non-OSP.
Due to the fact that (up to relabeling) 
exactly one set of cyclic priorities over three applicants is OSP, 
the collection of priority sets we need to consider is fairly small.
Our enumeration proceeds through a series of lemmas.

\begin{lemma}
  \label{lem:3by4}
  Consider a set of priorities of $n \ge 3$ positions over $3$ applicants.
  Suppose the priorities are cyclic, yet no set of $3$ positions have
  priorities over the three applicants which are equal (up to relabeling)
  to any of the priority sets in
  \autoref{fig:AllPrimitiveNonOspThreeByThree}
  (i.e. no restriction to $3$ positions is non-OSP).
  The, up to relabeling, every position's priority list is of the form $x$
  except for two, which are of the form $u$ and $v$, where we have
  \\
  \begin{tabular}{c|cccc}
    $x$ & $a$ & $b$ & $c$ \\
    $u$ & $a$ & $c$ & $b$ \\
    $v$ & $b$ & $a$ & $c$
  \end{tabular}
  \\
  That is, priorities $q$ are two-adjacent-alternating.
\end{lemma}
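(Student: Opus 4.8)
\emph{Plan.} The plan is to first determine which priority lists can appear among the $n$ positions, using the structural facts about $3\times 3$ priorities, and then to pin down the multiplicity of each. \emph{Step 1: only two applicants are ever top-ranked.} If all three applicants appeared at the top of some position, picking one such position for each would yield a restriction to three positions in which every applicant is a favorite; by \autoref{lemma:allFavorites} this restriction lies in \autoref{fig:AllPrimitiveNonOspThreeByThree}, contradicting the hypothesis. If only one applicant $t$ were ever top-ranked, the priorities would be acyclic: any cyclic witness $a'\succ_i b'\succ_i c'$ together with $c'\succ_j a'$ uses all three applicants and hence $t$, but each possible position of $t$ among $a',b',c'$ forces some applicant other than $t$ above $t$ at position $i$ or at $j$, contradicting $t$ being top-ranked there. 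Since the priorities are cyclic, exactly two applicants, say $a$ and $b$, are ever top-ranked, so each priority list is one of $x:a\succ b\succ c$ or $u:a\succ c\succ b$ (if it ranks $a$ first), or $v:b\succ a\succ c$ or $z:b\succ c\succ a$ (if it ranks $b$ first).

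\emph{Step 2: which of these four lists can coexist.} The lists $u$ and $z$ are exact reverses, so by \autoref{lemma:reversedPref} they cannot both appear. I would then carry out the finite check of which unordered pairs of lists among $\{x,u,v,z\}$ form a cyclic priority set: a transitive order on three applicants has a unique strictly decreasing triple, so two distinct lists $\{P,Q\}$ are cyclic exactly when the bottom applicant of $P$ is ranked above the top applicant of $P$ by $Q$ (or symmetrically with $P,Q$ swapped), and two positions with the same list are never cyclic. This yields the cyclic pairs $\{x,z\}$, $\{u,v\}$, and $\{u,z\}$, the last already excluded. As the whole priority set is cyclic it must contain a cyclic pair, so either (A) $z$ appears --- in which case $u$ does not, and the only remaining cyclic pair inside $\{x,v,z\}$ is $\{x,z\}$, forcing $x$ to appear too --- or (B) $z$ does not appear, and then the only cyclic pair inside $\{x,u,v\}$ is $\{u,v\}$, forcing both $u$ and $v$ to appear.

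\emph{Step 3: fixing multiplicities and recognizing the pattern.} In case (A), two positions with list $x$ together with one with list $z$ would form a cyclic restriction containing a repeated priority list, which is non-OSP by \autoref{lemma:repeatedPref}; the same applies to two $z$'s with one $x$. Hence exactly one position has list $x$, exactly one has list $z$, and the remaining $n-2\ge 1$ positions all have list $v$. Relabeling applicants by $b\mapsto a_1$, $a\mapsto a_2$, $c\mapsto a_3$ turns $v$ into $a_1\succ a_2\succ a_3$ (the list repeated $n-2$ times), $x$ into $a_2\succ a_1\succ a_3$, and $z$ into $a_1\succ a_3\succ a_2$, which is precisely the two-adjacent-alternating pattern of \autoref{def:limitedCyclic}. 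Case (B) is symmetric: \autoref{lemma:repeatedPref} applied to $\{u,u,v\}$ and to $\{u,v,v\}$ forces exactly one copy each of $u$ and $v$, and the remaining $n-2\ge1$ positions carry $x$, which is literally two-adjacent-alternating with $a_1=a$, $a_2=b$, $a_3=c$. Either way $q$ is two-adjacent-alternating up to relabeling. I expect the only real obstacle to be bookkeeping: correctly determining the (a)cyclicity of each of the six pairs of three-applicant lists, and checking the relabeling in case (A).
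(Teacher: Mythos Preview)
Your proof is correct, and it follows a somewhat different route than the paper's.

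The paper anchors on the fact that the cyclic priority set must contain a cyclic $3\times 3$ restriction, which (by the classification in \autoref{sec:nIs3CasesList}) must be, up to relabeling, the unique OSP cyclic case $\{x,u,v\}$. It then argues that any further position must also carry one of $x,u,v$: \autoref{lemma:allFavorites} rules out $c$ on top, leaving only the fourth list $w:b\succ c\succ a$, and the paper eliminates $w$ by observing that $\{u,v,w\}$ is (up to relabeling) precisely the non-OSP pattern of \autoref{sec:NonOspMildExtension}. Finally, \autoref{lemma:repeatedPref} shows $u$ and $v$ each appear once.

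You instead first pin down the four candidate lists $x,u,v,z$ via \autoref{lemma:allFavorites} together with a short acyclicity argument, then use \autoref{lemma:reversedPref} to forbid $u$ and $z$ from coexisting, classify the cyclic pairs among $\{x,u,v,z\}$, and case-split on whether $z$ appears. Your Case~(B) reproduces the paper's conclusion directly; your Case~(A) is the same conclusion under a relabeling that the paper avoids by fixing labels at the outset. The upshot is that your argument relies only on \autoref{lemma:allFavorites}, \autoref{lemma:reversedPref}, and \autoref{lemma:repeatedPref}, and never explicitly invokes the uniqueness of the cyclic OSP $3\times 3$ case or the specific pattern of \autoref{sec:NonOspMildExtension}; the price is one extra case and the relabeling check you flagged. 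Both approaches are short, and yours is a touch more self-contained.
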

\begin{proof}
  Such a priority set must, in particular, have an OSP and cyclic
  set of priorities of $3$ positions contained in it.
  By \autoref{sec:nIs3CasesList}, up to relabeling, 
  the only $3$ applicant and $3$ position priority
  set which is OSP is
  \\
  \begin{tabular}{c|cccc}
    $x$ & $a$ & $b$ & $c$ \\
    $u$ & $a$ & $c$ & $b$ \\
    $v$ & $b$ & $a$ & $c$
  \end{tabular}.
  \\
  Thus, by relabeling, we can assume that three position's priority 
  lists are $x, u, v$.
  We now consider what the priority lists of the other positions could
  possibly be.

  By \autoref{lemma:allFavorites}, other positions cannot have $c$ 
  as their top priority.
  Thus, the only possible preference other
  than those of $x, u, v$ is $w : b\succ c\succ a$.
  But then positions $u, v, w$ together form a pattern equivalent
  to that of \autoref{sec:NonOspMildExtension}, and are not OSP.

  Thus, all other positions must have priority lists which are equal to one
  of $x, u, $ or $v$.
  If the priority list $u$ or $v$ is repeated, say at position $y$, then
  \autoref{lemma:repeatedPref} implies that $u, v$, and $y$
  together form a preference list in
  \autoref{fig:AllPrimitiveNonOspThreeByThree}.
  Thus, the only possibility is repeating the priority list of $x$.

\end{proof}

\begin{lemma}
  \label{lem:4by4Cases}
  Suppose $q$ is a set of priorities of $4$ positions over $4$ applicants.
  Suppose $q$ is cyclic, yet no restriction of $q$ is equal (up to
  relabeling) to any priority set of
  \autoref{fig:AllPrimitiveNonOspThreeByThree}.
  Then, up to relabeling, $q$ must equal one of the following priority sets:
  \\
  \begin{tabular}{ccc}
    \begin{tabular}{c|cccccccc}
      1 & {\bf d} & a & b & c \\
      2 & {\bf d} & a & b & c \\
      3 & {\bf d} & a & c & b \\
      4 & {\bf d} & b & a & c
    \end{tabular}
    &
    \qquad
    \begin{tabular}{c|cccccc}
      1 & a & {\bf d} & b & c \\
      2 & {\bf d} & a & b & c \\
      3 & {\bf d} & a & c & b \\
      4 & {\bf d} & b & a & c
    \end{tabular}
    &
    \qquad
    \begin{tabular}{c|cccccc}
      1 & {\bf d} & a & b & c \\
      2 & {\bf d} & a & b & c \\
      3 & a & {\bf d} & c & b \\
      4 & {\bf d} & b & a & c
    \end{tabular}
    \\
    \\
    \begin{tabular}{c|cccccc}
      1 & a & b & c & {\bf d} \\
      2 & a & b & c & {\bf d} \\
      3 & a & c & b & {\bf d} \\
      4 & b & a & {\bf d} & c
    \end{tabular}
    &
    \qquad
    \begin{tabular}{c|cccccc}
      1 & a & b & c & {\bf d} \\
      2 & a & b & {\bf d} & c \\
      3 & a & c & b & {\bf d} \\
      4 & b & a & c & {\bf d}
    \end{tabular}
    &
    \qquad
    \begin{tabular}{c|cccccc}
      1 & a & b & c & {\bf d} \\
      2 & a & b & c & {\bf d} \\
      3 & a & c & b & {\bf d} \\
      4 & b & a & c & {\bf d}
    \end{tabular}
  \end{tabular}
\end{lemma}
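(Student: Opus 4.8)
The plan is to bootstrap off \autoref{lem:3by4}. Since $q$ is cyclic, some triple of three applicants already exhibits a cyclic pattern, and as there are only four applicants we may relabel so that this triple is $\{a,b,c\}$ and $d$ is the remaining applicant. The restriction of $q$ to $\{a,b,c\}$ is then a cyclic priority set on four positions over three applicants, none of whose three-position restrictions is non-OSP (being a restriction of $q$), so \autoref{lem:3by4} applies: up to relabeling $\{a,b,c\}$ and reindexing the positions, two positions restrict to $a\succ b\succ c$, one restricts to $a\succ c\succ b$, and one restricts to $b\succ a\succ c$. This fixes the relative order of $a,b,c$ at every position, so the only remaining freedom is, at each of the four positions, which of the four available ``slots'' applicant $d$ occupies relative to $a,b,c$.

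To cut the $4^4$ raw possibilities, I would first exploit consequences of the fixed $\{a,b,c\}$-order. Since $c$ is never the top priority in the $\{a,b,c\}$-restriction, $c$ is never the overall top, and \autoref{lemma:allFavorites} forbids all three of $\{a,b,d\}$ from being overall top at three distinct positions; hence at most two applicants are ever overall top. Because $a$ is the restricted-top at three positions and $b$ at one, this already pins $d$'s top slots in the case where $b$ survives as overall top at ``its'' position (then $d$ is on top at all of the $a$-top positions or at none), producing the ``$d$ everywhere above $\{a,b,c\}$'' and ``$d$ never above'' branches. In the remaining case ($d$ overall top at $b$'s position) I would instead use that $a\succ c$ at every position of the $\{a,b,c\}$-restriction (the unique always-dominating pair, cf.\ \autoref{lemma:notTransitive}): the restriction of $q$ to $\{a,c,d\}$ over all four positions inherits $a\succ c$ everywhere, so it is either acyclic, with the block structure of \autoref{prop:AcyclicEquiv}, or cyclic, in which case \autoref{lem:3by4} forces it to be two-adjacent-alternating; either way $d$'s slot at each position is severely constrained. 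Finally, for each surviving shape I would test the three-applicant restrictions $\{a,b,d\}$, $\{a,c,d\}$, $\{b,c,d\}$ over all four positions and over each triple of positions: discard the shape via \autoref{lemma:repeatedPref} or \autoref{lemma:reversedPref} whenever a three-position restriction has a repeated list or a reversed pair of lists, and pin $d$ further via \autoref{lem:3by4} whenever such a restriction is cyclic. What survives is exactly the six displayed priority sets, up to relabeling.

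The main obstacle is entirely organizational: the argument is a finite but delicate case analysis, and the work is in choosing the splits — number of positions with $d$ on top, then the slot of $d$ at the lone exceptional position, then which induced triple becomes cyclic — so that the auxiliary lemmas kill each dead branch in one stroke rather than spawning sub-cases. A secondary bookkeeping point is to track relabelings carefully, since the six listed sets need not all be inequivalent (e.g.\ the ``$d$ near the top'' shape and a ``$d$ near the bottom'' shape both coincide, up to relabeling, with subfigure~(e) of \autoref{fig:AllPrimitiveNonOsp}), so one must make sure each surviving configuration is accounted for by an entry on the list rather than declared ``new''; note also that the lemma only asserts necessity, so this enumeration is allowed to be an over-approximation. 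No step requires anything beyond \autoref{lem:3by4}, the four small lemmas of \autoref{sec:nIs3CasesList}, and \autoref{prop:AcyclicEquiv}.
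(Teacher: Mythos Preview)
Your approach is essentially the paper's: fix the $\{a,b,c\}$-restriction via \autoref{lem:3by4}, then determine the admissible slots for $d$ by repeatedly invoking \autoref{lemma:allFavorites}, \autoref{lemma:repeatedPref}, and \autoref{lemma:reversedPref} on well-chosen three-applicant, three-position restrictions. The only real difference is organizational. The paper's case split is on the position of $d$ in position~$4$'s list (the unique list with $b\succ a\succ c$), giving four top-level cases; inside each, \autoref{lemma:reversedPref} and \autoref{lemma:repeatedPref} directly bound where $d$ can sit in the other three lists. Your proposed split is on which applicants appear as overall top, using \autoref{lemma:allFavorites} first, and then bringing in the $\{a,c,d\}$-restriction and \autoref{lemma:notTransitive}; the paper never needs \autoref{lemma:notTransitive} or that particular triple.

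Both organizations work, but the paper's is more mechanical and less prone to bookkeeping slips. One small imprecision in your outline illustrates this: you write that the sub-case ``$b$ top at position~$4$ and $d$ top at positions $1,2,3$'' produces the ``$d$ everywhere above $\{a,b,c\}$'' configuration, but in fact that sub-case is empty (every placement of $d$ in list~$4$ is killed by \autoref{lemma:repeatedPref} or \autoref{lemma:reversedPref} on $\{a,b,d\}$), while the first three listed configurations all have $d$ on top at position~$4$ and hence live in your ``remaining case''. This does not break your plan --- you would discover it when executing the analysis --- but it shows why anchoring the split at list~$4$, as the paper does, keeps the sub-cases aligned with the surviving configurations.
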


\begin{proof}
Suppose the priorities are cyclic over applicants $a,b,c$.
By \autoref{lem:3by4}, we can assume without loss of generality that the
priorities, restricted to applicants $a, b, c$, are
\\
\begin{tabular}{c|cccccc}
  1 & a & b & c \\
  2 & a & b & c \\
  3 & a & c & b \\
  4 & b & a & c
\end{tabular}
\\
Thus, we need on only consider where applicant $d$ is inserted on the
priority list of each position.
We divide into cases based on the position of applicant $d$ on position
$4$'s list.
Throughout the proof, we make heavy use of Lemmas~\ref{lemma:allFavorites}
through~\ref{lemma:notTransitive}.
To make the proof more readable, we mark this assumption on 4's list by
(d), and when we consider a slot in a priority list where d might be
located, we write ``d?''.

\begin{itemize}
  \item $4\ | \  (d) \  b \  a \  c$. \\
    By \autoref{lemma:reversedPref}, $d$ cannot be after $b$
    in any priority list of $i\in\{1,2,3\}$ (as then we would have
    $d \succ_4 b \succ_4 a$ and
    $a \succ_i b \succ_i d$ for some $i\in\{1,2,3\}$).
    Thus $d$ can be in the following places:
    \\
    \begin{tabular}{c|cccccc}
      1 & d? & a & d? & b & c \\
      2 & d? & a & d? & b & c \\
      3 & d? & a & d? & c & d? & b \\
      4 & (d) & b & a & c
    \end{tabular}
    \\
    By \autoref{lemma:repeatedPref}, at most one position 
    $i\in\{1,2,3\}$ can have $a \succ_i d \succ_i b$ 
    (as we have $b \succ_4 a$).
    Moreover, if $i=1$ and $2$ both have $d \succ_i a \succ_i c$, then
    we cannot have $c \succ_3 d$ (by \autoref{lemma:repeatedPref}
    applied to $d, a, c$).
    Thus, up to relabeling 1 and 2, the cases are
    \\
    \begin{tabular}{c|cccccc}
      1 & {\bf d} & a & b & c \\
      2 & {\bf d} & a & b & c \\
      3 & {\bf d} & a & c & b \\
      4 & (d) & b & a & c
    \end{tabular}
    \qquad
    \begin{tabular}{c|cccccc}
      1 & a & {\bf d} & b & c \\
      2 & {\bf d} & a & b & c \\
      3 & {\bf d} & a & c & b \\
      4 & (d) & b & a & c
    \end{tabular}
    \qquad
    \begin{tabular}{c|cccccc}
      1 & {\bf d} & a & b & c \\
      2 & {\bf d} & a & b & c \\
      3 & a & {\bf d} & c & b \\
      4 & (d) & b & a & c
    \end{tabular}

    \vspace{3mm}

  \item $4\ | \ b \ (d) \  a \  c$.  \\
    By \autoref{lemma:allFavorites}, either all $i\in\{1,2,3\}$
    have $d \succ_i a$, or none of them do.
    Moreover, by \autoref{lemma:reversedPref}, we never
    have $a \succ_i d \succ_i b$
    (as $b \succ_4 d \succ_4 a$).
    Thus, the places d could be are either
    \\
    \begin{tabular}{c|cccccc}
      1 & d? & a & b & c \\
      2 & d? & a & b & c \\
      3 & d? & a & c & b \\
      4 & b & (d) & a & c
    \end{tabular}
    \qquad
    or
    \qquad
    \begin{tabular}{c|cccccc}
      1 & a & b & d? & c & d? \\
      2 & a & b & d? & c & d? \\
      3 & a & c & b & d \\
      4 & b & (d) & a & c
    \end{tabular}
    \\
    For both cases, we can use \autoref{lemma:repeatedPref} (applied
    to positions $1, 2, 4$ and applicants $d, a, b$) to see that all of
    these cases contain a priority set of
    \autoref{fig:AllPrimitiveNonOspThreeByThree}.
    Thus, there are \textbf{no} possible cases with
    $4\ | \ b \ (d) \  a \  c$.

    \vspace{3mm}

  \item $4\ | \ b \ a \ (d) \ c$.  \\
    By \autoref{lemma:reversedPref}, we never have
    $d \succ_i a \succ_i b$ for $i\in\{1,2,3\}$
    (as we have $b \succ_4 a \succ_4 d$).
    Moreover, we can't have $c \succ_3 d \succ_3 b$
    (as we have $b \succ_4 d \succ_4 c$).
    If we have $d \succ_3 c \succ_3 b$, then by \autoref{lemma:reversedPref}
    we cannot have $b \succ_i c \succ_i d$ for $i\in\{1,2\}$.
    Thus the possible cases are:
    \\
    \begin{tabular}{c|cccccc}
      1 & a & d? & b & d? & c \\
      2 & a & d? & b & d? & c \\
      3 & a & d? & c & b \\
      4 & b & a & (d) & c
    \end{tabular}
    \qquad
    \begin{tabular}{c|cccccc}
      1 & a & d? & b & d? & c & d?\\
      2 & a & d? & b & d? & c & d?\\
      3 & a & c & b & d? \\
      4 & b & a & (d) & c
    \end{tabular}
    \\
    Let's examine the first subcase.
    If either $i\in\{1,2\}$ has $a\succ_i d \succ_i b$,
    then the priorities are not OSP (by \autoref{lemma:repeatedPref}
    applied to applicants $a, d, b$ and positions $i$, 3 and 4).
    But if both $i\in\{1,2\}$ have $b \succ_i d \succ_i c$,
    then the preference is also not OSP
    by \autoref{lemma:repeatedPref}
    (applied to positions $1, 2, 3$ and applicants $b,d,c$).
    So \textbf{none} of the preferences in this subcase can be OSP.

    Now let's consider the second subcase.
    By \autoref{lemma:reversedPref}, we cannot have
    $d \succ_i b \succ_i c$  for either of $i\in\{1,2\}$ 
    (as we have $c \succ_3 b \succ_3 d$).
    Moreover, we also cannot have $b \succ_i d \succ_i c$ 
    for either $i\in\{1,2\}$
    by \autoref{lemma:repeatedPref},
    (applied to position $i, 3, 4$ and applicants $b,d,c$).
    Thus, the only possible case remaining is
    \\
    \begin{tabular}{c|cccccc}
      1 & a & b & c & {\bf d} \\
      2 & a & b & c & {\bf d} \\
      3 & a & c & b & {\bf d} \\
      4 & b & a & (d) & c
    \end{tabular}

    %
    %


    \vspace{3mm}

  \item $4\ | \ b \ a \ c \ (d)$.  \\
    By \autoref{lemma:reversedPref},
    we never have $d \succ_i a \succ_i b$ for any $i\in\{1,2,3\}$
    (as we have $b \succ_4 a \succ_4 d$),
    and moreover we can't have $d \succ_3 c \succ_3 b$
    (as we have $b \succ_4 c \succ_4 d$).
    Thus, the possible locations for $d$ are 
    \\
    \begin{tabular}{c|cccccc}
      1 & a & d? & b & d? & c & d? \\
      2 & a & d? & b & d? & c & d? \\
      3 & a & c & d? & b & d? \\
      4 & b & a & c & (d)
    \end{tabular}
    \\
    By \autoref{lemma:reversedPref},
    if we have $c \succ_3 d \succ_3 b$, then we cannot have 
    $b \succ_i d \succ_i c$ for either of $i\in\{1,2\}$,
    and if $c \succ_3 b \succ_3 d$, then we cannot 
    have $d \succ_i b \succ_i c$ for either of $i\in\{1,2\}$.
    Thus, the cases which can be OSP are:
    \\
    \begin{tabular}{c|cccccc}
      1 & a & d? & b & c & d? \\
      2 & a & d? & b & c & d? \\
      3 & a & c & d & b \\
      4 & b & a & c & (d)
    \end{tabular}
    \qquad
    \begin{tabular}{c|cccccc}
      1 & a & b & d? & c & d? \\
      2 & a & b & d? & c & d? \\
      3 & a & c & b & d \\
      4 & b & a & c & (d)
    \end{tabular}
    
    Consider the first subcase above.
    By \autoref{lemma:repeatedPref} (applied to positions $1,2,3$
    and applicants $b,d,c$), in the first subcase above 
    we cannot have $d \succ_i b \succ_i c$ for \emph{both} $i\in\{1,2\}$.
    But if $b \succ_i c \succ_i d$ for some $i\in\{1,2\}$,
    then the preference is not OSP, again by \autoref{lemma:repeatedPref}
    (applied to positions $i, 3, 4$ and applicants $b,c,d$).
    So \textbf{none} of the priorities in this subcase are OSP.

    Consider now the second subcase above.
    Again by \autoref{lemma:repeatedPref} (applied to positions $1,2,3$
    and applicants $b,d,c$), in the second subcase
    we cannot have $b \succ_i d \succ_i c$ for both $i\in\{1,2\}$.
    Thus, up to relabeling positions $1,2$, the possible OSP cases are
    \\
    \begin{tabular}{c|cccccc}
      1 & a & b & {\bf d} & c \\
      2 & a & b & c & {\bf d} \\
      3 & a & c & b & {\bf d} \\
      4 & b & a & c & (d)
    \end{tabular}
    \qquad
    \begin{tabular}{c|cccccc}
      1 & a & b & c & {\bf d} \\
      2 & a & b & c & {\bf d} \\
      3 & a & c & b & {\bf d} \\
      4 & b & a & c & (d)
    \end{tabular}

\end{itemize}


\end{proof}

%

Out of the six priority sets in the above lemma, four are limited cyclic
(and thus OSP). The other two, namely
\\
\begin{tabular}{cccc}
  \begin{tabular}{c|cccccc}
    1 & a & {\bf d} & b & c \\
    2 & {\bf d} & a & b & c \\
    3 & {\bf d} & a & c & b \\
    4 & {\bf d} & b & a & c
  \end{tabular}
  & and
  &
  $q = $
  \begin{tabular}{c|cccccc}
    1 & a & b & c & {\bf d} \\
    2 & a & b &{\bf  d} & c \\
    3 & a & c & b & {\bf d} \\
    4 & b & a & c & {\bf d}
  \end{tabular}
\end{tabular}
\\
are equivalent up to relabeling (to see this, relabel the applicants in the
first case as follows: $a \mapsto b, b \mapsto c, c\mapsto d, d\mapsto a$).
We show that this priority set is not OSP in \autoref{sec:nIs4CasesProof}.

Recall that \autoref{fig:AllPrimitiveNonOsp} consists exactly of those
priority sets from \autoref{fig:AllPrimitiveNonOspThreeByThree},
along with $q$ as above.


\begin{corollary}
  \label{lem:4by4}
  The only set of priorities of $4$ positions
  over $4$ applicants which are cyclic, yet do not contain any of the
  priority lists in \autoref{fig:AllPrimitiveNonOsp}, are equal to one of
  the following (up to relabeling):
  \\
  \begin{tabular}{ccc}
    \begin{tabular}{c|cccccc}
      1 & {\bf d} & a & b & c \\
      2 & {\bf d} & a & b & c \\
      3 & {\bf d} & a & c & b \\
      4 & {\bf d} & b & a & c
    \end{tabular}
    &
    \qquad
    \begin{tabular}{c|cccccc}
      1 & a & b & c & {\bf d} \\
      2 & a & b & c & {\bf d} \\
      3 & a & c & b & {\bf d} \\
      4 & b & a & {\bf d} & c
    \end{tabular}
    &
    \qquad
    \begin{tabular}{c|cccccc}
      1 & a & b & c & {\bf d} \\
      2 & a & b & c & {\bf d} \\
      3 & a & c & b & {\bf d} \\
      4 & b & a & c & {\bf d}
    \end{tabular}
  \end{tabular}.
\end{corollary}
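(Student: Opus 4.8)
The plan is to read \autoref{lem:4by4} off of \autoref{lem:4by4Cases} once one unpacks what it means for a set of priorities of only $4$ positions over $4$ applicants to ``contain'' a restriction lying in \autoref{fig:AllPrimitiveNonOsp}. The first step is the elementary observation (already recorded in the text preceding the corollary) that \autoref{fig:AllPrimitiveNonOsp} is exactly the union of the non-OSP $3\times 3$ priority sets of \autoref{fig:AllPrimitiveNonOspThreeByThree} with the single $4\times 4$ set $q$. Now, a restriction of a set of priorities of $4$ positions over $4$ applicants always has some equal number $m \le 4$ of positions and applicants, and can equal (up to relabeling) a priority set of $\ell$ positions over $\ell$ applicants only when $m = \ell$. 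Applying this with $\ell = 3$ and with $\ell = 4$, such a set $Q$ has a restriction equal to a set of \autoref{fig:AllPrimitiveNonOsp} if and only if either $Q$ has a restriction to $3$ positions and $3$ applicants equal to one of the non-OSP $3\times 3$ sets of \autoref{fig:AllPrimitiveNonOspThreeByThree}, or $Q$ is itself equal to $q$ (up to relabeling), the latter because the only restriction of $Q$ that could possibly match the $4\times 4$ set $q$ is $Q$ itself.

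The second step is to invoke \autoref{lem:4by4Cases}: any cyclic set of priorities of $4$ positions over $4$ applicants with no $3\times 3$ restriction in \autoref{fig:AllPrimitiveNonOspThreeByThree} is, up to relabeling, one of the six sets displayed there. It then suffices to discard those of the six which are equal to $q$ up to relabeling. As already noted in the paragraph just before \autoref{lem:4by4}, exactly two of the six --- the second set in reading order, and the fifth set, which is $q$ itself --- coincide up to relabeling and equal $q$; both are discarded. Of the four that remain, the third set in reading order is equal up to relabeling to the fourth: applying the applicant relabeling $d \mapsto a,\ a \mapsto b,\ b \mapsto c,\ c \mapsto d$ to the third set and then permuting its positions yields the fourth set. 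Hence the four survivors reduce, up to relabeling, to the first, fourth, and sixth sets of \autoref{lem:4by4Cases}, which are precisely the three priority sets listed in \autoref{lem:4by4}.

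The only real work, and hence the ``main obstacle'', is the routine bookkeeping in the second step: verifying the relabeling that identifies the third set with the fourth, and verifying that none of the three surviving sets is equal to $q$ up to relabeling (so that they genuinely pass the discarding step and the list is tight). The latter is handled by cheap relabeling-invariant counts: in the first surviving set one applicant is ranked first by all four positions, while in $q$ and in the other two survivors no applicant is ranked first by more than three positions; and $q$ has four pairwise-distinct priority lists whereas each of the remaining two survivors has a repeated priority list. One can also record, for completeness, that all three listed sets do satisfy the corollary's hypothesis --- each is cyclic (it contains the two-adjacent-alternating $3\times 3$ pattern on $\{a,b,c\}$), each has no non-OSP $3\times 3$ restriction (being among the six sets of \autoref{lem:4by4Cases}), and none equals $q$. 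No further ideas are needed: conceptually the corollary is just a relabeling-count on top of \autoref{lem:4by4Cases}.
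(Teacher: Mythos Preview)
Your proof is correct and follows essentially the same approach as the paper: invoke \autoref{lem:4by4Cases} to obtain six candidate sets, discard the two that equal $q$ up to relabeling, and then observe that the third set in the list of six coincides (up to the relabeling $d\mapsto a,\ a\mapsto b,\ b\mapsto c,\ c\mapsto d$ and a position swap) with the fourth, leaving exactly the three sets stated. Your additional bookkeeping---checking via relabeling-invariants that the three survivors are genuinely distinct from $q$ and from one another---is a nice touch that the paper omits, but the core argument is the same.
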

\begin{proof}
  By \autoref{lem:4by4Cases} and the fact that $q$ (as defined above)
  is in \autoref{fig:AllPrimitiveNonOsp},
  there are only $4$ possible cases.
  Three of these are given, and the final is
  \\
  \begin{tabular}{c|cccccc}
    1 & d & a & b & c \\
    2 & d & a & b & c \\
    3 & a & d & c & b \\
    4 & d & b & a & c
  \end{tabular}
  \\
  which is equivalent up to relabeling to the middle priority set in the
  statement of the corollary.
\end{proof}

\subsection{The case with four applicants and positions}
\label{sec:nIs4CasesProof}

We now provide the single needed direct proof that a set of 
priorities with $4$ applicants and $4$ positions is not OSP.
This will complete the proof that all priority sets displayed in
\autoref{fig:AllPrimitiveNonOsp} are non-OSP.

\begin{claim}
  \label{claim:nIsForProof}
  Priorities
  \begin{align*}
    1: &\ a \succ b \succ c \succ d  \\
    2: &\ a \succ b \succ d \succ c  \\
    3: &\ a \succ c \succ b \succ d  \\
    4: &\ b \succ a \succ c \succ d 
  \end{align*}
  are not OSP.
\end{claim}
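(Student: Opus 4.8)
The plan is to apply \autoref{lem:keyTechnical}, exactly as in the three-applicant arguments of \autoref{sec:NonOspLastCase} and \autoref{sec:NonOspMildExtension}. Concretely, I would exhibit a subdomain $E'$ in which each of the four applicants has at most three preference types, at least one has more than one, and the truth/lie conditions of \autoref{lem:keyTechnical} are satisfied for every applicant with two or three types. Since an applicant with a single type cannot be queried nontrivially, it suffices to force a contradiction for whichever multi-type applicant is first to act at the root of a hypothetical OSP mechanism; so I would keep $d$'s type count small and give the cyclic triple $a,b,c$ two or three types each. The guiding picture is the relabeling noted after \autoref{lem:4by4Cases}: these priorities are, up to renaming, the cyclic-but-OSP pattern on $\{a,b,c\}$ (here realized at positions $1,3,4$) together with one extra applicant who has top priority almost everywhere. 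The extra applicant and position are precisely what should tip the otherwise-OSP three-applicant cyclic pattern over the edge — in particular position $2$, at which $d \succ c$ whereas $d$ sits below $c$ at positions $1,3,4$, is the feature a mechanism cannot cleanly ``trade around'' once a fourth position is present.

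For the verification, I would, for each applicant $e \in \{a,b,c\}$ that could act at the root (and, for any three-type applicant, for each of its types the root query might single out), write down two preference profiles in $E'$: one transcribing a run of applicant-proposing deferred acceptance in which $e$ reports its true, singled-out type and ends up matched to a position it disprefers (``truth getting a bad result''), and one in which $e$ reports a different type and, under suitable opponent reports, obtains a strictly better position as measured by the true type (``lie getting a good result''). As elsewhere in the paper, each run is displayed as a small table tracking proposals step by step, which simultaneously certifies the outcome and records exactly which priority comparisons are used. The critical comparisons should be $b \succ_4 a$, $a \succ_3 c \succ_3 b$ (the cyclic obstruction), together with $d \succ_2 c$, so the runs should look like mild elaborations of those in \autoref{sec:NonOspLastCase}, with $d$ inserted so that it credibly contests position $2$ against $a$ and $b$.

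I expect the one real obstacle to be pinning down the subdomain: the constraints interact, since fixing a type for one applicant restricts which opponent profiles are available when checking another, and the same subdomain must simultaneously block $a$, $b$, and $c$ (and $d$ if it is given two types) at the root. My approach would be to start from the subdomain used in the closely related three-applicant case, lift each applicant's list to rank all four positions, insert one or two preference types for $d$ that make it a genuine competitor for position $2$, and then adjust individual entries until every required truth/lie pair is realized. Given that \autoref{lem:4by4Cases} already isolates this as the unique remaining non-OSP candidate among four-applicant priorities, and that \autoref{lem:keyTechnical} suffices for every non-OSP proof encountered so far, I am confident such a subdomain exists; finding a clean one using as few types as possible, and organizing the case analysis along the lines of \autoref{sec:intuition}, is the work.
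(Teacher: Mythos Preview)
Your proposal is correct and takes essentially the same approach as the paper: apply \autoref{lem:keyTechnical} to a small subdomain and verify a truth/lie pair for each applicant who could act at the root. The paper's actual subdomain gives every applicant, including $d$, exactly two types (so the $d$-at-root case is checked too rather than suppressed by fixing $d$'s type), and $d$'s types target positions $1$ and $2$ rather than only $2$; but this is a detail of the search, not a difference in method.
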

\begin{proof}

We apply \autoref{lem:keyTechnical} to the
the subdomain in which applicants may have the following
preferences:
\begin{align*}
  p_a^1 = 4,2 && p_b^1 = 3,1 && p_c^1 = 2,3 && p_d^1 = 1,2 \\
  p_a^2 = 4,3 && p_b^2 = 3,4 && p_c^2 = 3,1 && p_d^2 = 2,1 \\
\end{align*}
For simplicity, we do not specify the full preference lists here, only
those positions which are relevant to the argument (filling in the
remaining positions in any way will leave the argument unchanged).

\begin{itemize}
  \item $a$ acts at the root node.

    ``truth getting a bad result'': $p_a^1, p_b^2, p_c^2, p_d^1$:
    \begin{tabular}{c|c|c|c}
      1 & d   &   &     \\
      2 &     &   & a   \\
      3 & b c &   &     \\
      4 & a   & b &    
    \end{tabular}.

    ``lie getting a good result'':  $p_a^2, p_b^2, p_c^1, p_d^1$:
    \begin{tabular}{c|c}
      1 & d    \\
      2 & c    \\
      3 & b    \\
      4 & a   
    \end{tabular}.

  \item $b$ acts at the root node.

    ``truth getting a bad result'': $p_a^1, p_b^1, p_c^2, p_d^2 $:
    \begin{tabular}{c|c|c}
      1 &     & b   \\
      2 & d   &     \\
      3 & b c &     \\
      4 & a   &    
    \end{tabular}.

    ``lie getting a good result'':  $p_a^1, p_b^2, p_c^1, p_d^2 $:
    \begin{tabular}{c|c|c}
      1 &     & d   \\
      2 & d c &     \\
      3 & b   &     \\
      4 & a   &    
    \end{tabular}.

  \item $c$ acts at the root node.

    ``truth getting a bad result'': $p_a^2, p_b^2, p_c^2, p_d^2$:
    \begin{tabular}{c|c|c|c|c}
      1 &     &   &   & c   \\
      2 & d   &   &   &     \\
      3 & b c &   & a &     \\
      4 & a   & b &   &    
    \end{tabular}.

    ``lie getting a good result'': $p_a^2, p_b^1, p_c^1, p_d^2$:
    \begin{tabular}{c|c|c|c}
      1 &     &   & b    \\
      2 & d c &   &      \\
      3 & b   & c &      \\
      4 & a   &   &     
    \end{tabular}.

  \item $d$ acts at the root node.

    ``truth getting a bad result'': $p_a^1, p_b^1, p_c^2, p_d^1$:
    \begin{tabular}{c|c|c|c}
      1 & d   & b &    \\
      2 &     &   & d  \\
      3 & b c &   &    \\
      4 & a   &   &  
    \end{tabular}.

    ``lie getting a good result'': $p_a^1, p_b^2, p_c^2, p_d^2$:
    \begin{tabular}{c|c|c|c|c}
      1 &     &   &   & d \\
      2 & d   &   & a &   \\
      3 & b c &   &   &   \\
      4 & a   & b &   &
    \end{tabular}.

\end{itemize}

\end{proof}

\subsection{Proof of main result}
\label{sec:nAtLeastFourFinalProof}

We are now ready to complete our proof.
Recall that \autoref{fig:AllPrimitiveNonOsp} contains exactly those
preference sets of \autoref{fig:AllPrimitiveNonOspThreeByThree} along with 
that of \autoref{sec:nIs4CasesProof}.
Intuitively, the proof starts from \autoref{lem:3by4},
then uses the fact that the priority set of \autoref{sec:nIs4CasesProof}
is the \emph{only} nontrivial way to add \emph{one} more applicants to a cyclic and OSP
priority list. The new applicant must be placed in such a particular way
that the priority list of \autoref{sec:nIs4CasesProof} is also 
the only way to add \emph{more} applicants.
The crux of the formal proof is to show that the two positions $u$ and $v$
at which the priority list ``deviates from the norm'' must be consistent
with each further applicant added.

\begin{theorem*}[4.2]
  Consider any priority set $q$.
  If no restriction of $q$ is equal to any of the priority sets of
  \autoref{fig:AllPrimitiveNonOsp}, then $q$ is limited cyclic.
\end{theorem*}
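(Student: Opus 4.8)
The plan is to induct on the number $n$ of applicants (equivalently, positions). If $q$ is acyclic, then $q$ is limited cyclic by \autoref{prop:AcyclicEquiv} (take the ordered partition into blocks of size at most $2$), so assume $q$ is cyclic. Then $q$ has a cyclic triple $\{a,b,c\}$, and since every restriction of $q|_{\{a,b,c\}}$ is a restriction of $q$, none of them is a priority set of \autoref{fig:AllPrimitiveNonOsp}; hence \autoref{lem:3by4} applies, and after relabeling the three applicants the restriction of $q$ to $\{a,b,c\}$ over \emph{all} $n$ positions is two-adjacent-alternating: every position ranks $a \succ b \succ c$, except one distinguished position $u^{*}$ (ranking $a \succ c \succ b$) and one distinguished position $v^{*}$ (ranking $b \succ a \succ c$). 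This fixes the ``norm'' and the two deviating positions for the block we are about to build.

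Next I would analyze, one at a time, how each remaining applicant $d \notin \{a,b,c\}$ sits relative to $\{a,b,c\}$ and relative to $u^{*}, v^{*}$. For every such $d$ and every choice of two further positions $y, z$, the restriction of $q$ to $\{a,b,c,d\}$ and $\{u^{*}, v^{*}, y, z\}$ is not a figure set, so by \autoref{lem:4by4Cases} and \autoref{lem:4by4} (whose only genuinely new obstruction is the four-by-four set of \autoref{claim:nIsForProof}), together with the three-applicant lemmas \autoref{lemma:allFavorites}--\autoref{lemma:notTransitive}, the position of $d$ is rigidly constrained: either $d$ is ranked above all of $\{a,b,c\}$ at every position, or $d$ is ranked below all of $\{a,b,c\}$ at every position, or $d$ is woven into the pattern in the unique way that extends it to a larger two-adjacent-alternating block, in which case $u^{*}$ and $v^{*}$ must still be the two (and only two) deviating positions, with the deletion observation used in the proof of \autoref{thrm:Construction} guaranteeing that the extended pattern is again exactly the alternating one of \autoref{def:limitedCyclic}. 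Partitioning the applicants into $A$ (above the block everywhere), the enlarged two-adjacent-alternating block $\mathcal{B} \supseteq \{a,b,c\}$, and $C$ (below the block everywhere), every position ranks $A \succ \mathcal{B} \succ C$; the priority sets $q|_A$ and $q|_C$ inherit ``no restriction is a figure set'' and have strictly fewer applicants, so by the inductive hypothesis they are limited cyclic, and concatenating their ordered partitions around $\mathcal{B}$ exhibits $q$ as limited cyclic.

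The main obstacle is the rigidity step: showing that the two deviating positions $u^{*}$ and $v^{*}$ stay \emph{globally} consistent — the same pair of positions carries every deviation inside $\mathcal{B}$, no third position ever deviates, and no applicant lands ``half in and half out'' of the block — since any violation produces a four-applicant, four-position restriction equal to \autoref{claim:nIsForProof} (or a three-by-three figure set). This is delicate precisely when several applicants are simultaneously being woven into $\mathcal{B}$: one must track the order of each new applicant against $a, b, c$ \emph{and} against the previously woven-in applicants, and verify (by a sub-induction on $|\mathcal{B}|$, based at \autoref{lem:3by4} and driven by the four-applicant enumeration) that the only surviving configuration is the adjacent-alternating one, with all other insertions ruled out by exhibiting an explicit non-OSP sub-pattern.
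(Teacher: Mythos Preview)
Your approach is correct and close in spirit to the paper's, but you organize the argument differently in a way that makes your rigidity step slightly heavier than necessary. The paper constructs the ordered partition $S_1,\ldots,S_L$ \emph{upfront}: fix any position $i^*$ and take the finest partition compatible with $\succ_{i^*}$ such that whenever $a \succ_{i^*} b$ but $b \succ_j a$ for some $j$, the pair lies in the same block. This immediately delivers the between-block ordering (every position ranks all of $S_k$ above all of $S_\ell$ for $k<\ell$), so there is no need to separately carve out your sets $A$ and $C$ or to invoke an outer induction on them; the problem reduces at once to showing each block with $|S_i|\ge 3$ is two-adjacent-alternating.

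For that inner problem, the paper and you use the same ingredients (\autoref{lem:3by4} and \autoref{lem:4by4}), but in opposite directions. You build $\mathcal{B}$ bottom-up, weaving new applicants in and tracking each against all previously woven-in ones. The paper instead peels top-down: it identifies $a_1$ (top at every position except $v$), uses \autoref{lem:4by4} on $\{a_1,a_2,a_3,a_4\}$ over $\{u,v,y,z\}$ for arbitrary $y,z$ to pin down that $u$ and $v$ are globally the two deviating positions, then \emph{deletes} $a_1$ and applies the inductive hypothesis to $S_i\setminus\{a_1\}$, which remains a single block since $a_1$ was swapped only with $a_2$. The peeling direction is cleaner precisely because it sidesteps the combinatorics you flag as the main obstacle: you never have to compare a new applicant against several previously inserted ones, only against the fixed triple and the already-established $u,v$.
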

\begin{proof}

  Let $q$ satisfy the hypothesis of the theorem.
  Fix any position $i^*$ and construct an ordered partition
  $S_1,S_2,\ldots,S_L$ of applicants as follows:
  take the priority list $\succ_{i^*}$ and iteratively group together any
  applicants $a, b$ with $a \succ_{i^*} b$ such that there exists a 
  position $j$ where $b \succ_j a$.
  That is, $\{S_1, \ldots, S_L\}$ is the finest possible partition of
  applicants such that, 1) for all $a\in S_k, b\in S_\ell$ with 
  $k < \ell$, we have $a \succ_{i^*} b$,
  and 2) if $a \succ_{i^*} b$ and $b \succ_j a$ for some
  $j \ne i^*$, then $a, b$ are in the same set $S_i$.
  Observe that for any $a \in S_k, b \in S_\ell$ with $k < \ell$,
  we have $a \succ_j b$ for \emph{all} positions $j$ (this follows because
  $a \succ_{i^*} b$ by the way we construct the partition, so if we ever
  had $b \succ_j a$ for some other position $j$, then $a, b$ should be in
  the same set $S_i$).
  Thus, to prove our theorem, we just need to show that for each $S_i$ with
  $|S_i| \ge 3$, the priorities restricted to $S_i$ are
  two-adjacent-alternating as in \autoref{def:limitedCyclic}.

  To prove this, fix some $i$ and let $|S_i|=K\ge 3$.
  We use induction on $K$.
  Let $q'$ be $q$, with each priority list restricted to applicants
  in $S_i$. 
  By \autoref{prop:AcyclicEquiv}, $q'$ is cyclic.
  Thus, if $K=3$, then \autoref{lem:3by4} proves the base case.

  Otherwise, let $|S_i|=K\ge 4$. 
  Assume by induction that for any such\footnote{
    Formally, the inductive hypothesis is that if $q'$ is any priority set
    over $K$ applicants such that 1) no restriction of $q'$ is equal to any
    of the priority sets in \autoref{fig:AllPrimitiveNonOsp}, and
    2) the partition constructed on $q'$ as in the first paragraph of 
    this theorem contains only a single set, 
    then $q'$ is two-adjacent-alternating.
  } $S_i'$ with $|S_i'|\le K-1$, the priorities restricted to $S_i'$ are 
  two-adjacent-alternating.
  Let $q'$ be $q$, with each priority list restricted to applicants
  in $S_i$. 
  By \autoref{lemma:allFavorites}, there can be at most two applicants
  $a_1, a_2$ such that some positions gives $a_1, a_2$ top priority in $q'$.
  Moreover, there must exist some $a_3$ such that $q'$ restricted to
  applicants $\{a_1,a_2,a_3\}$ is cyclic (by \autoref{prop:AcyclicEquiv}, 
  if this were not the case, $S_i$ would equal $\{a_1,a_2\}$ ).
  By \autoref{lem:3by4}, each priority list in $q'$, when restricted to
  $\{a_1,a_2,a_3\}$, must be equal to $x_0, u_0, v_0$, where 
  $u_0$ and $v_0$ each appear exactly once (at positions $u$
  and $v$, respectively), and we have
  \\
  \begin{tabular}{c|cccc}
    $x_0$ & $a_1$ & $a_2$ & $a_3$ \\
    $u_0$ & $a_1$ & $a_3$ & $a_2$ \\
    $v_0$ & $a_2$ & $a_1$ & $a_3$
  \end{tabular}

  As $|S_i|\ge 4$, there must exist another applicant $a_4$ such that $a_4$
  appears before one of $a_1, a_2,$ or $a_3$ on some position's priority list.
  Let $y$ be any position which ranks $a_4$ above at least one of $a_1,a_2,$
  or $a_3$.
  Consider any restriction $q''$ of $q'$ to any four positions including 
  $y, u, v$.
  This forms a cyclic, OSP priority set of four positions for four
  applicants, and thus \autoref{lem:4by4} applies.
  The only case from \autoref{lem:4by4} in which $d$ is sometimes (but not
  always) ranked above one of $a,b,c$ is the middle case.
  That is, if we let
  \\
  \begin{tabular}{c|cccc}
    $x_1$ & $a_1$ & $a_2$ & $a_3$ & $a_4$ \\
    $u_1$ & $a_1$ & $a_3$ & $a_2$ & $a_4$ \\
    $v_1$ & $a_2$ & $a_1$ & $a_4$ & $a_3$
  \end{tabular},
  \\
  then $q''$ restricted to $\{a_1,a_2,a_3,a_4\}$
  must contain $x_1$ repeated twice, and each of $u_1, v_1$ exactly once.
  In particular, we must have $y=v$, and $u$ must have priority list 
  $u_1$ in $q''$ and $v$ must have $v_1$.
  This argument applies when we consider any positions $i, i'$
  together with $u, v$, and thus we can see that \emph{all}
  positions other than $u$ and $v$
  must have priority list $x_1$ when restricted to $\{a_1,a_2,a_3,a_4\}$.
  This argument also applies to any applicant $a_4$ who is ever ranked
  above any of $a_1, a_2,$ or $a_3$, so
  in particular note that $a_1$ has higher priority at every position
  applicant in $q'$ other than $a_2$ (at position $v$).


  Now, consider removing $a_1$ from each position's priority list.
  What remains is still a cyclic set of priority lists with at least
  three applicants. Because $a_1$ was only ever ranked below $a_2$, and no
  other applicants, the applicants in $S_i\setminus \{a_1\}$ 
  cannot be partitioned further and still satisfying the assumptions on the
  partition $(S_i)_i$.
  Thus, by induction, priorities are
  two-adjacent-alternating when restricted to $S_i\setminus \{a\}$.
  We already know that, restricted to $\{a_2,a_3,a_4\}$, priorities $q'$ 
  are of the form
  \\
  \begin{tabular}{c|cccc}
    $x_2$ & $a_2$ & $a_3$ & $a_4$ \\
    $u_2$ & $a_3$ & $a_2$ & $a_4$ \\
    $v_2$ & $a_2$ & $a_4$ & $a_3$
  \end{tabular},
  \\
  where $u_2$ is the priority list of $u$ and $v_2$ is the priority list of $v$
  (and all other positions have priority lists $x_2$).
  Thus, the only way for priorities to be two-adjacent-alternating
  when restricted to $S_i \setminus\{a_1\}$ is if we have
  \\
  \begin{tabular}{c|ccccccccc}
       $x_3$ & $a_2$ & $a_3$ & $a_4$ & $a_5$ & $a_6$ & $a_7$ \ldots
    \\ $u_3$ & $a_3$ & $a_2$ & $a_5$ & $a_4$ & $a_7$ & $a_6$ \ldots
    \\ $v_3$ & $a_2$ & $a_4$ & $a_3$ & $a_6$ & $a_5$ & \ldots
  \end{tabular}.
  \\
  Moreover, it must be the case that
  $u$ has priority $u_3$, $v$ has priority $v_3$, and all others have
  priority $x_3$.
  Thus, with $a_1$ considered as well, priority lists must be of the form
  \\
  \begin{tabular}{c|ccccccccc}
       $x_4$ & $a_1$ & $a_2$ & $a_3$ & $a_4$ & $a_5$ & $a_6$ & $a_7$ \ldots
    \\ $u_4$ & $a_1$ & $a_3$ & $a_2$ & $a_5$ & $a_4$ & $a_7$ & $a_6$ \ldots
    \\ $v_4$ & $a_2$ & $a_1$ & $a_4$ & $a_3$ & $a_6$ & $a_5$ & \ldots
  \end{tabular}.
  \\
  That is, $q'$ is two-adjacent-alternating.

  By induction, this proves that $q'$ is two-adjacent alternating, and thus
  that $q$ is limited cyclic.
\end{proof}

\section{Extension to unbalanced matching markets}
\label{sec:Extensions}


In this appendix, we briefly discuss the possibility of extending our
classification to a matching environment with a different number of
applicants and positions, i.e. ``unbalanced'' environments.
We consider the environment with ``outside options'', i.e. in
which applicants may mark positions as unacceptable (and chose to go
unmatched instead of being matched to unacceptable positions)\footnote{
  It is possible to consider an environment with imbalance, yet no outside
  option. In this environment, applicants will always rank every position
  to try and avoid being unmatched.
  Such an environment leads to somewhat pathological matching rules.
  For example,
  consider two positions $1,2$ and three applicants $a,b,c$, with priorities
  $1: a \succ c \succ b$ and $2: b \succ c \succ a$.
  In the environment with no outside options, there is a unique stable
  matching $\{(1,a),(2,b)\}$
  with these priorities \emph{regardless of the preferences of the
  three applicants}.
  On a more technical side, our crucial \autoref{lem:RestrictedPriorities}
  does not hold in unbalanced environments without outside options.
  Moreover, it is our view that if the mechanism designer intends to leave
  certain agents unmatched, they should model the possibility that the
  agent actually wishes to go unmatched.
}.

It is possible for non-limited cyclic priorities to be OSP in this
environment. Specifically, consider the following priorities, for which
\autoref{fig:ThreeTraderInconsistencies} describes an OSP mechanism:
\\
$q =$
\begin{tabular}{c|cccccc}
  1 & a & b & (d & c) \\
  2 & a & (c & b) & d \\
  3 & (b & a) & c & d
\end{tabular}
\\
Interestingly, in this OSP mechanism four agents can be active at a time
(in the rightmost node of the first row, where $c$ acts).
However, the mechanism still
interacts with each agent at most twice.

This counterexample can be extended only in very limited ways
-- all counterexamples must have exactly three positions.
To see this, first observe that \autoref{lem:RestrictedPriorities} can be
adapted to unbalanced environments with outside options (by having
applicants outside the restriction submit empty preference lists).
Whenever there are four or more positions and three or more applicants, 
\autoref{lem:3by4} and/or the proof of \autoref{thrm:patternMatchingMainThrm}
apply and show that OSP priorities must be limited cyclic.

\begin{figure}
\tikzset{
  solid node/.style={circle,draw,inner sep=1.5,fill=black},
  decision node/.style={regular polygon, regular polygon sides=6,draw,inner sep=1.5},
  hollow node/.style={circle,draw,inner sep=1.5},
  square node/.style={draw,inner sep=2.5},
  tri node/.style={regular polygon, regular polygon sides=3,draw,inner sep=1.5},
}
\begin{center}
\begin{tikzpicture}[scale=1.0,font=\footnotesize]
  \tikzstyle{level 1}=[level distance=15mm,sibling distance=12mm]
  \tikzstyle{level 2}=[level distance=15mm,sibling distance=5mm]
  \tikzstyle{level 3}=[level distance=15mm,sibling distance=7mm]
  \node(a1)[decision node,label=above:{}]{(a)}
    child{node(a11)[square node, label=below:{
      }]{}
      edge from parent node[left,yshift=-5,xshift=-2]{$\mathtt{C}(1)$}
    }
    child{node(a12)[square node, label=below:{
      }]{}
      edge from parent node[left,yshift=-5,xshift=2]{$\mathtt{C}(2)$}
    }
    child{node(a13)[tri node, label=below:{
      }]{}
      edge from parent node[left,yshift=-5,xshift=5]{$\mathtt{C}(\emptyset)$}
    };
  \node(b1)[decision node,label=above:{}, xshift=100]{(b)}
    child{node(b11)[square node, label=below:{
        }]{}
      edge from parent node[left,yshift=-5,xshift=-2]{$\mathtt{C}(1)$}
    }
    child{node(b12)[square node, label=below:{
      }]{}
      edge from parent node[left,yshift=-5,xshift=2]{$\mathtt{C}(3)$}
    }
    child{node(b13)[square node, label=below:{
      }]{}
      edge from parent node[left,yshift=-5,xshift=5]{$\mathtt{C}(\emptyset)$}
    };
  \node(d1)[decision node,label=above:{$\{1\}$ is possible}, xshift=180]{(d)}
    child{node(c11)[decision node, label=below:{
      }]{(i)}
      edge from parent node[left,xshift=-3]{$\mathtt{C}(\emptyset)$}
    };
  \node(c1)[decision node,label=above:{$\{2\}$ is possible}, xshift=260]{(c)}
    child{node(c11)[hollow node, label=below:{
        }]{}
      edge from parent node[left,xshift=-3]{$\mathtt{C}(\emptyset)$}
    }
    child{node(b23)[decision node, label=below:{
      }]{(ii)}
      edge from parent node[right,xshift=3]{$2\ldots$}
    };

  \draw[->] (a1) -- (b1) 
    node[midway, above]{$3\dots$};
  \draw[->] (b1) -- (d1) 
    node[midway, above]{$2\dots$};
  \draw[->] (d1) -- (c1) 
    node[midway, above]{$1\dots$};
\end{tikzpicture}

\begin{tikzpicture}[scale=1.0,font=\footnotesize]
  \tikzstyle{level 1}=[level distance=15mm,sibling distance=12mm]
  \tikzstyle{level 2}=[level distance=15mm,sibling distance=5mm]
  \tikzstyle{level 3}=[level distance=15mm,sibling distance=7mm]
  \node(ii)[decision node,label=right:{:}, xshift=30]{(ii)};
  \node(b2)[decision node,label=above:{$\{1,3\}$ is possible}, xshift=70]{(b)}
    child{node(b21)[hollow node, label=below:{
      }]{}
      edge from parent node[left,yshift=-5,xshift=-2]{$\mathtt{C}(1)$}
    }
    child{node(b22)[hollow node, label=below:{
      }]{}
      edge from parent node[left,yshift=-5,xshift=4]{$\mathtt{C}(\emptyset)$}
    };
  \node(a2)[decision node,label=above:{}, xshift=150]{(a)}
    child{node(a21)[hollow node, label=below:{
        }]{}
      edge from parent node[left,yshift=-5,xshift=-2]{$\mathtt{C}(1)$}
    }
    child{node(a22)[hollow node, label=below:{
        }]{}
      edge from parent node[left,yshift=-5,xshift=2]{$\mathtt{C}(2)$}
    }
    child{node(a23)[hollow node, label=below:{
        }]{}
      edge from parent node[left,yshift=-5,xshift=4]{$\mathtt{C}(\emptyset)$}
    };

  \draw[->] (b2) -- (a2) 
  node[midway, above]{$\mathtt{C}(3)$};
\end{tikzpicture}
\end{center}
\caption{
  An OSP set of priorities which is not limited cyclic (in an unbalanced
  matching environment).
  In the square subnodes, remaining priorities are acyclic;
  in the triangular subnodes, the remaining priorities are limited cyclic;
  and in the circular subnodes, the match is fully determined.
  At node $(i)$, applicant $d$ drops out, and the mechanism proceeds in the
  same way as in the balanced case in \autoref{fig:ThreeTraderAshlagiG}.
  The mechanism below node $(ii)$ is continued in the lower part of the figure.
}
\label{fig:ThreeTraderInconsistencies}
\end{figure}

\end{document}